\newcommand{\ketbra}[2]{\ket{#1}\!\bra{#2}}
\newcommand{\tr}{\mathrm{Tr}}
\newtheorem{theorem}{Theorem}
\newtheorem{lemma}{Lemma}
\newtheorem{corollary}{Corollary}
\newtheorem{definition}{Definition}
\newtheorem{remark}{Remark}
\begin{document}
\title{Universal classical-quantum channel resolvability and private channel coding}
\author{Takaya Matsuura}
\email{takayamatsuura@gmail.com}
\affiliation{RIKEN Center for Quantum Computing (RQC), Hirosawa 2-1, Wako, Saitama 351-0198, Japan} 
\author{Masahito Hayashi}
\affiliation{School of Data Science, The Chinese University of Hong Kong,
Shenzhen, Longgang District, Shenzhen, 518172, China}
\affiliation{International Quantum Academy, Futian District, Shenzhen 518048, China}
\affiliation{Graduate School of Mathematics, Nagoya University,
Furo-cho, Chikusa-ku, Nagoya, 464-8602, Japan}
\author{Min-Hsiu Hsieh}
\affiliation{Hon Hai Research Institute, Taipei, Taiwan}

\begin{abstract}
    We address the problem of constructing fully universal private channel coding protocols for classical–quantum (c-q) channels. Previous work constructed universal decoding strategies, but the encoder relied on random coding, which prevents fully universal code construction. 
    In this work, we resolve this gap by identifying an explicit structural property of codebooks---namely, the spectral expansion of an associated Schreier graph---that guarantees universal channel resolvability. Our analysis reveals how this property can be related to channel resolvability through the theory of induced representation. 
    Our main technical result shows that when the transition matrix of a graph associated with a codebook has a large spectral gap, the channel output induced by uniformly sampled codewords is asymptotically indistinguishable from the target output distribution, independently of the channel. This establishes the first deterministic, channel-independent construction of resolvability codebooks.
    
    Building on this, we construct a fully universal private channel coding protocol by combining it with universal c-q channel coding based on the Schur-Weyl duality. With appropriate modifications to the requirements on codebooks of universal c-q channel coding, we show that our fully universal private channel coding achieves the known optimal rate. This work thus sheds new light on the expander property of a graph in the context of secure communication.
\end{abstract}
\maketitle

\section{Introduction}
Universal coding~\cite{Csiszar2004} aims to construct encoding and decoding strategies that do not depend on the detailed description of information sources or channels, but only on general properties such as the assumption of independent and identically distributed (i.i.d.) sources/channels and the size of alphabets. Even though universal encoders/decoders are not tailored to specific information sources/channels, they often achieve the same asymptotic performance as the optimal ones; they are characterized by the same information-theoretic quantities. Thus, universal coding not only has practical importance but also reveals what statistical properties are essential for a given information-theoretic task. Universal coding has been studied in quantum settings as well, and several protocols have been developed~\cite{Jozsa1998, Hayashi2002Quantum, Hayashi2009, Datta2010, Hayashi2010, Hayashi2022, Matsuura2025}.

For the case of universal classical–quantum (c–q) channel coding, Ref.~\cite{Hayashi2009} established a decoder that is completely independent of the channel. Building on this, Ref.~\cite{Datta2010} proposed a universal private channel coding~\cite{Devetak2005}, where a sender uses a c-q channel ${\cal X}\to {\cal D}({\cal H}_B\otimes {\cal H}_E)$ so that a receiver $B$ can reliably decode a classical message while an eavesdropper $E$ learns essentially nothing. 
However, their scheme relied on a random-coding argument for the encoder.  While this guarantees the existence of a good encoder, the actual choice of such an encoder may depend on the channel.  As a result, unless a sender and a receiver share common randomness, their protocol is not fully universal.  More fundamentally, the need for averaging in Ref.~\cite{Datta2010} arises from its reliance on the quantum soft-covering lemma~\cite{Ahlswede2002, Datta2010, Cheng2022error, Sen2025}. The quantum soft-covering lemma states that if a randomly chosen codebook is sufficiently large, then when a sender inputs a uniformly sampled codeword, the channel output is indistinguishable from an i.i.d.~state---but only on average, i.e., when averaged over codebooks. Therefore, in contrast to the classical case~\cite{Hayashi2016secure}, a fully universal private channel coding protocol has been missing in the quantum setting.

In this paper, we resolve this gap by constructing a sequence of channel-independent encoders that achieves both correctness and secrecy, thereby establishing a fully universal private channel coding protocol.
The key technical development is \emph{universal resolvability coding} for classical-quantum (c-q) channels. Resolvability coding is a way to construct codebooks such that the channel output, when fed with the uniform distribution over the codebook, is asymptotically indistinguishable from a target state---namely, the channel output induced by a uniform constant-composition input~\cite{Hayashi2024}. The soft-covering lemma guarantees this on average, i.e., when averaged over a random choice of codebooks, while the recent result~\cite{Sen2025} showed that a randomly chosen codebook achieves resolvability with high probability, i.e., without taking average over codebooks.
However, these results do not clarify what intrinsic property of codebooks ensures resolvability since the evaluation of the distinguishability between the channel output and the target state is not completely decoupled from the code construction in Ref.~\cite{Sen2025}.

Our contribution is to identify the exact property of (constant-composition) codebooks that achieve channel resolvability.  
For the classical case, the paper \cite[Section XII]{Hayashi2016secure} constructed a fully universal private channel coding. 
It focuses on the property that the channel can be decomposed into a sum of conditional types. However, when this method is translated into the c-q-channel case, this decomposition depends on the basis of the output system. Hence, this idea does not work with an output quantum system.
To show universal resolvability in the quantum setting, we completely change the strategy. We combine several mathematical tools for this: Schreier graph and its spectral gap, the representation theory of a permutation group, and the complex interpolation theory. 
We show that \emph{the expander property of a graph associated with a codebook} is the very property we need; a Schreier graph is associated with a codebook in a non-standard way (i.e., not a Tanner graph), and its spectral gap is directly connected to an upper bound on the state distance in terms of the 2-to-2 norm through representation-theoretic techniques.
Using this, one can upper-bound the usual distance between the output state of the channel using this codebook and the target state (i.e., the output state of the channel with a uniform constant-composition input) via the complex interpolation theory~\cite{Beigi2013, Dupuis2022, Beigi2023operator, Beigi2023}, which results in universal resolvability.
Recall that the Schur-Weyl duality has played a crucial role in the existing universal coding protocols~\cite{Jozsa1998, Hayashi2002Quantum, Hayashi2009, Datta2010, Hayashi2010, Hayashi2022, Matsuura2025}. While it also plays an important role in our work as well in the construction of universal c-q channel coding and universal private channel coding, we observe that a different representation-theoretic tool is necessary for universal resolvability coding, namely, the theory of induced representation. Especially, \emph{the universality of induced (unitary) representation} of a permutation group from its subgroup and the complete reducibility of a unitary representation of a permutation group are essential constituents of our proof.

Combining universal channel resolvability coding developed above with universal c-q channel coding~\cite{Hayashi2009}, we then aim at constructing universal c-q private channel coding~\cite{Datta2010}. Private channel coding is to reliably transmit a classical message to a receiver (correctness) while disturbing the information leaked through a side channel so that an eavesdropper cannot know the transmitted message (secrecy)~\cite{Devetak2005, Cai2004, Hayashi2015}. The main problem is to construct an encoder that satisfies both requirements for universal c-q channel resolvability coding to satisfy the secrecy and universal c-q channel coding to satisfy the correctness. We show the existence of such an encoding strategy by appropriately modifying the required conditions on the codebook.

The rest of the paper is organized as follows. In Sec.~\ref{sec:notations}, we describe the notations used in this work and the preliminaries for the results. In Sec.~\ref{sec:universal_c-q}, by employing an improved decoder, we improve the error exponent for c-q channel coding over that of the previous work~\cite{Hayashi2009}. We summarize the relevant property of codebooks that achieves the universal c-q channel coding, which is used in the later section. Section~\ref{sec:universal_resolvability} is our main result. We show that when a graph corresponding to a sufficiently large codebook possesses an appropriate expander property, then channel resolvability can be achieved with that codebook, independently of the channel.  Thus, the universal resolvability coding is possible as long as one knows the channel mutual information to determine the size of the codebook.  In Sec.~\ref{sec:universal_private}, we construct a universal private channel coding protocol, combining the universal c-q channel coding and the universal resolvability coding.

\section{Notations and preliminaries} \label{sec:notations}
This section summarizes mathematical tools to be used in the later sections. The following table shows which subsections are mainly used in the subsequent sections. Although preliminaries in \ref{sec:entropies} and \ref{sec:representation_theory} are not explicitly used in Sec.~\ref{sec:universal_private}, the result in Sec.~\ref{sec:universal_private} relies on Theorem~\ref{theo:universal_resolvability} in Sec.~\ref{sec:universal_resolvability}, which is based on them.
\begin{table}[h]
    \centering
    \begin{tabular}{lcl}
    \hline
        \ref{sec:Schur-Weyl}, \ref{sec:entropies} & $\Rightarrow$ & Section~\ref{sec:universal_c-q} \\
        \ref{sec:entropies}, \ref{sec:representation_theory}, \ref{sec:graph_theory} & $\Rightarrow$ & Section~\ref{sec:universal_resolvability} \\
        \ref{sec:Schur-Weyl}, \ref{sec:entropies}, (\ref{sec:representation_theory}, \ref{sec:graph_theory}) & $\Rightarrow$ & Section~\ref{sec:universal_private}\\
        \hline
    \end{tabular}
    \label{tab:placeholder}
\end{table}
\subsection{Schur-Weyl duality and type theory} \label{sec:Schur-Weyl}
Let ${\cal H} \coloneqq (\mathbb{C}^d)^{\otimes n}$.
Let $Y_n^d$ be the set of Young diagrams with $n$ boxes and at most $d$ rows.
The elements of $Y_n^d$ can be represented as $\bm{n}=(n_1, n_2, \ldots, n_d)$ with $n_1 \geq n_2 \geq \cdots n_d$ and $\sum_{i=1}^{d} n_i = n$.
Then, from the Schur-Weyl duality, ${\cal H}$ can be decomposed into
\begin{equation}
    {\cal H} = \bigoplus_{\bm{n}\in Y_n^d} {\cal U}_{\bm{n}} \otimes {\cal V}_{\bm{n}},
\end{equation}
where ${\cal U}_{\bm{n}}$ denotes the representation space of $\mathrm{SU}(d)$ and ${\cal V}_{\bm{n}}$ denotes that of the permutation group $S_n$.  For any $U\in \mathrm{SU}(d)$, $U^{\otimes n}$ can be decomposed into 
\begin{equation}
    U^{\otimes n} = \bigoplus_{\bm{n}\in Y_n^d} \pi_{\bm{n}}(U) \otimes I_{{\cal V}_{\bm{n}}},
\end{equation}
where $\pi_{\bm{n}}$ denotes the irreducible representation of $\mathrm{SU}(d)$ on ${\cal U}_{\bm{n}}$.  Similarly, any unitary representation $V_{s}$ of $s \in S_n$ can be decomposed into 
\begin{equation}
    V_{s} = \bigoplus_{\bm{n}\in Y_n^d} I_{{\cal U}_{\bm{n}}} \otimes \zeta_{\bm{n}}(s), \label{eq:permutation_schur_weyl}
\end{equation}
where $\zeta_{\bm{n}}$ denotes the irreducible representation of $S_n$ on ${\cal V}_{\bm{n}}$.  Any state that commutes with $U^{\otimes n}$ for all $U\in \mathrm{SU}(d)$ or commutes with $V_{s}$ for all $s\in S_n$ has the same block diagonal form in this Schur-Weyl basis from Schur's lemma.
Let $\Pi_{\bm{n}}$ be a projection onto the subspace labeled by $\bm{n}$.
Then, we define
\begin{align}
    \sigma_{\bm{n}} &\coloneqq \frac{\Pi_{\bm{n}}}{\dim ({\cal U}_{\bm{n}}\otimes {\cal V}_{\bm{n}})} , \label{eq:uniform_in_a_irreducible}\\
    \sigma_{U,n} &\coloneqq \sum_{\bm{n}\in Y_n^d} \frac{1}{|Y_n^d|} \sigma_{\bm{n}},
    \label{eq:uniform_state}
\end{align}
where $\sigma_{U,n}$ is called the universal symmetric state~\cite{Hayashi2009,Mosonyi2017,Hayashi2017}.  The state that commutes with both $U^{\otimes n}$ for all $U\in \mathrm{SU}(d)$ and $V_{s}$ for all $s\in S_n$ can be written as $\sum_{\bm{n}\in Y_n^d}p_{\bm{n}}\sigma_{\bm{n}}$, where $\{p_{\bm{n}}\}_{\bm{n}\in Y_n^d}$ is the probability distribution.  The state $\sigma_{\bm{n}}$ for any $\bm{n}\in Y_n^d$ commutes with any operator that has a block-diagonal form in the Schur-Weyl basis, and therefore so does $\sigma_{U, n}$.  In particular, the universal symmetric state $\sigma_{U, n}$ commutes with any operator in the form $O^{\otimes n}$, which will be used later. 
For any i.i.d.~state $\rho^{\otimes n}$, the following holds from its permutation symmetry:
\begin{equation}
    \Pi_{\bm{n}} \rho^{\otimes n} \Pi_{\bm{n}} = \rho_{{\cal U}_{\bm{n}}} \otimes \frac{I_{{\cal V}_{\bm{n}}}}{\dim {\cal V}_{\bm{n}}} \leq \frac{\Pi_{\bm{n}}}{\dim {\cal V}_{\bm{n}}}  \leq \dim {\cal U}_{\bm{n}} \sigma_{\bm{n}} ,\label{eq:direct_sum_decomp}
\end{equation}
where $\rho_{{\cal U}_{\bm{n}}}$ is an unnormalized density operator.
Therefore, we have
\begin{equation}
    \rho^{\otimes n} \leq \sum_{\bm{n}\in Y_n^d} \dim {\cal U}_{\bm{n}} \, \sigma_{\bm{n}} \leq \max_{\bm{n}}(\dim {\cal U}_{\bm{n}}) |Y_n^d| \sigma_{U,n}.
    \label{eq:bound_universal_symmetric}
\end{equation}
It is known that the following upper bounds hold~\cite{Itzykson1966,Hayashi2009, Mosonyi2017}:
\begin{equation}
    \dim {\cal U}_{\bm{n}} \leq (n+1)^{\frac{d(d-1)}{2}}
\end{equation}
and
\begin{equation}
    |Y_n^d| \leq (n+1)^{d-1}.
\end{equation}
Therefore, the coefficient of $\sigma_{U,n}$ in Eq.~\eqref{eq:bound_universal_symmetric} can be bounded from above by 
\begin{equation}
    \max_{\bm{n}}(\dim {\cal U}_{\bm{n}}) |Y_n^d|\leq (n+1)^{\frac{(d+2)(d-1)}{2}}. \label{eq:quantum_coefficient_bound}
\end{equation}

Let us now consider the string $\bm{x}\in{\cal X}^n$.  For ease of discussion, let ${\cal X}$ be the set of integers from $1$ to $|{\cal X}|$.  Define $P_{\bm{x}}$ as the type of the string $\bm{x}$, i.e., the probability distribution over ${\cal X}$ satisfying
\begin{equation}
    \forall y\in{\cal X},\qquad P_{\bm{x}}(y) = \frac{\left|\bigl\{i\in\{1,\ldots,n\}:x_i=y\bigr\}\right|}{n}.
\end{equation}
Let ${\cal P}_n$ be the set of types for length-$n$ strings.
It is known that the following holds:
\begin{equation}
    |{\cal P}_n| \leq (n+1)^{d-1}. \label{eq:number_of_types}
\end{equation}
For each $P\in{\cal P}_n$, let ${\cal T}_P$ be the set of length-$n$ strings with the type $P$, i.e.,
\begin{equation}
    {\cal T}_P \coloneqq \{\bm{x}\in{\cal X}^n:P_{\bm{x}}=P\}.
\end{equation}
For each set ${\cal T}_P$, we define a specific element $\bm{x}_P\in{\cal T}_P$ as
\begin{equation}
    \bm{x}_P \coloneqq (\underbrace{1, \ldots, 1}_{m_1}, \underbrace{2, \ldots, 2}_{m_2}, \ldots, \underbrace{|{\cal X}|, \ldots, |{\cal X}|}_{m_{|{\cal X}|}}), \label{eq:specific_element_in_type_P}
\end{equation}
where $m_i\coloneqq n P(i)$ for $i=1,\ldots,|{\cal X}|$.  Then, there exists a permutation $s_{\bm{x}_P\to\bm{x}}\in S_n$ for each $\bm{x}\in{\cal T}_P$ such that 
\begin{equation}
    \bm{x}=s_{\bm{x}_{P}\to\bm{x}}(\bm{x}_P), \label{eq:def_s_bm_x}
\end{equation}  
which is not unique.
For later use, we also define an isotropy group $S_{\bm{x}_P}\subset S_n$ for $\bm{x}_P$ as
\begin{align}
        S_{\bm{x}_P}&\coloneqq\{s\in S_n:s(\bm{x}_P)=\bm{x}_P\}. \label{eq:isotropy}
\end{align}

Given two strings $\bm{x}\in{\cal X}^n$ and $\bm{y}\in{\cal Y}^n$, let $P_{\bm{x},\bm{y}}$ be the joint type defined as
\begin{equation}
    \forall (u,v)\in{\cal X}\times{\cal Y}, \qquad P_{\bm{x},\bm{y}}(u,v)=\frac{|\{i\in\{1,\ldots,n\}:(x_i,y_i)=(u,v)\}|}{n}.
\end{equation}
Then, a conditional type $\bm{V}_{\bm{y}|\bm{x}}$ of $\bm{y}$ given $\bm{x}$ is defined to satisfy
\begin{equation}
    \forall (x,y)\in{\cal X}\times{\cal Y}, \qquad\bm{V}_{\bm{y}|\bm{x}}(y|x){P_{\bm{x}}}(x)= P_{\bm{x},\bm{y}}(x,y).
\end{equation}
Note that the conditional type is not uniquely determined for $x\in{\cal X}$ such that $P_{\bm{x}}(x)=0$.
Let ${\cal V}(\bm{x},{\cal Y})$ be the set of all the conditional types for $\bm{x}$.  Then, for any conditional type $\bm{V}\in{\cal V}(\bm{x},{\cal Y})$, we define the subset ${\cal T}_{\bm{V}}(\bm{x})$ of ${\cal Y}^n$ as
\begin{equation}
    {\cal T}_{\bm{V}}(\bm{x}) \coloneqq \{\bm{y}\in{\cal Y}^n:\bm{V}_{\bm{y}|\bm{x}}=\bm{V}\}. \label{eq:sequence_compatible_with_conditional_type}
\end{equation}
This is the set of strings that are compatible with the conditional type $\bm{V}$ given $\bm{x}$.  There is a distinguished element $\bm{V}^{\rm id}$ in ${\cal V}(\bm{x},{\cal X})$ called the identical conditional type that satisfies $\bm{V}^{\rm id}(y|x)=\delta_{xy}$ for any $(x,y)\in{\cal X}\times{\cal X}$.  In this case, the set $T_{\bm{V}^{\rm id}}$ is $T_{\bm{V}^{\rm id}}(\bm{x})=\{\bm{x}\}$.
From Ref.~\cite{Ahlswede1982}, the cardinality of the set ${\cal V}(\bm{x},{\cal X})$ is bounded from above by
\begin{equation}
    |{\cal V}(\bm{x},{\cal X})| \leq (n+1)^{|{\cal X}|^2}, \label{eq:cardinality_conditonal_type}
\end{equation}
which will be used later.

Let $\rho_{XB}$ be a classical-quantum (c-) state defined as
\begin{equation}
    \rho_{XB}\coloneqq \sum_{x\in{\cal X}}p(x)\ketbra{x}{x}_X\otimes \rho_B^{x}. \label{eq:c-q_state}
\end{equation}
Then, an i.i.d.~state $\rho_{XB}^{\otimes n}$ can be written as
\begin{equation}
    \rho_{XB}^{\otimes n} = \sum_{\bm{x}\in{\cal X}^n}p^n(\bm{x}) \ketbra{\bm{x}}{\bm{x}}_{X^n} \otimes \rho_{B^n}^{\bm{x}},
\end{equation}
where 
\begin{equation}
    p^n(\bm{x})\coloneqq \prod_{i=1}^n p(x_i),
\end{equation}
and
\begin{equation}
    \rho_{B^n}^{\bm{x}}\coloneqq \rho_{B}^{x_1}\otimes \cdots \otimes \rho_{B}^{x_n}. \label{eq:c-q_for_vec_x}
\end{equation}
Then, for each $\bm{x}$, we have 
\begin{equation}
    \rho_{B^n}^{\bm{x}_P} = (\rho_B^{1})^{\otimes m_1}\otimes\cdots\otimes(\rho_B^{|{\cal X}|})^{\otimes m_{|{\cal X}|}},
\end{equation}
where $m_i = n P_{\bm{x}}(i)$ for $i=1,\ldots,|{\cal X}|$.  By applying Eqs.~\eqref{eq:bound_universal_symmetric} and~\eqref{eq:quantum_coefficient_bound} recursively~\cite{Hayashi2009,Matsuura2024}, we have 
\begin{equation}
    \rho_{B^n}^{\bm{x}_P} \leq \prod_{i=1}^{|{\cal X}|} (m_i+1)^{\frac{(d+2)(d-1)}{2}} \sigma_{U,m_1}\otimes\cdots\otimes\sigma_{U,m_{|{\cal X}|}} \leq (n+1)^{\frac{(d+2)(d-1)}{2}|{\cal X}|}\sigma_{U,m_1}\otimes\cdots\otimes\sigma_{U,m_{|{\cal X}|}}.
\end{equation}
Furthermore, the right-hand side commutes with the left-hand side since the universal symmetric state $\sigma_{U,m_i}$ commutes with $O^{\otimes m_i}$ for any operator $O$.
We define $\sigma_{\bm{x}}$ for $\bm{x}\in{\cal X}^n$ as 
\begin{equation}
    \sigma_{\bm{x}} \coloneqq V_{s_{\bm{x}_P\to\bm{x}}}\sigma_{U,m_1}\otimes\cdots\otimes\sigma_{U,m_{|{\cal X}|}}V_{s_{\bm{x}_P\to\bm{x}}}^{\dagger}, \label{eq:conditional_universal}
\end{equation}
where $V_{s}$ is the unitary representation of $s\in S_n$ as mentioned earlier. By construction, $\sigma_{\bm{x}}$ is full rank.
From the relation $\bm{x} = s_{\bm{x}_P\to\bm{x}}(\bm{x}_P)$, the following holds for any $\bm{x}\in{\cal X}^n$:
\begin{equation}
    \rho_{B^n}^{\bm{x}} = V_{s_{\bm{x}_P\to\bm{x}}} \rho_{B^n}^{\bm{x}_P} V_{s_{\bm{x}_P\to\bm{x}}}^{\dagger} \leq (n+1)^{\frac{(d+2)(d-1)}{2}|{\cal X}|} \sigma_{\bm{x}}, \label{eq:sequence_dependent_bound}
\end{equation}
and $\rho_{B^n}^{\bm{x}}$ commutes with $\sigma_{\bm{x}}$.
For any type $P\in{\cal P}_n$, let $\sigma_{U,P}$ be defined as 
\begin{equation}
    \sigma_{U,P} \coloneqq \frac{1}{|{\cal T}_P|}\sum_{\bm{x}\in{\cal T}_P}  \sigma_{\bm{x}}. \label{eq:def_universal_type_symmetric}
\end{equation}
Then, the state $\sigma_{U,P}$ commutes with both $U^{\otimes n}$ for any $U\in\mathrm{SU}(d)$ and $V_{\sigma}$ for any $\sigma\in S_n$, and thus have the form $\sum_{\bm{n}\in Y_n^d}p_{\bm{n}}\sigma_{\bm{n}}$.  In particular, it commutes with $\sigma_{\bm{x}}$ for any $\bm{x}\in{\cal X}^n$.

\subsection{Entropic quantities} \label{sec:entropies}
Next, we introduce the information-theoretic quantities that characterize this task.  Let $D_{\alpha}(\rho\|\sigma)$ be the Petz $\alpha$-R\'enyi divergence defined as 
\begin{equation}
    D_{\alpha}(\rho\|\sigma) \coloneqq \begin{cases}\frac{1}{\alpha-1}\log\tr[\rho^{\alpha}\sigma^{1-\alpha}] & 0\leq \alpha < 1 \text{ or } \mathrm{supp}(\rho)\subseteq\mathrm{supp}(\sigma), \\
        \infty & \text{Otherwise}.
    \end{cases} \label{eq:alpha_Renyi_divergence}
\end{equation}
As $\alpha\rightarrow 1$, $D_{\alpha}(\rho\|\sigma)$ reduces to the quantum relative entropy $D(\rho\|\sigma)$ given by 
\begin{equation}
    D(\rho\|\sigma)\coloneqq\begin{cases} \tr[\rho\log\rho - \rho\log\sigma] & \mathrm{supp}(\rho)\subseteq\mathrm{supp}(\sigma), \\
        \infty & \text{Otherwise}.
    \end{cases}
\end{equation}
Similarly, the sandwiched $\alpha$-R\'enyi divergence $\widetilde{D}_{\alpha}(\rho\|\sigma)$ is defined as~\cite{Wilde2014,Martin2013}
\begin{equation}
    \widetilde{D}_{\alpha}(\rho\|\sigma) \coloneqq \begin{cases}
        \frac{\alpha}{\alpha-1}\log\|\sigma^{\frac{1-\alpha}{2\alpha}}\rho\sigma^{\frac{1-\alpha}{2\alpha}}\|_{\alpha} & 0\leq \alpha < 1 \text{ or } {\rm supp}(\rho)\subseteq{\rm supp}(\sigma), \\
        \infty & \text{Otherwise},
    \end{cases} \label{eq:sandwiched_renyi}
\end{equation}
where $\|T\|_{\alpha}$ denotes the Schatten $\alpha$-norm of an operator $T$.
The sandwiched $\alpha$-R\'enyi divergence also satisfies $\lim_{\alpha\to 1}\widetilde{D}_{\alpha}(\rho\|\sigma)=D(\rho\|\sigma)$.
It is shown in Ref.~\cite{Lin2015} that both $\alpha\mapsto D_{\alpha}(\rho\|\sigma)$ and $\alpha\mapsto \widetilde{D}_{\alpha}(\rho\|\sigma)$ are monotonically increasing.

The $\alpha$-mutual information $I_{\alpha}(A:B)$ is defined as 
\begin{equation}
    I_{\alpha}(A:B)_{\rho} =  
        \inf_{\sigma_B\in{\cal D}({\cal H}_B)} D_{\alpha}(\rho_{AB}\|\rho_A\otimes \sigma_B) 
     \label{eq:alpha_mutual}
\end{equation}
As $\alpha\rightarrow 1$, $I_{\alpha}(A:B)_{\rho}$ reduces to the conventional mutual information $I(A:B)_{\rho}$ defined as
\begin{equation}
I(A:B)_{\rho}\coloneqq D(\rho_{AB}\|\rho_A\otimes\rho_B).
\end{equation} 
For a c-q state $\rho_{XB}$ defined in Eq.~\eqref{eq:c-q_state}, we have the following explicit expression from the quantum Sibson's identity~\cite{Koenig2009, Sharma2013, Mosonyi2017, Cheng2022} for $\alpha\in[0,1)$: 
\begin{equation}
    I_{\alpha}(X:B)_{\rho} = \frac{\alpha}{\alpha-1}\log\mathrm{Tr}\!\left[\left(\sum_{x\in{\cal X}}p(x)\bigl(\rho_B^x\bigr)^{\alpha}\right)^{\frac{1}{\alpha}}\right] = D_{\alpha}(\rho_{XB}\|\rho_X\otimes\sigma_B^{*}),\label{eq:Sibson_identity} 
\end{equation}
where
\begin{equation}
    \sigma_B^{*} \coloneqq \frac{\left(\tr_X[\rho_X^{1-\alpha}\rho_{XB}^{\alpha}]\right)^{\frac{1}{\alpha}}}{\tr\left[\left(\tr_X[\rho_X^{1-\alpha}\rho_{XB}^{\alpha}]\right)^{\frac{1}{\alpha}}\right]}.
\end{equation}
Thus, for a c-q state, the infimum in Eq.~\eqref{eq:alpha_mutual} can be replaced with the minimum. The above expression implies that the function $\alpha\mapsto I_{\alpha}(X:B)_{\rho}$ is continuous, and Proposition~4 in Ref.~\cite{Cheng2022} shows that  $\alpha\mapsto I_{\alpha}(X:B)_{\rho}$ is non-decreasing.

For a c-q state $\rho_{XB}$, let $\breve{I}_{\alpha}(X:B)_{\rho}$ be sandwiched $\alpha$-Augustin information defined as~\cite{Beigi2023}
\begin{equation}
    \breve{I}_{\alpha}(X:B)_{\rho} \coloneqq \inf_{\sigma_B\in{\cal D}({\cal H}_B)} \sum_{x\in{\cal X}} p(x) \widetilde{D}_{\alpha}(\rho^x_B\|\sigma_B). \label{eq:def_augustin}
\end{equation}
Again, as $\alpha\rightarrow 1$, $\breve{I}_{\alpha}(X:B)_{\rho}$ reduces to the conventional mutual information $I(X:B)_{\rho}$~\cite{Cheng2022}. 
It is known that $\alpha\mapsto \breve{I}_{\alpha}(X:B)_{\rho}$ is continuous on $\alpha\in[1,2]$ (Lemma~12 of Ref.~\cite{Cheng2022error}) and monotonically increasing for $\alpha\geq 1/2$ (Proposition~5 of Ref.~\cite{Cheng2022}).

For a string $\bm{x}\in{\cal X}^n$, let $H(\bm{x})$ be the empirical entropy defined as 
\begin{equation}
    H(\bm{x}) \coloneqq -\sum_{y\in{\cal X}}P_{\bm{x}}(y)\log P_{\bm{x}}(y) = H(X)_{P_{\bm{x}}}. \label{eq:def_empirical_entropy}
\end{equation}
The empirical entropy thus depends only on the type of the sequence $\bm{x}$.
For any type $P\in{\cal P}_n$, the following is known to hold:
\begin{equation}
    (n+1)^{-|{\cal X}|}e^{n H(X)_P} \leq |{\cal T}_P|\leq e^{n H(X)_P}. \label{eq:type_cardinality_bound}
\end{equation}

\subsection{Further preliminaries for group representation} \label{sec:representation_theory}
Let $G$ be a finite group.  Consider two representations $(V_1,\pi_1)$ and $(V_2,\pi_2)$ of $G$. Then, an intertwiner $T:V_1\to V_2$ between two representations is defined as a linear map that satisfies $T \pi_1(g) v=\pi_2(g) T v$ for any $g\in G$ and $v\in V_1$. Note that $T=0$ trivially satisfies the above equation, which is called the trivial intertwiner. The following lemma states the existence/nonexistence of a nontrivial intertwiner.
\begin{lemma}[Schur's lemma] \label{lem:Schur}
    Given two irreducible unitary representations $(V_1,\pi_1)$ and $(V_2, \pi_2)$ of a finite group $G$, a nontrivial intertwiner $(0\neq)T:V_1\to V_2$ exists only when $(V_1,\pi_1)$ and $(V_2, \pi_2)$ are equivalent.  Furthermore, for two equivalent irreducible unitary representations, an intertwiner $T$ can always be taken to be unitary.
    If $(V_1',\pi_1')$ is a (not necessarily irreducible) unitary representation of $G$ that contains a subrepresentation equivalent to $(V_2,\pi_2)$, then there exists a surjective partial isometry $T':V_1'\to V_2$ such that $T'\pi_1'(g)=\pi_2(g)T'$ for any $g\in G$, i.e., $T'$ is an intertwiner.
\end{lemma}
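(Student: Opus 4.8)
The statement to prove is Schur's lemma (Lemma~\ref{lem:Schur}), which has three parts: (i) a nontrivial intertwiner between irreducible unitary representations exists only if they are equivalent; (ii) for equivalent irreducibles the intertwiner can be chosen unitary; (iii) if a (possibly reducible) unitary representation $(V_1',\pi_1')$ contains a subrepresentation equivalent to $(V_2,\pi_2)$, then there is a surjective partial isometry intertwiner $V_1'\to V_2$.

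\begin{proof}[Proof proposal]
The plan is to run the classical argument, then upgrade to the unitary/partial-isometry refinements. For part (i), I would start from an arbitrary intertwiner $T:V_1\to V_2$ and observe that $\ker T\subseteq V_1$ is $\pi_1$-invariant (if $Tv=0$ then $T\pi_1(g)v=\pi_2(g)Tv=0$) and $\mathrm{ran}\,T\subseteq V_2$ is $\pi_2$-invariant. By irreducibility of $(V_1,\pi_1)$, $\ker T$ is either $\{0\}$ or all of $V_1$; by irreducibility of $(V_2,\pi_2)$, $\mathrm{ran}\,T$ is either $\{0\}$ or all of $V_2$. If $T\neq 0$ then $\ker T=\{0\}$ and $\mathrm{ran}\,T=V_2$, so $T$ is a bijective intertwiner, i.e.\ the two representations are equivalent. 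For part (ii), given equivalent irreducibles take any invertible intertwiner $T$ and form the polar decomposition $T=W|T|$ with $|T|=(T^\dagger T)^{1/2}$ positive-definite and $W$ unitary. The operator $T^\dagger T$ intertwines $\pi_1$ with itself: from $T\pi_1(g)=\pi_2(g)T$ and its adjoint $\pi_1(g)^\dagger T^\dagger = T^\dagger \pi_2(g)^\dagger$ (using unitarity of the representations, $\pi_i(g)^\dagger=\pi_i(g)^{-1}=\pi_i(g^{-1})$) one gets $\pi_1(g)^\dagger T^\dagger T \pi_1(g) = T^\dagger T$, hence $T^\dagger T$ commutes with all $\pi_1(g)$. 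By the already-established scalar form of Schur's lemma (an intertwiner from an irreducible to itself is a scalar, which follows from part (i) applied with an eigenprojection of the self-adjoint operator $T^\dagger T$), $T^\dagger T = c\, I$ for some $c>0$, so $|T|=\sqrt{c}\,I$ and $W=c^{-1/2}T$ is a unitary intertwiner.

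For part (iii), I would decompose $V_1'$ using complete reducibility of unitary representations of a finite group: $V_1'=\bigoplus_j W_j$ with each $(W_j,\pi_1'|_{W_j})$ irreducible and the sum orthogonal (this is standard — every invariant subspace of a unitary representation has an invariant orthogonal complement, since the orthogonal projection onto an invariant subspace commutes with the representation). By hypothesis one of these summands, say $W_{j_0}$, is equivalent to $(V_2,\pi_2)$; by part (ii) there is a unitary intertwiner $U:W_{j_0}\to V_2$. Let $P_{j_0}$ be the orthogonal projection of $V_1'$ onto $W_{j_0}$, which is an intertwiner $V_1'\to W_{j_0}$ because $W_{j_0}$ and its complement are both invariant. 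Then $T'\coloneqq U\circ P_{j_0}:V_1'\to V_2$ is an intertwiner, it is surjective (since $U$ is), and it is a partial isometry with initial space $W_{j_0}$ and final space $V_2$ (as a composition of an orthogonal projection with a unitary on its range).

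The argument is essentially routine; the only place needing care is making sure the complete-reducibility step in part (iii) is invoked at the right level of generality — it must produce an \emph{orthogonal} direct sum decomposition into irreducibles, not merely a direct-sum decomposition, so that $P_{j_0}$ is genuinely an orthogonal projection and $T'$ a genuine partial isometry. I expect this is the main (mild) obstacle: one should either cite the orthogonal complete-reducibility of unitary representations of finite groups, or prove it in one line by averaging an arbitrary projection onto an invariant subspace over the group to get an invariant projection and then noting it can be taken self-adjoint. Everything else — kernel/range invariance, the polar/scalar argument, composing to get the partial isometry — is immediate.
\end{proof}
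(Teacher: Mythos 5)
The paper states Lemma~\ref{lem:Schur} without proof, treating it as a textbook fact from the representation theory of finite groups, so there is no paper proof to compare against. Your proposal supplies the standard argument and it is correct: part~(i) is the usual kernel/range invariance argument; part~(ii) correctly derives $T^\dagger T = cI$ from the scalar form of Schur's lemma (which indeed follows from part~(i) via the eigenprojection of the self-adjoint self-intertwiner $T^\dagger T$, since the representations are over $\mathbb{C}$) and then rescales; and part~(iii) correctly invokes orthogonal complete reducibility of a unitary representation of a finite group, notes that some irreducible summand must be equivalent to $(V_2,\pi_2)$ (by the well-definedness of multiplicity), and composes the orthogonal projection onto that summand with a unitary intertwiner from part~(ii), verifying that the composite is a surjective partial isometry via $T'^\dagger T' = P_{j_0}^\dagger U^\dagger U P_{j_0} = P_{j_0}$. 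Your cautionary remark about needing the \emph{orthogonal} decomposition (obtained by averaging any projection over $G$ and symmetrizing, or simply by repeatedly taking orthogonal complements of invariant subspaces) is exactly the right point to flag, and your handling of it is sound.
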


Later, we will use the theory of induced representation. For a finite group $G$ and its subgroup $H$, let $(W,\pi)$ be a representation of $H$. Then, the representation $\mathrm{Ind}_{H}^G(\pi_W)$ of $G$ induced from the representation $(W,\pi)$ of $H$ is defined as follows~\cite{Fulton2004}. Let $\mathbb{C}[G]$ denote the group algebra of $G$, i.e., a complex algebra with each element of $G$ treated as a base. Then, regarding $\mathbb{C}[G]$ and $W$ as right- and left- $\mathbb{C}[H]$-modules, respectively, we define a module tensor product $\mathbb{C}[G]\otimes_{\mathbb{C}[H]} W$ as 
\begin{equation}
    \mathbb{C}[G]\otimes_{\mathbb{C}[H]} W\coloneqq\{g\otimes_{\mathbb{C}[H]} w, \forall g\in G, \forall w\in W: gh\otimes_{\mathbb{C}[H]} w = g\otimes_{\mathbb{C}[H]} \pi(h)w, \forall h \in H\}. \label{eq:induced_representation_space}
\end{equation}
Let $\rho:G\to \mathrm{GL}(\mathbb{C}[G]\otimes_{\mathbb{C}[H]} W) $ be a representation, where $\mathrm{GL}(V)$ denotes the set of linear operators on a vector space $V$, such that 
\begin{equation}
    \forall g, g'\in G, \forall w\in W,\qquad \rho(g) (g'\otimes_{\mathbb{C}[H]} w) = gg'\otimes_{\mathbb{C}[H]}w. \label{eq:left_multiplication}
\end{equation}
Then, the representation $(\mathbb{C}[G]\otimes_{\mathbb{C}[H]} W,\rho)$ of $G$ is called the induced representation and denoted by $\mathrm{Ind}_H^G(\pi_W)$. It is clear from the above that if $(W,\pi)$ is a unitary representation, $\mathrm{Ind}_H^G(\pi_W)$ is also a unitary representation. Note also that the representation space $\mathbb{C}[G]\otimes_{\mathbb{C}[H]} W$ is isomorphic to $\mathbb{C}[G/H]\otimes W$ as a vector space, where $G/H\coloneqq \{g H, g\in G\}$ denotes the set of left cosets of $H$ in $G$, while the representation of $G$ on $\mathbb{C}[G/H]\otimes W$ is not simple, unlike Eq.~\eqref{eq:left_multiplication}.
When $(W,\pi)$ is the regular representation of $H$, the induced representation $\mathrm{Ind}_H^G(\pi_W)$ is the regular representation of $G$. When $(W,\pi)$ is the trivial representation $(1,\mathbb{C})$ of $H$, then $\mathrm{Ind}_H^G(1_{\mathbb{C}})$ is called the permutation representation, which is the left multiplication of $G$ on $\mathbb{C}[G/H]$ (see Eq.~\eqref{eq:induced_representation_space}), acting as a permutation matrix. We later use $\mathrm{Ind}_H^G(1_{\mathbb{C}})$.
The following lemma directly follows from the Frobenius reciprocity theorem~\cite{Serre1977, Fulton2004}.
\begin{lemma} \label{lem:containing_irreducibles}
    Let $G$ be a finite group and $H$ be its subgroup. Let $(W,\pi)$ be an irreducible representation of $H$. Then, the induced representation $\mathrm{Ind}_H^G(\pi_W)$ contains every irreducible representation $(V,\Pi)$ of $G$ such that the restriction $\Pi\restriction_H:H\to\mathrm{GL}(V)$ of $\Pi$ to $H$ contains $(W,\pi)$ as a subrepresentation. The multiplicity of $(V,\Pi)$ in $\mathrm{Ind}_H^G(\pi_W)$ is equal to the multiplicity of $(W,\pi)$ in $(V,\Pi\restriction_H)$.
\end{lemma}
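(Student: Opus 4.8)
\section{Proof of Lemma~\ref{lem:containing_irreducibles}}

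The plan is to obtain the lemma as a direct corollary of Frobenius reciprocity, after rewriting both assertions in terms of dimensions of intertwiner spaces. Since $G$ is finite, every finite-dimensional unitary representation is completely reducible (Maschke's theorem), so multiplicities are well defined; moreover, by Schur's lemma (Lemma~\ref{lem:Schur}), for an irreducible $(V,\Pi)$ the multiplicity of $(V,\Pi)$ in a representation $(U,\rho)$ of $G$ equals $\dim\mathrm{Hom}_G((U,\rho),(V,\Pi))$, and likewise the multiplicity of the irreducible $(W,\pi)$ in a representation of $H$ equals the dimension of the corresponding $H$-intertwiner space. Hence it suffices to exhibit the natural isomorphism
\begin{equation}
    \mathrm{Hom}_G\!\bigl(\mathrm{Ind}_H^G(\pi_W),\,(V,\Pi)\bigr)\;\cong\;\mathrm{Hom}_H\!\bigl((W,\pi),\,(V,\Pi\restriction_H)\bigr),
\end{equation}
and then to read off the two sides.

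To construct this isomorphism within the concrete model $\mathbb{C}[G]\otimes_{\mathbb{C}[H]}W$ of Eq.~\eqref{eq:induced_representation_space}, I would argue as follows. Given a $G$-intertwiner $F:\mathbb{C}[G]\otimes_{\mathbb{C}[H]}W\to V$, set $f_F(w)\coloneqq F(e\otimes_{\mathbb{C}[H]}w)$ where $e$ is the identity of $G$; using the module relation $h\otimes_{\mathbb{C}[H]}w=e\otimes_{\mathbb{C}[H]}\pi(h)w=\rho(h)(e\otimes_{\mathbb{C}[H]}w)$ together with Eq.~\eqref{eq:left_multiplication}, one checks that $f_F:W\to V$ intertwines $\pi$ with $\Pi\restriction_H$. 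Conversely, given an $H$-intertwiner $f:W\to V$, put $\widetilde f(g\otimes_{\mathbb{C}[H]}w)\coloneqq\Pi(g)f(w)$ and extend linearly; well-definedness on the module tensor product is exactly the identity $\Pi(h)f(w)=f(\pi(h)w)$, and $\widetilde f$ is a $G$-intertwiner by Eq.~\eqref{eq:left_multiplication}. The assignments $F\mapsto f_F$ and $f\mapsto\widetilde f$ are mutually inverse, which establishes the displayed isomorphism. (Alternatively, one may simply invoke Frobenius reciprocity from Ref.~\cite{Serre1977} or~\cite{Fulton2004}.)

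The lemma then follows immediately: since $(V,\Pi)$ is irreducible, the left-hand dimension is the multiplicity of $(V,\Pi)$ in $\mathrm{Ind}_H^G(\pi_W)$; since $(W,\pi)$ is irreducible, the right-hand dimension is the multiplicity of $(W,\pi)$ in $(V,\Pi\restriction_H)$. Equality of these two dimensions is exactly the second assertion, and positivity of the right-hand side---i.e.\ $(W,\pi)$ occurring as a subrepresentation of $(V,\Pi\restriction_H)$---forces positivity of the left-hand side, which is the first assertion.

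The argument is essentially bookkeeping, and I do not anticipate a genuine obstacle. The one point that needs care is matching conventions: verifying that the module-theoretic induced representation of Eqs.~\eqref{eq:induced_representation_space}--\eqref{eq:left_multiplication} coincides with the induced representation for which Frobenius reciprocity is usually stated, and that complete reducibility together with Schur's lemma faithfully convert the notions of \emph{subrepresentation} and \emph{multiplicity} into intertwiner-space dimensions. A character-theoretic proof---comparing $\langle\chi_{\mathrm{Ind}_H^G(\pi_W)},\chi_{\Pi}\rangle_G$ with $\langle\chi_{\pi},\chi_{\Pi\restriction_H}\rangle_H$ via the explicit induced-character formula---is also available, but it requires more computation and an implicit choice of conjugacy-class data, so the intertwiner-space route is cleaner.
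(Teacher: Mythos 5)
Your argument is correct and follows essentially the same route as the paper, which simply asserts that the lemma ``directly follows from the Frobenius reciprocity theorem'' (citing Serre and Fulton--Harris) without supplying a proof. You have filled in the standard intertwiner-space proof of Frobenius reciprocity in the module-theoretic model of Eqs.~\eqref{eq:induced_representation_space}--\eqref{eq:left_multiplication} and correctly translated multiplicities into $\dim\mathrm{Hom}$ via Schur's lemma and complete reducibility, so your proposal is a valid elaboration of the paper's one-line citation.
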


\subsection{Graph theory} \label{sec:graph_theory}
We introduce relevant graph-theoretic concepts.  We here consider an undirected multigraph $(V, E)$ with the vertex set $V$ and the edge set $E$, meaning that $E$ is a multiset (i.e., a set that allows multiple instances for each of its elements) of unordered pairs of vertices in $V$.  For an undirected multigraph $(V, E)$ with the vertex set $V$ and the edge set $E$, its adjacency matrix is a $|V|\times |V|$ matrix whose $(i,j)$-th entry is the number of edges that connect the vertices $v_i\in V$ and $v_j\in V$. (The adjacency matrix for an undirected multigraph is thus symmetric.)  The $(i,j)$-th entry is zero if $v_i$ and $v_j$ are disconnected.  A (multi)graph $(V, E)$ is said to be connected if any two vertices $u, v\in V$ have a connected path.  A (multi)graph is called $d$-regular if each vertex is attached to exactly $d$ edges.  

Consider a random walk on a connected undirected $d$-regular (multi)graph $(V, E)$, where a marker on a vertex moves to one of the neighboring vertices according to the uniform probability distribution $(1/d,\ldots,1/d)$ at each time step. The transition matrix of this random walk is given by the adjacency matrix divided by $d$, which has a unique maximum eigenvalue $1$ with the eigenvector corresponding to the fixed point of the transition matrix.  For an undirected $d$-regular graph $\Gamma$, let $1=\lambda_1\geq\lambda_2\geq\cdots\lambda_{|\Gamma|}\geq -1$ denote the spectrum of the transition matrix for $\Gamma$, and define $\lambda(\Gamma)$ as
\begin{equation}
    \lambda(\Gamma) \coloneqq \max\{\lambda_2, -\lambda_{|\Gamma|}\}. \label{eq:definition_lambda_graph}
\end{equation}
In particular, $\lambda(\Gamma)<1$ holds iff $\Gamma$ is connected and not bipartite.
An undirected $d$-regular (multi)graph $\Gamma$ with $n$ vertices and $\lambda(\Gamma)\leq \lambda$ is called $(n,d,\lambda)$-expander graph.  We are particularly interested in the smallest possible $\lambda$.  The optimal scaling of $\lambda$ for $d$ under $|V|\gg 1$ is known to be $\mathcal{O}(d^{-1/2})$ in the sense that there always exists a $d$-regular graph $\Gamma$ that has $\lambda(\Gamma) > \mathcal{O}(d^{-1/2})$ for a sufficiently large $|V|$.

Let $G$ be a finite group and $H$ be its subgroup. Then, $G$ has a transitive action on the set $G/H$ of left cosets of $H$ in $G$, which is in fact the permutation representation induced from the trivial representation $(\mathbb{C},1)$ of $H$. 
Let $S=\{S_i\}_{i\in I}$ be a multiset with $S_i\in G$ for any $i\in I$.  For simplicity, here we consider a symmetric subset, which means that for any $s\in S$, we have $s^{-1}\in S$.  Then, $\Gamma(G, G/H, S)$ denotes a Schreier graph, which is a multigraph with vertices $G/H$ and edges $(v, sv)$ for any $v \in G/H$ and $s\in S$ (counted with multiplicity).  Since $S$ is symmetric, $\Gamma(G,G/H,S)$ is an $|S|$-regular undirected graph.  (If $S$ is not symmetric, then $(v,sv)\in E$ does not imply $(sv,v)\in E$, and thus the edges have directions.) 
For this type of Schreier graph, we have the following.
\begin{lemma}\label{lem:transition_Schreier_graph}
    Suppose that $S\subseteq G$ is a symmetric subset of $G$ and that the Schreier graph $\Gamma(G, G/H, S)$ is connected. Then, the transition matrix $T$ of the Schreier graph $\Gamma(G, G/H, S)$ is given by 
    \begin{equation}
        T = \frac{1}{|S|}\sum_{g\in S}\mathrm{Ind}_H^G(1_{\mathbb{C}})(g), \label{eq:transition_matrix}
    \end{equation}
    which has a unique eigenspace with the eigenvalue one spanned by $\bm{1}\coloneqq(1,\ldots,1)$. Furthermore, the projector $P_{\bm{1}}$ onto this eigenspace is given by
    \begin{equation}
        P_{\bm{1}} = \frac{1}{|G|}\sum_{g\in G}\mathrm{Ind}_H^G(1_{\mathbb{C}})(g). \label{eq:fixed_point}
    \end{equation}
    Lastly, we have 
    \begin{equation}
        \lambda(\Gamma(G,G/H,S) = \sup_{\zeta\in\mathbb{C}[G/H]:\zeta^{\dagger}\zeta=1}\zeta^{\dagger}|T-P_{\bm{1}}|\zeta, \label{eq:second_largest_expression}
    \end{equation}
    where $\zeta^{\dagger}$ is a conjugate transponse of $\zeta\in\mathbb{C}[G/H]$.
\end{lemma}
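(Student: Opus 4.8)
The plan is to work throughout in the coset basis $\{\,\ket{v}:v\in G/H\,\}$ of $\mathbb{C}[G/H]$ and reduce everything to elementary spectral facts about averages of permutation matrices. First I would make the identification between the combinatorics of the Schreier graph and the induced representation. As recalled in Sec.~\ref{sec:representation_theory}, on $\mathbb{C}[G/H]$ the permutation representation $\mathrm{Ind}_H^G(1_{\mathbb{C}})(g)$ acts as the permutation matrix sending $\ket{v}\mapsto\ket{gv}$, so its matrix element $\bra{w}\mathrm{Ind}_H^G(1_{\mathbb{C}})(g)\ket{v}$ equals $1$ if $gv=w$ and $0$ otherwise. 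Summing over $g\in S$ (with multiplicity) and comparing with the definition of $\Gamma(G,G/H,S)$ (edges $(v,sv)$), the $(w,v)$ entry of $\sum_{g\in S}\mathrm{Ind}_H^G(1_{\mathbb{C}})(g)$ is exactly the number of edges joining $v$ and $w$, i.e.\ the adjacency matrix $A$; since $S$ is symmetric, $A$ is symmetric and every vertex has degree $|S|$, so $T=A/|S|=\tfrac{1}{|S|}\sum_{g\in S}\mathrm{Ind}_H^G(1_{\mathbb{C}})(g)$, which is Eq.~\eqref{eq:transition_matrix}. In particular $T$ is a real symmetric doubly stochastic matrix, hence its spectrum is real and contained in $[-1,1]$, and $T\bm{1}=\bm{1}$.

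Next I would pin down the eigenvalue-$1$ eigenspace. Writing $\rho\coloneqq\mathrm{Ind}_H^G(1_{\mathbb{C}})$ (a unitary representation), if $Tv=v$ then $\|v\|^2=\langle v,Tv\rangle=\tfrac{1}{|S|}\sum_{g\in S}\langle v,\rho(g)v\rangle\le\tfrac{1}{|S|}\sum_{g\in S}\|v\|\,\|\rho(g)v\|=\|v\|^2$, so equality holds in Cauchy--Schwarz for each term; since each $\rho(g)$ is unitary this forces $\rho(g)v=v$ for all $g\in S$, hence for all $g\in\langle S\rangle$. Now connectedness of $\Gamma(G,G/H,S)$ is equivalent to $\langle S\rangle$ acting transitively on $G/H$ (a path from $v$ to $w$ exhibits $w=gv$ with $g$ a product of elements of $S$, and conversely any such product yields a path), so the only vectors fixed by all of $\langle S\rangle$ are the constant ones; thus the eigenvalue-$1$ eigenspace of $T$ is exactly $\mathbb{C}\bm{1}$, of dimension one.

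For Eq.~\eqref{eq:fixed_point} I would observe that $P\coloneqq\tfrac{1}{|G|}\sum_{g\in G}\rho(g)$ is the standard group-averaging projector onto the $G$-invariant subspace of $\mathbb{C}[G/H]$ (indeed $P^\dagger=P=P^2$ and $\rho(h)P=P$ for all $h\in G$), which by Lemma~\ref{lem:containing_irreducibles} (Frobenius reciprocity applied to the trivial representation of $H$) has rank equal to the multiplicity of the trivial representation of $G$ in $\mathrm{Ind}_H^G(1_{\mathbb{C}})$, namely $1$; since $\bm{1}$ is manifestly $G$-invariant, $P=P_{\bm{1}}$ is the projector onto $\mathbb{C}\bm{1}$. (Alternatively, transitivity of the $G$-action on $G/H$ shows directly that the invariant vectors are the constants, equivalently $P_{\bm{1}}=\bm{1}\bm{1}^\dagger/|G/H|$.)

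Finally, for the spectral-gap identity Eq.~\eqref{eq:second_largest_expression}: since $T$ is symmetric with $T\bm{1}=\bm{1}$, it leaves both $\mathbb{C}\bm{1}$ and $(\mathbb{C}\bm{1})^\perp$ invariant, while $P_{\bm{1}}$ is the identity on $\mathbb{C}\bm{1}$ and $0$ on $(\mathbb{C}\bm{1})^\perp$. Hence $T-P_{\bm{1}}$ annihilates $\bm{1}$ and coincides with $T$ on $(\mathbb{C}\bm{1})^\perp$, so its spectrum is $\{0\}\cup\{\lambda_2,\dots,\lambda_{|\Gamma|}\}$ and $|T-P_{\bm{1}}|$ is positive semidefinite with largest eigenvalue $\max_{i\ge2}|\lambda_i|=\max\{\lambda_2,-\lambda_{|\Gamma|}\}=\lambda(\Gamma)$, the last step using $\lambda_2\ge\cdots\ge\lambda_{|\Gamma|}$. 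The supremum of $\zeta^\dagger|T-P_{\bm{1}}|\zeta$ over unit vectors $\zeta\in\mathbb{C}[G/H]$ is, by the variational characterization of the top eigenvalue of a positive semidefinite operator, precisely that largest eigenvalue, giving Eq.~\eqref{eq:second_largest_expression}. The only step requiring genuine care rather than bookkeeping is the translation between graph connectedness and transitivity of $\langle S\rangle$ on $G/H$, together with the resulting rigidity argument for the eigenvalue-$1$ eigenspace; everything else follows once the coset-basis identification of $\mathrm{Ind}_H^G(1_{\mathbb{C}})$ is in place.
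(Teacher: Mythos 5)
Your proof is correct and follows essentially the same route as the paper's: identify the transition matrix with the averaged induced permutation representation, use Frobenius reciprocity (Lemma~\ref{lem:containing_irreducibles}) to pin down the trivial-isotypic projector, and read off the spectral-gap formula from the fact that $T-P_{\bm{1}}$ is $T$ restricted to $\bm{1}^\perp$. You supply more of the bookkeeping than the paper does---in particular the Cauchy--Schwarz rigidity argument showing that connectedness (via transitivity of $\langle S\rangle$) forces the eigenvalue-$1$ eigenspace to be exactly $\mathbb{C}\bm{1}$, whereas the paper's proof briefly invokes the stronger-sounding claim ``connected $\Rightarrow\lambda(\Gamma)<1$'' (which would additionally require non-bipartiteness) when only uniqueness of the eigenvalue-$1$ eigenspace is actually needed.
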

\begin{proof}
    From the definition of the Schreier graph, the transition matrix is a uniform mixture of the permutation representation of $g\in S\subset G$ on $\mathbb{C}[G/H]$~\cite{Sabatini2022}. Thus, Eq.~\eqref{eq:transition_matrix} holds.
    Since $S$ is a symmetric subset, which implies that $\Gamma(G, G/H, S)$ is undirected, $T$ is a symmetric matrix. Furthermore, since $\Gamma(G, G/H, S)$ is connected, the second largest eigenvalue of $|T|$ is smaller than one (i.e., $\lambda(\Gamma(G,G/H,S))<1$). Since $\bm{1}$ is invariant under $\mathrm{Ind}_H^G(1_{\mathbb{C}})(g)$ for any $g\in G$, we have $T\bm{1}=\bm{1}$, meaning that the subspace spanned by $\bm{1}$ is the unique eigenspace of $T$ with the eigenvalue one. Since the right-hand side of Eq.~\eqref{eq:fixed_point} is the projection onto the trivial representation subspace of $G$ in ${\rm Ind}_{H}^{G}(1_{\mathbb{C}})$, and the trivial representation subspace has the multiplicity one from Lemma~\ref{lem:containing_irreducibles}, this subspace is spanned by $\bm{1}$ as well, which proves Eq.~\eqref{eq:fixed_point}. The last equation~\eqref{eq:second_largest_expression} directly follows from the definition of $\lambda(\Gamma)$ in Eq.~\eqref{eq:definition_lambda_graph}.
\end{proof}

The following lemma in Ref.~\cite{Sabatini2022} shows that a random construction of a Schreier graph ensures a good expansion property with a nonzero probability. 
\begin{lemma}[\cite{Sabatini2022}]\label{lem:random_construction}
    For any positive numbers $\delta$ and $\epsilon$, let $S=\{s_i\in G\}_{i\in{\cal I}}$ be a multiset of elements of $G$, where each element is picked up uniformly randomly from $G$ with $|{\cal I}| =  \left\lceil\frac{\log 4}{\epsilon^2}\log\left(\frac{2|G/H|}{\delta}\right)\right\rceil$.
    Then, we have
    \begin{equation}
        \mathrm{Pr}\bigl[\lambda\bigl(\Gamma(G, G/H, S\uplus S^{-1})\bigr) \geq \epsilon\bigr] \leq \delta,
    \end{equation}
    where $\uplus$ denotes the union of multisets, and $S^{-1}\coloneqq \{s^{-1}:s\in S\}$.
\end{lemma}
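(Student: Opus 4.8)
The plan is to recast $\lambda\bigl(\Gamma(G,G/H,S\uplus S^{-1})\bigr)$ as the operator norm of a normalized sum of i.i.d.\ bounded Hermitian matrices and then invoke an operator-valued Chernoff bound of Ahlswede--Winter type. Write $N\coloneqq|{\cal I}|$, $D\coloneqq|G/H|$, and $\pi\coloneqq\mathrm{Ind}_H^G(1_{\mathbb{C}})$, so that each $\pi(g)$ is a $D\times D$ permutation matrix with $\pi(g)^{\top}=\pi(g^{-1})$ and $\pi(g)\bm{1}=\bm{1}$. The transition matrix of $\Gamma(G,G/H,S\uplus S^{-1})$ is
\begin{equation*}
T=\frac{1}{2N}\sum_{i=1}^{N}\bigl(\pi(s_i)+\pi(s_i^{-1})\bigr)=\frac{1}{N}\sum_{i=1}^{N}M_i,\qquad M_i\coloneqq\tfrac12\bigl(\pi(s_i)+\pi(s_i)^{\top}\bigr),
\end{equation*}
and $P_{\bm{1}}=\frac{1}{|G|}\sum_{g\in G}\pi(g)$ is the orthogonal projection onto $\mathbb{C}\bm{1}$, since $\frac{1}{|G|}\sum_{g}\pi(g)$ projects onto the $G$-invariant subspace while the trivial subrepresentation of $\mathrm{Ind}_H^G(1_{\mathbb{C}})$ has multiplicity one by Lemma~\ref{lem:containing_irreducibles} (cf.\ Lemma~\ref{lem:transition_Schreier_graph}).

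First I would reduce the statement to a concentration bound for $\|T-P_{\bm{1}}\|_{\mathrm{op}}$. The Hermitian matrix $T-P_{\bm{1}}$ annihilates $\bm{1}$ and agrees with $T$ on $W\coloneqq\bm{1}^{\perp}$, so $\|T-P_{\bm{1}}\|_{\mathrm{op}}=\max_{i\ge2}|\lambda_i(T)|=\lambda\bigl(\Gamma(G,G/H,S\uplus S^{-1})\bigr)$; this identity holds regardless of whether the graph is connected, so connectivity need not be assumed beforehand (and we may take $\epsilon\le1$, the claim being vacuous otherwise since $\lambda\le1$). Because the $s_i$ are i.i.d.\ uniform on $G$ and $g\mapsto g^{-1}$ is a bijection of $G$, the $M_i$ are i.i.d.\ Hermitian with $\|M_i\|_{\mathrm{op}}\le1$ and $\mathbb{E}[M_i]=\frac{1}{|G|}\sum_{g\in G}\pi(g)=P_{\bm{1}}$; since moreover $M_i\bm{1}=\bm{1}=P_{\bm{1}}\bm{1}$, the restrictions $X_i\coloneqq(M_i-P_{\bm{1}})\!\restriction_W$ are i.i.d.\ Hermitian, mean-zero, of operator norm $\le1$, on the $(D-1)$-dimensional space $W$, with $\frac{1}{N}\sum_iX_i=(T-P_{\bm{1}})\!\restriction_W$. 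It therefore suffices to prove $\mathrm{Pr}\bigl[\,\|\tfrac{1}{N}\sum_iX_i\|_{\mathrm{op}}\ge\epsilon\,\bigr]\le\delta$.

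For the concentration step, a convenient route is to pass to the $2(D-1)$-dimensional Hermitian dilations $Z_i\coloneqq\bigl(\begin{smallmatrix}0&A_i\\A_i^{\top}&0\end{smallmatrix}\bigr)$ with $A_i\coloneqq\pi(s_i)\!\restriction_W$. Since $A_i$ is orthogonal, $Z_i^2=I$, so $\Pi_i\coloneqq\frac12(Z_i+I)$ is a random projection with $\mathbb{E}[\Pi_i]=\frac12 I$ (because $\mathbb{E}[A_i]=P_{\bm{1}}\!\restriction_W=0$), and
\begin{equation*}
\|T-P_{\bm{1}}\|_{\mathrm{op}}\;\le\;\Bigl\|\tfrac{1}{N}\textstyle\sum_iA_i\Bigr\|_{\mathrm{op}}\;=\;2\,\lambda_{\max}\!\Bigl(\tfrac{1}{N}\textstyle\sum_i\Pi_i\Bigr)-1 ,
\end{equation*}
where the inequality is $\|T\!\restriction_W\|_{\mathrm{op}}=\bigl\|\tfrac12(\mathcal P+\mathcal P^{\top})\bigr\|_{\mathrm{op}}\le\|\mathcal P\|_{\mathrm{op}}$ for $\mathcal P\coloneqq\tfrac{1}{N}\sum_iA_i$. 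For i.i.d.\ projections $\Pi_i$ with $\mathbb{E}[\Pi_i]=\tfrac12 I$ one has $\mathbb{E}\,e^{\theta\Pi_i}=\tfrac12(1+e^{\theta})I$, so the Ahlswede--Winter argument yields $\mathrm{Pr}[\lambda_{\max}(\tfrac{1}{N}\sum_i\Pi_i)\ge t]\le2(D-1)\exp(-N\,D_{\mathrm{KL}}(t\,\|\,\tfrac12))$; taking $t=\tfrac{1+\epsilon}{2}$ and using $D_{\mathrm{KL}}(\tfrac{1+\epsilon}{2}\|\tfrac12)\ge\epsilon^2/\log4$ gives $\mathrm{Pr}[\|T-P_{\bm{1}}\|_{\mathrm{op}}\ge\epsilon]\le2(D-1)\exp(-N\epsilon^2/\log4)$ for $0<\epsilon\le1$. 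With $2(D-1)\le2D$ and $N=\bigl\lceil\frac{\log4}{\epsilon^2}\log\frac{2D}{\delta}\bigr\rceil$, this is at most $\delta$, as claimed.

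The first two steps are routine bookkeeping; the content is in the last one, and the one delicate point there is that the stated constant $\log4$ in the size of $N$ requires the sharp (binary Kullback--Leibler) form of the Ahlswede--Winter inequality rather than its cruder multiplicative version --- the overall shape $2|G/H|\exp\bigl(-\Theta(N\epsilon^2)\bigr)$ of the tail is already forced by the reduction. A secondary point, built into the argument above, is that identifying $\lambda(\Gamma)$ with $\|T-P_{\bm{1}}\|_{\mathrm{op}}$ directly is what lets one apply the concentration bound without any circular appeal to the connectivity hypothesis of Lemma~\ref{lem:transition_Schreier_graph}.
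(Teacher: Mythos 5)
The paper does not actually prove this lemma --- it is imported from \cite{Sabatini2022} --- so there is no in-paper argument to compare against, and I will instead assess your reconstruction on its own terms. Your overall route is the right and standard one for Alon--Roichman-type statements: identify $\lambda(\Gamma)$ with $\|T-P_{\bm 1}\|_{\mathrm{op}}$, observe that the $M_i$ are i.i.d.\ Hermitian with $\|M_i\|\le 1$ and $\mathbb{E}[M_i]=P_{\bm 1}$, restrict to $\bm 1^{\perp}$, and apply an Ahlswede--Winter/matrix-Chernoff tail. The dilation to random projections $\Pi_i$ with $\mathbb{E}[\Pi_i]=\tfrac12 I$ is a clean way to extract the binary KL exponent, and all of that bookkeeping is correct.

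The gap is in the final numerical step. You invoke $D_{\mathrm{KL}}\bigl(\tfrac{1+\epsilon}{2}\,\big\|\,\tfrac12\bigr)\ge \epsilon^2/\log 4$, but with the paper's convention $\log=\ln$ (which is forced, e.g., by $|{\cal T}_P|\le e^{nH(X)_P}$ and by $\log e=1$ in the proof of Lemma~\ref{lem:codebook_required}), this inequality is false. A Taylor expansion gives
\begin{equation*}
D_{\mathrm{KL}}\Bigl(\tfrac{1+\epsilon}{2}\,\Big\|\,\tfrac12\Bigr)=\tfrac{1+\epsilon}{2}\ln(1+\epsilon)+\tfrac{1-\epsilon}{2}\ln(1-\epsilon)=\tfrac{\epsilon^2}{2}+O(\epsilon^4),
\end{equation*}
and this $\tfrac12$ is the sharp Pinsker constant, whereas $1/\ln 4\approx 0.721>\tfrac12$. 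Concretely, at $\epsilon=0.1$ the left side is $\approx 5.0\times 10^{-3}$ while $\epsilon^2/\ln4\approx 7.2\times 10^{-3}$; in fact the inequality fails for every $\epsilon\in(0,1]$ (e.g.\ at $\epsilon=1$, $\ln 2\approx 0.693<1/\ln 4\approx 0.721$). Consequently the matrix-Chernoff route you take only yields $N\ge\frac{2}{\epsilon^2}\ln\frac{2|G/H|}{\delta}$, which is \emph{larger} than the lemma's $\bigl\lceil\frac{\ln4}{\epsilon^2}\ln\frac{2|G/H|}{\delta}\bigr\rceil$; plugging the lemma's $N$ into the true tail $2(D-1)e^{-ND_{\mathrm{KL}}}$ gives a bound of order $(2D)^{1-\ln 2}\delta^{\ln 2}$, which is not $\le\delta$ once $2D>1$. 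So, as written, your last step does not close the argument.

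A plausible resolution is that \cite{Sabatini2022} states the sample size with $\log$ in base~$2$, where $\log_2 4=2$ turns your inequality into exactly Pinsker and the whole chain works; but under the paper's natural-log convention the constant $\ln 4$ is too small for the Chernoff/Pinsker route. You should either prove the lemma with the safe constant $2/\epsilon^2$ in place of $\ln4/\epsilon^2$ (which suffices for everything downstream in the paper, since only the order $\Theta(\epsilon^{-2}\log(\cdot))$ matters), or make the base of $\log$ explicit and verify it matches \cite{Sabatini2022}. Everything before the final inequality is sound.
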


\section{Improved error exponent for universal classical-quantum channel coding} \label{sec:universal_c-q}
The universal coding for a c-q channel is first constructed in Ref.~\cite{Hayashi2009}.  A c-q channel $W$ is defined as a map ${\cal X}\ni x\mapsto W(x)\in{\cal D}({\cal H}_B)$. For any probability distribution $P$ over ${\cal X}$, we define a c-q state $W^P$ as 
\begin{equation}
    W^P \coloneqq \sum_{x\in{\cal X}}P(x)\ketbra{x}{x}_X\otimes W(x)
\end{equation}
Consider a channel code ${\cal C}_n$ for this c-q channel, where ${\cal C}_n$ is a triple $({\cal M}'_n,\phi_n,Y'_n)$ of a message set ${\cal M}'_n$, an encoder $\phi_n:{\cal M}'_n\to{\cal X}^n$, and a decoder $Y'_n:{\cal D}({\cal H}^{\otimes n})\to{\cal M}'_n$. Since the message set can be identified with the codebook $\phi_n({\cal M}'_n)\coloneqq \{\phi_n(m):m\in{\cal M}'_n\}$, we can redefine the code ${\cal C}_n$ as a pair $({\cal M}_n, Y_n)$ of a codebook ${\cal M}_n\subseteq{\cal X}^n$ and the decoding POVM $Y_n\coloneqq \{Y_n(\bm{x})\}_{\bm{x}\in{\cal M}_n}$. Hereafter, we use this notation for simplicity. The average decoding error $P_{\rm err}({\cal C}_n,W)$ for this code ${\cal C}_n$ over a channel $W$ is given by
\begin{equation}
    P_{\rm err}({\cal C}_n,W) \coloneqq \frac{1}{|{\cal M}_n|}\sum_{\bm{x}\in{\cal M}_n}\tr\left[W^{\otimes n}(\bm{x})(I-Y_n(\bm{x}))\right]
\end{equation}
Reference~\cite{Hayashi2009} showed that there exists a sequence of codes $\{{\cal C}_n\}_n$ with the asymptotic rate $R$ and the fixed type $P$, constructed without the knowledge of the c-q channel $W$, such that the decoding error $P_{\rm err}({\cal C}_n,W)$ satisfies
\begin{equation}
    \lim_{n\to \infty}-\frac{1}{n}\log P_{\rm err}({\cal C}_n,W) \geq \max_{\alpha\in[0,1]}\frac{\alpha}{1+\alpha}[I_{1-\alpha}(X:B)_{W^P} - R].
\end{equation}
Here, we prove a slightly improved error exponent as follows.

\begin{theorem} \label{theo:new_error_exponent}
    For any distribution $P$ over the input alphabets ${\cal X}$ and any real number $R$, there exists a sequence $\{{\cal C}_n\}_{n\in\mathbb{N}}$ of codes, where each ${\cal C}_n$ is a pair $({\cal M}_n,Y_n)$ of a codebook ${\cal M}_n$ and a decoder $Y_n$, such that
    \begin{equation}
        \lim_{n\to \infty}-\frac{1}{n}\log P_{\rm err}({\cal C}_n,W) \geq \max_{\alpha\in[0,1]} \alpha [I_{1-\alpha}(X:B)_{W^P}  - R], \label{eq:improved_error_exponent}
    \end{equation}
    and 
    \begin{equation}
        \lim_{n\to \infty} \frac{1}{n}\log|{\cal M}_n| = R, \label{eq:asymptotic_code_rate}
    \end{equation}
    for any c-q channel $W$. 
    The codes $\{{\cal C}_n\}_{n\in\mathbb{N}}$ do not depend on the channel $W$ and depend only on the probability distribution $P$, the rate $R$, and the dimension $d_B$ of the channel output.
    Thus, any rate $R < I(X:B)_{W^P}$ is universally achievable.
\end{theorem}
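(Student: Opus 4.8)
\emph{Proof strategy.} The plan is a random constant-composition coding argument, fitted with a channel-independent decoder that is sharper than the one of Ref.~\cite{Hayashi2009}, and closed by a derandomization step that isolates a purely channel-independent structural property of the codebook. Concretely, fix $N=\lceil e^{nR}\rceil$ and draw the codewords $\bm X_1,\dots,\bm X_N$ independently and uniformly from the type class ${\cal T}_P$. Every operator built below uses only the codewords together with the universal symmetric operators of Sec.~\ref{sec:Schur-Weyl}, so the resulting code will be manifestly independent of $W$, and the asymptotic rate~\eqref{eq:asymptotic_code_rate} is immediate from $|{\cal M}_n|=N$.

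For the decoder, exploit the channel-independent, mutually commuting operators $\sigma_{\bm x}$ of Eq.~\eqref{eq:conditional_universal} and $\sigma_{U,P}$ of Eq.~\eqref{eq:def_universal_type_symmetric}, the pointwise domination $\rho_{B^n}^{\bm x}\le(n+1)^{\frac{(d_B+2)(d_B-1)}{2}|{\cal X}|}\,\sigma_{\bm x}$ of Eq.~\eqref{eq:sequence_dependent_bound} (with $\rho_{B^n}^{\bm x}$ commuting with $\sigma_{\bm x}$, and $\sigma_{\bm x},\sigma_{U,P}$ commuting with $U^{\otimes n}$ and with each other), together with its codebook average $\frac{1}{|{\cal T}_P|}\sum_{\bm x\in{\cal T}_P}\rho_{B^n}^{\bm x}\le(n+1)^{\frac{(d_B+2)(d_B-1)}{2}|{\cal X}|}\,\sigma_{U,P}$ obtained by averaging Eq.~\eqref{eq:sequence_dependent_bound}. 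Build projections $M_m$ by an appropriate threshold test between $\sigma_{\bm x_m}$ and $\sigma_{U,P}$ (refined, if needed, by the universal-typicality projectors associated with the Schur--Weyl blocks $\bm n$), and set the decoding POVM to $Y_n(\bm x_m)\coloneqq\big(\sum_{m'}M_{m'}\big)^{-1/2}M_m\big(\sum_{m'}M_{m'}\big)^{-1/2}$. The improvement over Ref.~\cite{Hayashi2009}---the replacement of the prefactor $\alpha/(1+\alpha)$ by $\alpha$, equivalently, after the Gallager reparametrization $\alpha=\rho/(1+\rho)$, the passage from denominator $1+2\rho$ to $1+\rho$ in $\max_{\rho\ge 0}\frac{1}{1+\rho}\big[E_0(\rho,P)-\rho R\big]$ with $E_0(\rho,P)=-\log\tr\big[\big(\sum_{x}P(x)W(x)^{1/(1+\rho)}\big)^{1+\rho}\big]$---is obtained by employing a sharper decoder: replacing the crude Hayashi--Nagaoka bound $I-(S+T)^{-1/2}S(S+T)^{-1/2}\le 2(I-S)+4T$ by its one-parameter refinement $\le(1+c)(I-S)+(2+c+c^{-1})T$ and optimizing $c$ together with the threshold, using the block-diagonal structure just mentioned.

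The error analysis then splits $P_{\rm err}({\cal C}_n,W)=\frac1N\sum_m\tr[\rho_{B^n}^{\bm x_m}(I-Y_n(\bm x_m))]$ into a ``missed-codeword'' contribution controlled by $\frac1N\sum_m\tr[\rho_{B^n}^{\bm x_m}(I-M_m)]$ and a ``false-alarm'' contribution controlled by $\frac1N\sum_m\sum_{m'\ne m}\tr[\rho_{B^n}^{\bm x_m}M_{m'}]$. The first is shown to vanish faster than any polynomial, \emph{uniformly in} $W$, using the universal-symmetric-state machinery behind Eqs.~\eqref{eq:bound_universal_symmetric} and~\eqref{eq:quantum_coefficient_bound} (the concentration of $\rho_{B^n}^{\bm x}$ onto the blocks on which $\sigma_{\bm x}$ is essentially flat), so it does not affect the exponent. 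For the second, take the expectation over the random codebook: by independence each pair contributes $\tr\big[\big(\tfrac{1}{|{\cal T}_P|}\sum_{\bm x}\rho_{B^n}^{\bm x}\big)\big(\tfrac{1}{|{\cal T}_P|}\sum_{\bm x'}M_{\bm x'}\big)\big]$, with both factors permutation invariant; bounding the second factor by a power of $\sigma_{U,P}$ fixed by the threshold, invoking the quantum Sibson identity~\eqref{eq:Sibson_identity}, and using the monotonicity and continuity of $\alpha\mapsto I_\alpha(X:B)_{W^P}$ to optimize, then folding in the factor $N-1\le e^{nR}$ from the sum over $m'$, yields the exponent $\max_{\alpha\in[0,1]}\alpha[I_{1-\alpha}(X:B)_{W^P}-R]$, i.e.\ Eq.~\eqref{eq:improved_error_exponent}.

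Finally, derandomize. The channel dependence lives only in the false-alarm contribution, whose bound factors through the channel-independent codebook statistic $\frac1N\sum_m\sigma_{\bm x_m}$, whose mean is $\sigma_{U,P}$; hence a random codebook satisfies, with probability tending to $1$, the channel-independent property $\frac1N\sum_m\sigma_{\bm x_m}\le(n+1)^{c_0}\,\sigma_{U,P}$ for a suitable $c_0$ (together with the combinatorial condition making the missed term negligible), and fixing one such codebook makes the deterministic bound hold for \emph{every} $W$ at once. Since $I_{1-\alpha}(X:B)_{W^P}\to I(X:B)_{W^P}$ as $\alpha\to0^+$, for $R<I(X:B)_{W^P}$ the maximand in~\eqref{eq:improved_error_exponent} is strictly positive for all small $\alpha>0$, giving universal achievability of every rate below $I(X:B)_{W^P}$. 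The principal obstacle is exactly this last step together with the uniform-in-$W$ control of the missed term: one must extract a genuinely channel-independent property of the codebook that suffices against all channels simultaneously, and one must choose the threshold and the Hayashi--Nagaoka parameter $c$ so that the Sibson optimization delivers precisely the prefactor $\alpha$ rather than the weaker $\alpha/(1+\alpha)$---pinning down that sharp constant is where the refined operator inequality and the commutation structure have to be used with care.
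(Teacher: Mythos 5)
Your broad outline --- constant-composition random coding, a decoder built only from the universal operators $\sigma_{\bm x}$ and $\sigma_{U,P}$, and derandomization to a channel-independent codebook property --- matches the paper's strategy, and your derandomization intuition (the needed property is essentially a bound of the form $\frac{1}{N}\sum_m \sigma_{\bm x_m}\lesssim \sigma_{U,P}$, realized in the paper as the combinatorial $(P,R,\delta)$-good codebook condition of Definitions~\ref{def:good_set}--\ref{def:P-R_good} plus Lemmas~\ref{lem:code_book_choice}--\ref{lem:reduction_to_iid}) is close in spirit. However, the key technical claim of your proposal --- that the improvement from the prefactor $\alpha/(1+\alpha)$ to $\alpha$ is obtained by replacing Hayashi--Nagaoka with its one-parameter refinement $I-(S+T)^{-1/2}S(S+T)^{-1/2}\le(1+c)(I-S)+(2+c+c^{-1})T$ and optimizing $c$ --- is incorrect, and this is precisely the step where the improved exponent must be won.

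Any Hayashi--Nagaoka--type bound yields $P_{\rm err}\le f(c)\cdot\mathrm{(type\ I)}+g(c)\cdot N\cdot\mathrm{(type\ II)}$ with $f,g$ independent of $n$ once $c$ is fixed. Writing the two error terms as $C_n^{\alpha}e^{-n\alpha I_{1-\alpha}}$ and $C_n^{-1}$ (as in Eqs.~\eqref{eq:first_term_bound} and \eqref{eq:constant_bound}) and optimizing over the threshold $C_n$ gives $e^{-\frac{\alpha}{1+\alpha}n(I_{1-\alpha}-R)}$ regardless of $c$; letting $c$ scale with $n$ as $e^{n\gamma}$ only makes things worse, since the first term carries $\max(1,c)$ and the second $\max(c,c^{-1})$, so both are minimized near $c=1$. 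In other words, tuning $c$ only affects prefactors, never the exponent, and the type-I/type-II split inherent to Hayashi--Nagaoka structurally caps the exponent at $\alpha/(1+\alpha)$. Your ancillary claim that the missed-codeword term decays faster than any polynomial uniformly in $W$ is also not correct for the same reason: it decays exponentially at a rate coupled to the threshold, and balancing this rate against the collision term is exactly where the factor $1/(1+\alpha)$ comes from.

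What the paper actually does is abandon Hayashi--Nagaoka (and the threshold projector decoder $\tilde Y_n$ of Eq.~\eqref{eq:previous_universal_decoder}) for Theorem~\ref{theo:new_error_exponent} and instead use the operator-division (pretty-good-measurement-style) decoder of Eq.~\eqref{eq:decoding_POVM}, $Y_n(\bm x)=\frac{\sigma_{\bm x}}{\sum_{\bm y\in\mathcal M_n}\sigma_{\bm y}}$. The analysis then never separates type-I from type-II: it uses $\frac{A}{A+B}\le\frac{A}{B}$, the bound $T\le T^{\alpha}$ for a POVM element, operator monotonicity and operator concavity of $t\mapsto t^{\alpha}$, averaging over the isotropy group $S_{\bm x}$ combined with Lemma~\ref{lem:reduction_to_iid}, and finally the permutation symmetry trick plus Lemma~2 of Ref.~\cite{Hayashi2009} and the Sibson identity, to produce directly the bound in Eq.~\eqref{eq:Theorem_statement_precise}. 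The $\alpha$-power enters once, multiplicatively, on the whole decoding-error operator, rather than being smeared across a type-I/type-II balance. This is the ingredient your proposal is missing; a refined Hayashi--Nagaoka constant cannot substitute for it.
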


For the sequence of codes $\{{\cal C}_n\}$ that achieves the above error exponent, we use the same encoding strategy given in Ref.~\cite{Hayashi2009}, which requires a codebook ${\cal M}_n$ to satisfy a certain condition. To introduce the condition, we first define the following.
\begin{definition}[$(P, R, \delta)$-good set for $\bm{x}$] \label{def:good_set}
    For a type $P\in{\cal P}_n$, a positive real number $R<H(X)_P$, and a real number $\delta>0$, a set ${\cal L}_n$ of length-$n$ sequences is called $(P, R, \delta)$-good set for a length-$n$ sequence $\bm{x}$ if every element of ${\cal L}_n$ is in ${\cal T}_P$, $|{\cal L}_n|=\exp[nR-\delta]$, and
    \begin{equation}
        |{\cal T}_{\bm{V}}(\bm{x})\cap{\cal L}_n|\leq |{\cal T}_{\bm{V}}(\bm{x})|e^{-n(H(X)_P - R)},
    \end{equation}
    for any $\bm{V}\in{\cal V}(\bm{x},{\cal X})$, where ${\cal T}_{\bm{V}}(\bm{x})$ is defined in Eq.~\eqref{eq:sequence_compatible_with_conditional_type}.
\end{definition}
The above definition implies that ${\cal L}_n$ does not contain $\bm{x}$ to be $(P, R, \delta)$-good set for $\bm{x}$ since otherwise, the condition is violated for $\bm{V}=\mathrm{id}$.  
The codebook ${\cal M}_n$ in Ref.~\cite{Hayashi2009} is chosen so that for any codeword $\bm{x}\in{\cal M}_n$, ${\cal M}_n\setminus\{\bm{x}\}$ is $(P, R, \delta)$-good set for $\bm{x}$.
\begin{definition}[$(P, R, \delta)$-good codebook] \label{def:P-R_good}
    A set ${\cal M}_n$ of strings is said to be the $(P, R, \delta)$-good codebook if, for any $\bm{x}\in{\cal M}_n$, the set ${\cal M}_n\setminus\{\bm{x}\}$ is a $(P, R, \delta)$-good set for $\bm{x}$.
\end{definition} 
The existence of a sequence of codebooks that satisfy $(P, R, \delta(n))$-good codebook is shown in the literature~(Lemma~10.1 of \cite{Csiszar2015}, \cite{Hayashi2009, Hayashi2019}), and thus we state it without a proof.
\begin{lemma} \label{lem:code_book_choice}
    For a type $P$ and a positive real number $R$, there exists a sequence $\{{\cal M}_n\}_{n \geq N}$ of codebooks for a sufficiently large $N$ such that every ${\cal M}_n$ is a $(P, R, \delta_1(n))$-good codebook, where $\delta_1(n)=\Omega(\log n^{|{\cal X}|^2})$.
\end{lemma}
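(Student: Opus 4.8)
The plan is to build the codebook $\mathcal{M}_n$ greedily while tracking, for each conditional type $\bm{V}\in{\cal V}(\bm{x},{\cal X})$ and each already-selected codeword $\bm{x}$, how many remaining candidates in ${\cal T}_P$ fall into ${\cal T}_{\bm{V}}(\bm{x})$. The first step is a counting lemma: for $\bm{x}\in{\cal T}_P$ and $\bm{V}\in{\cal V}(\bm{x},{\cal X})$, the cardinality $|{\cal T}_{\bm{V}}(\bm{x})|$ depends only on $P$ and $\bm{V}$ (by permutation symmetry it is the number of ways to distribute the conditional-type constraints within the positions of each input symbol), so $\sum_{\bm{y}\in{\cal T}_P}|\{\bm{V}\in{\cal V}(\bm{x},{\cal X}):\bm{y}\in{\cal T}_{\bm{V}}(\bm{x})\}|$ is controlled, and in particular for any fixed $\bm{x}$ we have $\sum_{\bm{V}}|{\cal T}_{\bm{V}}(\bm{x})\cap{\cal T}_P|=|{\cal T}_P|$ with the number of summands at most $(n+1)^{|{\cal X}|^2}$ by Eq.~\eqref{eq:cardinality_conditonal_type}.

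Next I would run the greedy/expurgation argument. Set $M\coloneqq\lceil e^{nR-\delta_1(n)}\rceil$ with $\delta_1(n)$ to be fixed at the end. Pick codewords $\bm{x}_1,\bm{x}_2,\dots$ from ${\cal T}_P$ one at a time; when choosing $\bm{x}_{k+1}$, declare a candidate $\bm{y}\in{\cal T}_P\setminus\{\bm{x}_1,\dots,\bm{x}_k\}$ "bad for $\bm{x}_i$" if already $|{\cal T}_{\bm{V}_{\bm{y}|\bm{x}_i}}(\bm{x}_i)\cap\{\bm{x}_1,\dots,\bm{x}_k,\bm{y}\}|>|{\cal T}_{\bm{V}_{\bm{y}|\bm{x}_i}}(\bm{x}_i)|e^{-n(H(X)_P-R)}$; equivalently, count how many of $\bm{x}_1,\dots,\bm{x}_k$ lie in each conditional-type shell around each $\bm{x}_i$ and around $\bm{y}$. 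A standard double-counting shows that the number of candidates that would violate the $(P,R,\delta)$-good condition for some pair is at most $k\cdot(n+1)^{|{\cal X}|^2}\cdot$(fraction), and since $k\le M\le e^{nR}$ while each shell ${\cal T}_{\bm{V}}(\bm{x})$ that is "dangerous" must have size at least $e^{n(H(X)_P-R)}$ (smaller shells can never overflow the threshold $|{\cal T}_{\bm{V}}(\bm{x})|e^{-n(H(X)_P-R)}\ge 1$), the fraction of ${\cal T}_P$ that is forbidden is at most $e^{nR}\cdot(n+1)^{|{\cal X}|^2}\cdot e^{-nH(X)_P}/|{\cal T}_P|\le (n+1)^{|{\cal X}|}(n+1)^{|{\cal X}|^2}e^{nR-nH(X)_P}\cdot e^{-nR+\dots}$, which is $<1$ once $\delta_1(n)=\Omega(\log n^{|{\cal X}|^2})$ by Eq.~\eqref{eq:type_cardinality_bound}. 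Hence a valid $(k{+}1)$-st codeword always exists, and the process runs for $M$ steps.

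Finally, I would argue symmetry of the good-set condition: after the full selection, for every $\bm{x}=\bm{x}_j\in{\cal M}_n$ the set ${\cal M}_n\setminus\{\bm{x}\}$ satisfies $|{\cal T}_{\bm{V}}(\bm{x})\cap{\cal M}_n|\le|{\cal T}_{\bm{V}}(\bm{x})|e^{-n(H(X)_P-R)}$ for all $\bm{V}$ — this is automatic for $j$ chosen late (the threshold was enforced at each step), but for early-chosen $\bm{x}_j$ one either re-runs the bound symmetrically (the double-counting above is symmetric in the pair, so it simultaneously protects all earlier codewords) or performs a final expurgation deleting a vanishing fraction, which only changes $\delta_1(n)$ by an additive $O(\log n)$. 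The resulting ${\cal M}_n$ is a $(P,R,\delta_1(n))$-good codebook, and $\delta_1(n)=\Omega(\log n^{|{\cal X}|^2})$ as claimed. I expect the main obstacle to be bookkeeping the double-counting cleanly so that it is manifestly symmetric in the codeword pair and so that only "large" conditional-type shells (those of size $\ge e^{n(H(X)_P-R)}$) contribute to the forbidden set — small shells are harmless but must be explicitly discarded from the union bound, and getting the exponent of $(n+1)$ right in $\delta_1(n)$ requires combining Eqs.~\eqref{eq:cardinality_conditonal_type} and~\eqref{eq:type_cardinality_bound} carefully.
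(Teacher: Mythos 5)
The paper gives no proof of this lemma — it cites Lemma~10.1 of Ref.~\cite{Csiszar2015} and Refs.~\cite{Hayashi2009, Hayashi2019}, where the argument is a \emph{random} selection of codewords from $\mathcal{T}_P$, Markov's inequality, a union bound over the at most $(n+1)^{|\mathcal{X}|^2}$ conditional types (Eq.~\eqref{eq:cardinality_conditonal_type}), and an expurgation of the codewords violating the shell condition. Your greedy construction is a genuinely different route, and as written it has two gaps. First, the claim that small shells $\mathcal{T}_{\bm{V}}(\bm{x})$ are ``harmless'' is reversed: if $|\mathcal{T}_{\bm{V}}(\bm{x})| < e^{n(H(X)_P - R)}$ then the allowed count $|\mathcal{T}_{\bm{V}}(\bm{x})| e^{-n(H(X)_P - R)}$ is strictly less than $1$, so that shell may contain \emph{no} other codeword at all; small shells are the most restrictive constraints, not the least, and they cannot simply be ``discarded from the union bound.''

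Second, and more seriously, the greedy forbidden-set count does not close at $\delta = \Theta(\log n)$. At step $k+1$ you must avoid every conditional-type class around every already-chosen $\bm{x}_i$ that is already at its allowed count. A class $\bm{V}$ is full once its count $c_{\bm V}$ satisfies $c_{\bm V} + 1 > |\mathcal{T}_{\bm{V}}(\bm{x}_i)| e^{-n(H(X)_P - R)}$, i.e., $|\mathcal{T}_{\bm{V}}(\bm{x}_i)| < (c_{\bm V}+1)e^{n(H(X)_P - R)}$. Summing the sizes of full classes around a single $\bm{x}_i$ (using $\sum_{\bm V} c_{\bm V} \le k-1$ and at most $(n+1)^{|\mathcal{X}|^2}$ classes) gives a forbidden set of size up to $\bigl(k - 1 + (n+1)^{|\mathcal{X}|^2}\bigr)e^{n(H(X)_P - R)}$, and summing over the $k$ already-chosen codewords gives a cumulative forbidden set of order $k^2 e^{n(H(X)_P - R)} \approx e^{n(H(X)_P + R) - 2\delta}$, which can exceed $|\mathcal{T}_P| \approx e^{nH(X)_P}$ unless $\delta \gtrsim nR/2$ — far beyond the claimed $\Omega(\log n)$. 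So the greedy procedure can in principle run out of eligible candidates well before reaching $M$ codewords. Your closing remark about ``performing a final expurgation'' is the right instinct, but the expurgation must carry the whole argument rather than patch the greedy one: draw $M = 2e^{nR - \delta}$ codewords uniformly from $\mathcal{T}_P$, note $\mathbb{E}\bigl[|\mathcal{T}_{\bm{V}}(\bm{x}_i) \cap \mathcal{M}_n \setminus\{\bm{x}_i\}|\bigr] = (M-1)|\mathcal{T}_{\bm{V}}(\bm{x}_i)|/|\mathcal{T}_P|$, apply Markov and a union bound over the $(n+1)^{|\mathcal{X}|^2}$ conditional types together with Eq.~\eqref{eq:type_cardinality_bound}, conclude that the expected number of codewords whose shell condition fails is at most $M/2$ once $\delta > (|\mathcal{X}|^2 + |\mathcal{X}|)\log(n+1) + \log 2$, and remove those; since deleting codewords only decreases shell counts, any subset of the survivors of the required cardinality is a $(P, R, \delta_1(n))$-good codebook with $\delta_1(n) = \Theta(|\mathcal{X}|^2 \log n)$.
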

The condition of a $(P, R, \delta)$-good set is transformed into a more useful form by Hayashi~\cite{Hayashi2009}.
\begin{lemma} \label{lem:reduction_to_iid}
    Let ${\cal L}_n$ be a $(P, R, \delta)$-good set for $\bm{x}$.  Let $p_{{\cal L}_n}$ be a probability distribution defined as
    \begin{equation}
        p_{{\cal L}_n}(\bm{y})\coloneqq \begin{cases}
            |{\cal L}_n|^{-1} & \bm{y}\in {\cal L}_n, \\
            0 & \textrm{Otherwise}.
        \end{cases} \label{eq:p_M}
    \end{equation}
    Let $S_{\bm{x}}$ be a subgroup of $S_n$ that leaves $\bm{x}$ invariant.
    Then, for any $\bm{y}\in{\cal T}_P\setminus\{\bm{x}\}$, we have
    \begin{equation}
        \frac{1}{|S_{\bm{x}}|}\sum_{s\in S_{\bm{x}}} p_{{\cal L}_n}(s(\bm{y})) \leq  e^{-nH(X)_P + \delta}.
    \end{equation}
\end{lemma}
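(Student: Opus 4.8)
The plan is to unfold the definition of a $(P,R,\delta)$-good set for $\bm{x}$ and rewrite the left-hand side $\frac{1}{|S_{\bm{x}}|}\sum_{s\in S_{\bm{x}}} p_{{\cal L}_n}(s(\bm{y}))$ as a sum over the orbit of $\bm{y}$ under $S_{\bm{x}}$. The crucial observation is that the orbit $S_{\bm{x}}(\bm{y}) := \{s(\bm{y}) : s\in S_{\bm{x}}\}$ is contained in a single set ${\cal T}_{\bm{V}}(\bm{x})$ for the conditional type $\bm{V} = \bm{V}_{\bm{y}|\bm{x}}$. Indeed, applying a permutation $s$ that fixes $\bm{x}$ to $\bm{y}$ does not change the joint type $P_{\bm{x},s(\bm{y})} = P_{\bm{x},\bm{y}}$ (since the multiset of pairs $\{(x_i,y_i)\}$ is permuted but unchanged, and $s$ acts trivially on the $\bm{x}$-component), hence the conditional type is preserved; moreover, every element of ${\cal T}_{\bm{V}}(\bm{x})$ (restricted to the positions where $\bm{x}$ takes a given symbol) is reached by some such $s$ because $S_{\bm{x}}$ acts as the full product of symmetric groups on the positions grouped by the value of $\bm{x}$. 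So $S_{\bm{x}}(\bm{y})$ equals ${\cal T}_{\bm{V}}(\bm{x})$ when $\bm{y}\in{\cal T}_P$.

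Next I would count: since $p_{{\cal L}_n}$ is supported on ${\cal L}_n$ with constant value $|{\cal L}_n|^{-1}$, the number of orbit elements landing in ${\cal L}_n$ is exactly $|S_{\bm{x}}(\bm{y})\cap{\cal L}_n| = |{\cal T}_{\bm{V}}(\bm{x})\cap{\cal L}_n|$. Each such $s(\bm{y})$ is hit by $|S_{\bm{x}}| / |S_{\bm{x}}(\bm{y})| = |S_{\bm{x}}|/|{\cal T}_{\bm{V}}(\bm{x})|$ distinct permutations $s\in S_{\bm{x}}$ (orbit-stabilizer theorem applied to the action of $S_{\bm{x}}$ on its orbit). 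Therefore
\begin{equation}
    \frac{1}{|S_{\bm{x}}|}\sum_{s\in S_{\bm{x}}} p_{{\cal L}_n}(s(\bm{y})) = \frac{1}{|S_{\bm{x}}|}\cdot\frac{|S_{\bm{x}}|}{|{\cal T}_{\bm{V}}(\bm{x})|}\cdot\frac{|{\cal T}_{\bm{V}}(\bm{x})\cap{\cal L}_n|}{|{\cal L}_n|} = \frac{|{\cal T}_{\bm{V}}(\bm{x})\cap{\cal L}_n|}{|{\cal T}_{\bm{V}}(\bm{x})|\,|{\cal L}_n|}.
\end{equation}

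Now I would invoke the defining inequality of a $(P,R,\delta)$-good set, namely $|{\cal T}_{\bm{V}}(\bm{x})\cap{\cal L}_n| \leq |{\cal T}_{\bm{V}}(\bm{x})|\,e^{-n(H(X)_P - R)}$, together with $|{\cal L}_n| = \exp[nR-\delta]$, to conclude that the above expression is bounded by $e^{-n(H(X)_P-R)}/\exp[nR-\delta] = e^{-nH(X)_P + \delta}$, as claimed. The only real subtlety — and the step I would be most careful about — is justifying that $S_{\bm{x}}(\bm{y}) = {\cal T}_{\bm{V}_{\bm{y}|\bm{x}}}(\bm{x})$ rather than merely a subset, which relies on the fact that $\bm{x}\in{\cal T}_P$ so that $S_{\bm{x}}$ is the direct product $\prod_{i} S_{m_i}$ acting independently on each block of coordinates where $\bm{x}$ is constant, giving the transitivity needed on the $\bm{y}$-component within each block. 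I would also note that the hypothesis $\bm{y}\neq\bm{x}$ in the statement is used only to exclude the degenerate identical conditional type (where ${\cal T}_{\bm{V}^{\rm id}}(\bm{x})=\{\bm{x}\}$), and that the argument is in fact uniform in all $\bm{y}\in{\cal T}_P$ with the displayed bound holding regardless; actually the inequality as stated holds for all $\bm{y}\in{\cal T}_P$ including $\bm{y}=\bm{x}$, since for $\bm{V}=\mathrm{id}$ the good-set condition directly gives $|\{\bm{x}\}\cap{\cal L}_n|\leq e^{-n(H(X)_P-R)} < 1$, hence $\bm{x}\notin{\cal L}_n$ and the left side vanishes.
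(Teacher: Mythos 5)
Your proposal is correct and follows essentially the same route as the paper: both proofs rest on the key observation that the orbit $S_{\bm{x}}(\bm{y})$ coincides with ${\cal T}_{\bm{V}_{\bm{y}|\bm{x}}}(\bm{x})$, reduce the left-hand side to $\frac{|{\cal T}_{\bm{V}}(\bm{x})\cap{\cal L}_n|}{|{\cal T}_{\bm{V}}(\bm{x})|\,|{\cal L}_n|}$, and then invoke the defining inequality of a $(P,R,\delta)$-good set. The only cosmetic difference is that you phrase the counting via the orbit-stabilizer theorem while the paper phrases it by first noting $\sum_{s\in S_{\bm{x}}}f(s(\bm{y}))$ is constant over $\bm{y}'\in{\cal T}_{\bm{V}}(\bm{x})$ and then averaging over that set; these are equivalent.
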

\begin{proof}
    Fix a string $\bm{y}\in{\cal T}_P\setminus\{\bm{x}\}$, and define a conditional type $\bm{V}\coloneqq\bm{V}_{\bm{y}|\bm{x}}$.
    From the fact that the joint type is invariant under the joint permutation and that $\bm{x}$ is invariant under any permutation $s\in S_{\bm{x}}$, for any $\bm{y}'\in{\cal T}_{\bm{V}}(\bm{x})$, there exists an element $s'\in S_{\bm{x}}$ such that $\bm{y}'=s'(\bm{y})$.  Thus, $\sum_{s\in S_{\bm{x}}}f(s(\bm{y})) = \sum_{s\in S_{\bm{x}}}f(s(\bm{y}'))$ for any function $f$ and any $\bm{y}'\in{\cal T}_{\bm{V}}(\bm{x})$. 
    Thus, from the definition of $p_{{\cal L}_n}$ in Eq.~\eqref{eq:p_M}, we have
    \begin{align}
        \frac{1}{|S_{\bm{x}}|}\sum_{s\in S_{\bm{x}}} p_{{\cal L}_n}(s(\bm{y})) &= \frac{1}{|{\cal T}_{\bm{V}}(\bm{x})|}\sum_{\bm{y}\in {\cal T}_{\bm{V}}(\bm{x})}\frac{1}{|S_{\bm{x}}|}\sum_{s\in S_{\bm{x}}} p_{{\cal L}_n}(s(\bm{y})) \\
        &= \frac{1}{|{\cal T}_{\bm{V}}(\bm{x})|}\sum_{\bm{y}\in {\cal T}_{\bm{V}}(\bm{x})} p_{{\cal L}_n}(\bm{y})  \label{eq:from_uniformity}\\
        &= \frac{|{\cal T}_{\bm V}(\bm{x})\cap{\cal L}_n|}{|T_{\bm{V}}(\bm{x})||{\cal L}_n|} \\
        &\leq |{\cal L}_n|^{-1}e^{-n(H(X)_P - R)} =e^{-n H(X)_P + \delta},
    \end{align}
    where we used in the second equality that the multiset $\{s(\bm{y}):s\in S_{\bm{x}},\bm{y}\in{\cal T}_{\bm{V}}(\bm{x})\}$ has precisely $|S_{\bm{x}}|$ duplicated elements for every $\bm{y}\in{\cal T}_{\bm{V}}(\bm{x})$.
\end{proof}
This lemma, combined with Lemma~\ref{lem:code_book_choice}, implies that the error probability of the channel coding with the encoding ${\cal M}_n$ may be bounded from above by that with the random coding up to an asymptotically negligible factor $e^{\delta_1(n)}$.

Next, we introduce a universal decoder that is an appropriate modification of the one used in Ref.~\cite{Matsuura2025} in the context of universal source compression with quantum side information. 
For a positive operator $A$ and a strictly positive operator $B$, we define an operator division $\frac{A}{B}$ as
\begin{equation}
    \frac{A}{B} \coloneqq \int_{0}^{\infty} d\lambda\, (B+\lambda I)^{-1} A (B+\lambda I)^{-1}. \label{eq:operator_division}
\end{equation}
The division $\frac{A}{B}$ has the following three properties~\cite{Beigi2023}:
\begin{enumerate}
    \item An operator $\frac{A}{B}$ is positive.
    \item It satisfies $\frac{A}{C}+\frac{B}{C}=\frac{A+B}{C}$.
    \item It satisfies $\frac{A}{A+B}\leq \frac{A}{B}$.
    \item Any unitary $U$ satisfies $U\frac{A}{B}U^{\dagger}=\frac{UAU^{\dagger}}{UBU^{\dagger}}$.
\end{enumerate}

With the above definition, we define a sequence of universal decoders $Y_n$ with POVM elements $Y_n(\bm{x})$ for any $\bm{x}\in{\cal M}_n$ given by
\begin{equation}
    Y_n(\bm{x}) \coloneqq \frac{\sigma_{\bm{x}}}{\sum_{\bm{y}\in{\cal M}_n}\sigma_{\bm{y}}}, \label{eq:decoding_POVM}
\end{equation}
where $\sigma_{\bm{x}}$ is defined in Eq.~\eqref{eq:conditional_universal}. Note that in Ref.~\cite{Hayashi2009}, a sequence of universal decoders $\tilde{Y}_n$ with
\begin{equation}
    \tilde{Y}_n(\bm{x})\coloneqq \left(\sum_{\bm{y}\in{\cal M}_n}\Pi(\bm{y})\right)^{-\frac{1}{2}} \Pi(\bm{x}) \left(\sum_{\bm{y}\in{\cal M}_n}\Pi(\bm{y})\right)^{-\frac{1}{2}} \label{eq:previous_universal_decoder}
\end{equation}
is used, where 
\begin{equation}
    \Pi(\bm{x})\coloneqq \{\sigma_{\bm{x}} - C_n \sigma_{U,n}\}_{+}, \label{eq:likelihood_ratio_test}
\end{equation}
for a constant $C_n$ to be adjusted, and $\{A\}_+$ for a Hermitian operator $A$ is a projection onto the eigenspaces with positive eigenvalues of $A$. We utilize this decoder later in the universal private channel coding.
With the encoder specified in Lemma~\ref{lem:code_book_choice} and the decoder specified in Eq.~\eqref{eq:decoding_POVM}, we achieve the error exponent in Theorem~\ref{theo:new_error_exponent} as follows.

\begin{proof}[Proof of Theorem~\ref{theo:new_error_exponent} ]
    We define the sequence $\{{\cal M}_n\}_{n\geq N}$ of codeboooks for a sufficiently large $N$ so that ${\cal M}_n$ is a $(P, R, \delta_1(n))$-good codebook, where the existence of such a sequence is ensured in Lemma~\ref{lem:codebook_required}. Then, Eq.~\eqref{eq:asymptotic_code_rate} is automatically satisfied.
    Furthermore, we use a POVM specified by Eq.~\eqref{eq:decoding_POVM} as a decoder.
    The average decoding failure probability $P_{\rm err}({\cal C}_n,W)$ can thus be given by  
    \begin{align}
       P_{\rm err}({\cal C}_n,W) &= \frac{1}{M_n}\sum_{\bm{x}\in{\cal M}_n} \tr\left[W^{\otimes n}(\bm{x}) \bigl(I - Y_n(\bm{x})\bigr)\right] \\
       &=  \frac{1}{M_n}\sum_{\bm{x}\in{\cal M}_n} \tr\left[W^{\otimes n}(\bm{x})\sum_{\bm{y}\in{\cal M}_n\setminus\{\bm{x}\}}Y_n(\bm{y})\right]. \label{eq:failure_explicit}
    \end{align}
    If Eq.~\eqref{eq:improved_error_exponent} holds, the last statement of the theorem directly follows from the fact that the function $\alpha\mapsto I_{\alpha}(X:B)_{\rho}$ is continuous and monotonically non-decreasing on $\alpha\in[0,1]$ and that $I_{\alpha}(X:B)_{\rho}\to I(X:B)_{\rho}$ as $\alpha\to 1$. 
    We prove Eq.~\eqref{eq:improved_error_exponent} with the following four steps.

    \noindent {\bf Step 1:}
    This step aims to show the following:
    \begin{equation}
        \sum_{\bm{y}\in{\cal M}_n\setminus\{\bm{x}\}}Y_n(\bm{y}) \leq M_n^{\alpha} \left(\frac{ \sum_{\bm{y}\in{\cal T}_P\setminus\{\bm{x}\}} p_{{\cal M}_n\setminus\{\bm{x}\}}(\bm{y})\, \sigma_{\bm{y}}}{\sigma_{\bm{x}}}\right)^{\alpha}. \label{eq:first_ineq_Theo1}
    \end{equation}
    First, for any POVM element $T$, we have $T\leq T^{\alpha}$ for any $\alpha\in[0,1]$.  Thus, we have
    \begin{align}
        \sum_{\bm{y}\in{\cal M}_n\setminus\{\bm{x}\}}Y_n(\bm{y}) \leq \left(\sum_{\bm{y}\in{\cal M}_n\setminus\{\bm{x}\}}Y_n(\bm{y})\right)^{\alpha} . \label{eq:property_fractional_power}
    \end{align}
    Notice that
    \begin{equation}
        \sum_{\bm{y}\in{\cal M}_n\setminus\{\bm{x}\}}Y_n(\bm{y}) = \frac{\sum_{\bm{y}\in{\cal M}_n\setminus\{\bm{x}\}} \sigma_{\bm{y}}}{\sum_{\bm{x}'\in{\cal M}_n}\sigma_{\bm{x}'}} \leq \frac{\sum_{\bm{y}\in{\cal M}_n\setminus\{\bm{x}\}} \sigma_{\bm{y}}}{\sigma_{\bm{x}}}, \label{eq:ineq_from_division}
    \end{equation}
    where we used the third property of the operator division $\frac{A}{B}$ defined in Eq.~\eqref{eq:operator_division}.
    From L\"owner-Heinz theorem, the function $t\mapsto t^{\alpha}$ is an operator monotone function for $\alpha\in[0,1]$. Combining this fact with Eqs.~\eqref{eq:property_fractional_power} and \eqref{eq:ineq_from_division}, we have
    \begin{align}
        \left(\sum_{\bm{y}\in{\cal M}_n\setminus\{\bm{x}\}}Y_n(\bm{y})\right)^{\alpha} &\leq \left(\frac{\sum_{\bm{y}\in{\cal M}_n\setminus\{\bm{x}\}} \sigma_{\bm{y}}}{\sigma_{\bm{x}}}\right)^{\alpha}  \\
        & = \left(\frac{(M_n-1) \sum_{\bm{y}\in{\cal T}_P\setminus\{\bm{x}\}} p_{{\cal M}_n\setminus\{\bm{x}\}}(\bm{y})\, \sigma_{\bm{y}}}{\sigma_{\bm{x}}}\right)^{\alpha} \\
        &\leq M_n^{\alpha}\left(\frac{\sum_{\bm{y}\in{\cal T}_P\setminus\{\bm{x}\}} p_{{\cal M}_n\setminus\{\bm{x}\}}(\bm{y})\, \sigma_{\bm{y}}}{\sigma_{\bm{x}}}\right)^{\alpha},
    \end{align}
    where the equality follows from the definition of $p_{{\cal L}}$ in Eq.~\eqref{eq:p_M}. Combining this with Eq.~\eqref{eq:property_fractional_power} leads to Eq.~\eqref{eq:first_ineq_Theo1}.

    \noindent {\bf Step 2:}
    This step aims to show
    \begin{equation}
        \frac{1}{|S_{\bm{x}}|}\sum_{s\in S_{\bm{x}}} V_s \left(\frac{\sum_{\bm{y}\in{\cal T}_P\setminus\{\bm{x}\}} p_{{\cal M}_n\setminus\{\bm{x}\}}(\bm{y})\, \sigma_{\bm{y}}}{\sigma_{\bm{x}}}\right)^{\alpha}V_s^{\dagger} \leq e^{\alpha \delta_1(n)} \sigma_{\bm{x}}^{-\alpha}\sigma_{U,P}^{\alpha}, \label{eq:convex_combination_bound}
    \end{equation}
    where $\sigma_{U,P}$ is defined in Eq.~\eqref{eq:def_universal_type_symmetric}.
    We first observe that
    \begin{align}
        &\frac{1}{|S_{\bm{x}}|}\sum_{s\in S_{\bm{x}}}V_s \left(\frac{\sum_{\bm{y}\in{\cal T}_P\setminus\{\bm{x}\}} p_{{\cal M}_n\setminus\{\bm{x}\}}(\bm{y})\, \sigma_{\bm{y}}}{\sigma_{\bm{x}}}\right)^{\alpha} V_s^{\dagger} \nonumber \\
        &= \frac{1}{|S_{\bm{x}}|}\sum_{s\in S_{\bm{x}}} \left(V_s\frac{ \sum_{\bm{y}\in{\cal T}_P\setminus\{\bm{x}\}} p_{{\cal M}_n\setminus\{\bm{x}\}}(\bm{y})\, \sigma_{\bm{y}}}{\sigma_{\bm{x}}}V_s^{\dagger}\right)^{\alpha} \\
        &= \frac{1}{|S_{\bm{x}}|}\sum_{s\in S_{\bm{x}}} \left(\frac{\sum_{\bm{y}\in{\cal T}_P\setminus\{\bm{x}\}} p_{{\cal M}_n\setminus\{\bm{x}\}}(\bm{y})\, V_s\sigma_{\bm{y}}V_s^{\dagger}}{V_s\sigma_{\bm{x}}V_s^{\dagger}}\right)^{\alpha} \\
        &= \frac{1}{|S_{\bm{x}}|}\sum_{s\in S_{\bm{x}}} \left(\frac{ \sum_{\bm{y}\in{\cal T}_P\setminus\{\bm{x}\}} p_{{\cal M}_n\setminus\{\bm{x}\}}(\bm{y})\, \sigma_{s(\bm{y})}}{\sigma_{\bm{x}}}\right)^{\alpha} \\
        &= \frac{1}{|S_{\bm{x}}|}\sum_{s'\in S_{\bm{x}}} \left(\frac{ \sum_{\bm{y}\in{\cal T}_P\setminus\{\bm{x}\}} p_{{\cal M}_n\setminus\{\bm{x}\}}(s'(\bm{y}))\, \sigma_{\bm{y}}}{\sigma_{\bm{x}}}\right)^{\alpha}, \label{eq:before_applying_lem6}
    \end{align}
    where the second equality follows from the fourth property of the operator division $\frac{A}{B}$, and the last equality follows from the fact that ${\cal T}_P\setminus\{\bm{x}\}$ is invariant under $S_{\bm{x}}$.
    From L\"owner-Heinz theorem, the function $t\mapsto t^{\alpha}$ is an operator concave function for $\alpha\in[0,1]$. Applying this to Eq.~\eqref{eq:before_applying_lem6}, we have
    \begin{equation}
        \frac{1}{|S_{\bm{x}}|}\sum_{s'\in S_{\bm{x}}} \left(\frac{ \sum_{\bm{y}\in{\cal T}_P\setminus\{\bm{x}\}} p_{{\cal M}_n\setminus\{\bm{x}\}}(s'(\bm{y}))\, \sigma_{\bm{y}}}{\sigma_{\bm{x}}}\right)^{\alpha} \leq  \left(\frac{\frac{1}{|S_{\bm{x}}|}\sum_{s'\in S_{\bm{x}}} \sum_{\bm{y}\in{\cal T}_P\setminus\{\bm{x}\}} p_{{\cal M}_n\setminus\{\bm{x}\}}(s'(\bm{y}))\, \sigma_{\bm{y}}}{\sigma_{\bm{x}}}\right)^{\alpha}. \label{eq:operator_concave_applied}
    \end{equation}
    Applying Lemma~\ref{lem:reduction_to_iid} to Eq.~\eqref{eq:operator_concave_applied}, we have
    \begin{align}
        \left(\frac{\frac{1}{|S_{\bm{x}}|}\sum_{s'\in S_{\bm{x}}} \sum_{\bm{y}\in{\cal T}_P\setminus\{\bm{x}\}} p_{{\cal M}_n\setminus\{\bm{x}\}}(s'(\bm{y}))\, \sigma_{\bm{y}}}{\sigma_{\bm{x}}}\right)^{\alpha} &\leq \left(\frac{\sum_{\bm{y}\in{\cal T}_P\setminus\{\bm{x}\}}e^{-n H(X)_P+\delta_1(n) }\sigma_{\bm{y}}}{\sigma_{\bm{x}}}\right)^{\alpha} \\
        &\leq \left(\frac{\sum_{\bm{y}\in{\cal T}_P}|{\cal T}_P|^{-1}e^{\delta_1(n) }\sigma_{\bm{y}}}{\sigma_{\bm{x}}}\right)^{\alpha} \\
        &= \left(\frac{e^{\delta_1(n)}\sigma_{U, P} }{\sigma_{\bm{x}}}\right)^{\alpha} \\
        &= e^{\alpha \delta_1(n)}\sigma_{\bm{x}}^{-\alpha}\sigma_{U,P}^{\alpha}, \label{eq:commutative_decomp}
    \end{align}
    where the second inequality follows from the operator monotonicity of $t\mapsto t^{\alpha}$, and the first equality follows from Eq.~\eqref{eq:def_universal_type_symmetric}, and the last equality follows from the fact that $\sigma_{U, P}$ and $\sigma_{\bm{x}}$ commutes for any $\bm{x}\in{\cal T}_P$. Combining Eqs.~\eqref{eq:before_applying_lem6}, \eqref{eq:operator_concave_applied}, and \eqref{eq:commutative_decomp}, we have Eq.~\eqref{eq:convex_combination_bound}.

    \noindent {\bf Step 3:}
    This step aims to show that 
    \begin{equation}
        \tr\left[W^{\otimes n}(\bm{x})\sum_{\bm{y}\in{\cal M}_n\setminus\bm{x}\}}Y_n(\bm{y})\right] \leq e^{\alpha\bigl(nR+\frac{|{\cal X}|(d+2)(d-1)}{2}\log(n+1)\bigr)}\tr\left[(W^{\otimes n}(\bm{x}))^{1 - \alpha}\sigma_{U,P}^{\alpha}\right]. \label{eq:before_transf_to_Renyi}
    \end{equation}
    From the cyclicity of the trace, we have
    \begin{align}
        \tr\left[W^{\otimes n}(\bm{x})\sum_{\bm{y}\in{\cal M}_n\setminus\{\bm{x}\}}Y_n(\bm{y})\right] &= \frac{1}{|S_{\bm{x}}|}\sum_{s\in S_{\bm{x}}}\tr\left[V_s W^{\otimes n}(\bm{x})V_s^{\dagger}V_s\sum_{\bm{y}\in{\cal M}_n\setminus\{\bm{x}\}}Y_n(\bm{y})V_s^{\dagger}\right] \\
        &= \tr\left[W^{\otimes n}(\bm{x})\frac{1}{|S_{\bm{x}}|}\sum_{s\in S_{\bm{x}}}V_s\left(\sum_{\bm{y}\in{\cal M}_n\setminus\{\bm{x}\}}Y_n(\bm{y})\right)V_s^{\dagger}\right]. \label{eq:before_applying_steps_12}
    \end{align}
    Then, applying Eqs.~\eqref{eq:first_ineq_Theo1} and \eqref{eq:convex_combination_bound} to Eq.~\eqref{eq:before_applying_steps_12}, we have 
    \begin{equation}
        \tr\left[W^{\otimes n}(\bm{x})\frac{1}{|S_{\bm{x}}|}\sum_{s\in S_{\bm{x}}}V_s\left(\sum_{\bm{y}\in{\cal M}_n\setminus\{\bm{x}\}}Y_n(\bm{y})\right)V_s^{\dagger}\right] \leq M_n^{\alpha} e^{\alpha(\delta_1(n)} \tr[W^{\otimes n}(\bm{x})\sigma_{\bm{x}}^{-\alpha}\sigma_{U,P}^{\alpha}]. \label{eq:after_random_permutation}
    \end{equation}
    From Eq.~\eqref{eq:sequence_dependent_bound}, we have
    \begin{equation}
        W^{\otimes n}(\bm{x})\sigma_{\bm{x}}^{-\alpha} \leq (n+1)^{\alpha\frac{|{\cal X}|(d+2)(d-1)}{2}} \left(W^{\otimes n}(\bm{x})\right)^{1-\alpha}. \label{eq:1-alpha_bound}
    \end{equation}
    Combining Eqs.~\eqref{eq:before_applying_steps_12}, \eqref{eq:after_random_permutation}, and \eqref{eq:1-alpha_bound} with $M_n\leq \exp[nR-\delta_1(n)]$ from the definition of the $(P, R, \delta_1(n))$-good codebook in Def.~\eqref{def:P-R_good}, we have Eq.~\eqref{eq:before_transf_to_Renyi}.

    \noindent {\bf Step 4:}
    We finally prove the following, which immediately implies Eq.~\eqref{eq:improved_error_exponent}: for any $\alpha\in[0,1]$,
    \begin{equation}
        P_{\rm err}({\cal C}_n,W) \leq \exp\left[-n\left(\alpha[I_{1-\alpha}(X:B)_{W^P}-R] - \frac{\alpha (d+2)(d-1)+2}{2}\frac{|{\cal X}|\log(n+1)}{n} \right)\right]. \label{eq:Theorem_statement_precise}
    \end{equation}
    Since $\sigma_{U,P}$ is permutation-invariant, we have 
    \begin{align}
        \tr\left[(W^{\otimes n}(\bm{x}))^{1 - \alpha}\sigma_{U,P}^{\alpha}\right] &=\frac{1}{|S_n|}\sum_{s\in S_n}\tr\left[V_s(W^{\otimes n}(\bm{x}))^{1 - \alpha}V_s^{\dagger}\sigma_{U,P}^{\alpha}\right] \\
        &= \frac{1}{|S_n|}\sum_{s\in S_n}\tr\left[\bigl(W^{\otimes n}(s(\bm{x}))\bigr)^{1 - \alpha}\sigma_{U,P}^{\alpha}\right] \\
        &= \frac{1}{|{\cal T}_P|} \sum_{\bm{x}\in{\cal T}_P} \tr\left[\bigl(W^{\otimes n}(\bm{x})\bigr)^{1 - \alpha}\sigma_{U,P}^{\alpha}\right] \\
        &\leq e^{-nH(X)_P + |{\cal X}|\log(n+1)} \sum_{\bm{x}\in{\cal T}_P} \tr\left[\bigl(W^{\otimes n}(\bm{x})\bigr)^{1 - \alpha}\sigma_{U,P}^{\alpha}\right], \label{eq:after_permutation}
    \end{align}
    where the last inequality follows from Eq.~\eqref{eq:type_cardinality_bound}. Using $P^n(\bm{x})=e^{-nH(X)_P}$ for any $\bm{x}\in{\cal T}_P$, we have
    \begin{align}
        e^{-nH(X)_P} \sum_{\bm{x}\in{\cal T}_P} \tr\left[\bigl(W^{\otimes n}(\bm{x})\bigr)^{1 - \alpha}\sigma_{U,P}^{\alpha}\right] &= \sum_{\bm{x}\in{\cal T}_P} P^n(\bm{x})\,\tr\left[\bigl(W^{\otimes n}(\bm{x})\bigr)^{1 - \alpha}\sigma_{U,P}^{\alpha}\right] \\
        &\leq  \tr\left[\left(\sum_{x\in{\cal X}}P(x) \bigl(W(x)\bigr)^{1-\alpha}\right)^{\otimes n}\sigma_{U,P}^{\alpha}\right] \label{eq:type_to_iid}
    \end{align}
    From Lemma~2 of Ref.~\cite{Hayashi2009}, we have
    \begin{align}
        \tr\left[\left(\sum_{x\in{\cal X}}P(x) \bigl(W(x)\bigr)^{1-\alpha}\right)^{\otimes n}\sigma_{U,P}^{\alpha}\right] &\leq \left(\tr\left[\sum_{x\in{\cal X}}P(x) \left(W(x)\right)^{1 - \alpha}\right]^{\frac{1}{1-\alpha}}\right)^{n(1-\alpha)} \\
        &= \exp\left[-n\alpha I_{1-\alpha}(X:B)_{W^P}\right], \label{eq:bound_by_alpha_mutual}
    \end{align}
    where the equality follows from Eq.~\eqref{eq:Sibson_identity}.
    Combining Eqs.~\eqref{eq:failure_explicit}, \eqref{eq:before_transf_to_Renyi}, \eqref{eq:after_permutation}, \eqref{eq:type_to_iid}, and \eqref{eq:bound_by_alpha_mutual}, we prove Eq.~\eqref{eq:Theorem_statement_precise}.
\end{proof}

\section{Universal classical-quantum resolvability coding} \label{sec:universal_resolvability}
In this section, we describe the universal version of the classical-quantum (c-q) resolvability coding.  The related concept is the soft-covering lemma, proved against the random construction of codewords~\cite{Ahlswede2002,Datta2010,Cheng2022error,Sen2025}.

What we will show in this section is the following theorem.
\begin{theorem}[Universal resolvability] \label{theo:universal_resolvability}
    For any distribution $P$ over the input alphabets ${\cal X}$ and any real number $R$, the sequence $\{{\cal M}_n\}_{n\in\mathbb{N}}$ of codebooks that is $(P, R, \delta_2(n))$-radical spectral expander (defined below in Def.~\ref{def:radical_spectral_expander}) satisfies
    \begin{equation}
        -\lim_{n\to\infty}\frac{1}{n}\log\left\|\frac{1}{|{\cal M}_n|}\sum_{\bm{x}\in{\cal M}_n}W^{\otimes n}(\bm{x}) - \overline{W}^n_P \right\|_1 \geq \max_{\alpha\in[1,2]} \frac{\alpha-1}{\alpha}(R - \breve{I}_{\alpha}(X:B)_{W^P}), \label{eq:exponent_resolvability}
    \end{equation}
    where $\overline{W}^n_P\coloneqq \frac{1}{|{\cal T}_P|}\sum_{\bm{x}\in{\cal T}_P}W^{\otimes n}(\bm{x})$, and
    \begin{equation}
        \lim_{n\to\infty}\frac{1}{n}\log |{\cal M}_n| = R. \label{eq:asymptotic_rate_resolvability}
    \end{equation}
    The choice of the $\{{\cal M}_n\}_{n\in\mathbb{N}}$ does not depend on the channel $W$ and depends only on the probability distribution $P$ and the rate $R$. Thus, any rate $R>I(X:B)_{W^P}$ is universally achievable.
\end{theorem}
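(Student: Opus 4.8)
The plan is to control the trace distance $\|\rho_n - \overline{W}^n_P\|_1$, where $\rho_n \coloneqq \frac{1}{|{\cal M}_n|}\sum_{\bm{x}\in{\cal M}_n}W^{\otimes n}(\bm{x})$, by first establishing a weighted Hilbert--Schmidt (``order-$2$'') bound that converts the spectral gap of the Schreier graph attached to ${\cal M}_n$ into a state-distance bound, and then, together with the trivial (``order-$1$'') bound $\|\rho_n - \overline{W}^n_P\|_1 \le 2$, upgrading this pair of endpoints to the whole family indexed by $\alpha\in[1,2]$ via the complex interpolation theory of~\cite{Beigi2013,Dupuis2022,Beigi2023operator,Beigi2023}, whose natural interpolation-stable quantity is precisely the sandwiched Augustin information $\breve{I}_\alpha(X:B)_{W^P}$. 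The codebook ${\cal M}_n$, specified in Def.~\ref{def:radical_spectral_expander} only through $P$, $R$, $\delta_2$, and the combinatorics of a Schreier graph, is independent of $W$; the channel enters the analysis solely through the reference state and a single per-codeword factor. Equation~\eqref{eq:asymptotic_rate_resolvability} is immediate from the size constraint in Def.~\ref{def:radical_spectral_expander}, and the last assertion follows since $\alpha\mapsto\breve{I}_\alpha(X:B)_{W^P}$ is continuous on $[1,2]$ with $\breve{I}_1 = I(X:B)_{W^P}$, so $\frac{\alpha-1}{\alpha}(R - \breve{I}_\alpha(X:B)_{W^P}) > 0$ for $\alpha$ just above $1$ whenever $R > I(X:B)_{W^P}$.

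The order-$2$ bound is the heart of the matter and the place where representation theory enters. Identify the type class ${\cal T}_P$ with $S_n/S_{\bm{x}_P}$, so that $\mathbb{C}[{\cal T}_P]$ carries the permutation representation $\mathrm{Ind}_{S_{\bm{x}_P}}^{S_n}(1_{\mathbb{C}})$. The conditions packaged into Def.~\ref{def:radical_spectral_expander} ensure in particular that the uniform codeword vector $u_{{\cal M}_n} \coloneqq \frac{1}{|{\cal M}_n|}\sum_{\bm{x}\in{\cal M}_n}\delta_{\bm{x}}$ equals $T\,\delta_{\bm{x}_P}$ for the transition matrix $T$ of the associated Schreier graph $\Gamma$ (Lemma~\ref{lem:transition_Schreier_graph}), and that this graph has the ``radical'' (near-optimal, i.e.\ square-root-of-degree) spectral gap $\lambda(\Gamma)^2 \le e^{-nR+\delta_2(n)}$ with $\delta_2(n) = o(n)$, achievable by the random construction of Lemma~\ref{lem:random_construction}. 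Fix any permutation-invariant full-rank reference state $\sigma$ and form the kernel $K(\bm{x},\bm{y}) \coloneqq \tr[\sigma^{-1/2}W^{\otimes n}(\bm{x})\sigma^{-1/2}W^{\otimes n}(\bm{y})]$ on ${\cal T}_P\times{\cal T}_P$. Because $\sigma$ commutes with every permutation unitary $V_g$, the kernel $K$ is invariant under the diagonal $S_n$-action, hence as an operator on $\mathbb{C}[{\cal T}_P]$ it commutes with $\mathrm{Ind}_{S_{\bm{x}_P}}^{S_n}(1_{\mathbb{C}})(g)$ for every $g$; by complete reducibility of this representation together with Schur's lemma (Lemma~\ref{lem:Schur}), $K$ is simultaneously block-diagonal with $T$ and with the fixed-point projector $P_{\bm{1}}$, so $[K,T] = [K,P_{\bm{1}}] = 0$. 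Since $P_{\bm{1}}\delta_{\bm{x}_P}$ is the uniform vector on ${\cal T}_P$ (Lemma~\ref{lem:transition_Schreier_graph}), the deviation vector $c \coloneqq u_{{\cal M}_n} - P_{\bm{1}}\delta_{\bm{x}_P}$ equals $(T - P_{\bm{1}})\delta_{\bm{x}_P}$, while $\rho_n - \overline{W}^n_P = \sum_{\bm{x}\in{\cal T}_P}c_{\bm{x}}\,W^{\otimes n}(\bm{x})$. Hence, using $K \ge 0$, $0 \le (T - P_{\bm{1}})^2 \le \lambda(\Gamma)^2 I$, and $[K,(T-P_{\bm{1}})^2]=0$,
\begin{equation}
  \| \sigma^{-1/4}(\rho_n - \overline{W}^n_P)\sigma^{-1/4} \|_2^2 = \langle c|K|c\rangle = \langle \delta_{\bm{x}_P}|\,K\,(T - P_{\bm{1}})^2\,|\delta_{\bm{x}_P}\rangle \le \lambda(\Gamma)^2\,\langle\delta_{\bm{x}_P}|K|\delta_{\bm{x}_P}\rangle = \lambda(\Gamma)^2\,\tr[\sigma^{-1/2}W^{\otimes n}(\bm{x}_P)\sigma^{-1/2}W^{\otimes n}(\bm{x}_P)].
\end{equation}
Thus the spectral gap pulls out cleanly, leaving only the single per-codeword factor.

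For the optimal choice $\sigma = (\sigma^\star_{B,2})^{\otimes n}$, with $\sigma^\star_{B,2}$ the minimizer in $\breve{I}_2(X:B)_{W^P}$, the factor factorizes over the letters of $\bm{x}_P$ (Eq.~\eqref{eq:specific_element_in_type_P}): $\tr[\sigma^{-1/2}W^{\otimes n}(\bm{x}_P)\sigma^{-1/2}W^{\otimes n}(\bm{x}_P)] = \prod_{x\in{\cal X}}\bigl(\tr[(\sigma^\star_{B,2})^{-1/2}W(x)(\sigma^\star_{B,2})^{-1/2}W(x)]\bigr)^{nP(x)} = \exp[n\sum_{x}P(x)\widetilde{D}_2(W(x)\|\sigma^\star_{B,2})] = \exp[n\,\breve{I}_2(X:B)_{W^P}]$, so the order-$2$ bound reads $\| \sigma^{-1/4}(\rho_n - \overline{W}^n_P)\sigma^{-1/4} \|_2 \le \exp[-\tfrac{n}{2}(R - \breve{I}_2(X:B)_{W^P}) + \tfrac{1}{2}\delta_2(n)]$. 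The analogous computation at ``order $\alpha$'', with $\sigma = (\sigma^\star_{B,\alpha})^{\otimes n}$ the tensorized $\alpha$-Augustin mean (again permutation-invariant, so the commutation structure persists), produces, via the complex interpolation of~\cite{Beigi2023} applied to the family of $\alpha$-twisted versions of the above estimate together with the trivial $\alpha=1$ bound,
\begin{equation}
  -\tfrac{1}{n}\log\|\rho_n - \overline{W}^n_P\|_1 \;\ge\; \max_{\alpha\in[1,2]}\tfrac{\alpha-1}{\alpha}\bigl(R - \breve{I}_\alpha(X:B)_{W^P}\bigr) - \tfrac{\delta_2(n)}{n} - o(1),
\end{equation}
and $n\to\infty$ (with $\delta_2(n)=o(n)$) yields Eq.~\eqref{eq:exponent_resolvability}.

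The main obstacle is the order-$2$ bound, more precisely the two facts that (i) the deviation vector $c$ is exactly $(T - P_{\bm{1}})\delta_{\bm{x}_P}$ and (ii) the channel-dependent kernel $K$ commutes with the graph operators $T, P_{\bm{1}}$. Together these let the single scalar $\lambda(\Gamma)^2$ factor out, rather than leaving an intractable norm of $K$ on the mean-zero subspace; they rest on the structure of the induced representation $\mathrm{Ind}_{S_{\bm{x}_P}}^{S_n}(1_{\mathbb{C}})$ --- in particular, by Frobenius reciprocity, that its trivial isotypic component is one-dimensional and spanned by $\bm{1}$ (Lemma~\ref{lem:containing_irreducibles} and the proof of Lemma~\ref{lem:transition_Schreier_graph}) --- and on complete reducibility of representations of $S_n$. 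The second, more delicate point --- handled by the interpolation machinery rather than by a naive Cauchy--Schwarz passage from the $2$-norm, which would cost a factor exponential in $n$ --- is to confirm that the order-$\alpha$ estimates indeed reconstruct the sandwiched Augustin information $\breve{I}_\alpha$ for every $\alpha\in[1,2]$, and that taking the reference state to be a (channel-dependent) i.i.d.\ $\alpha$-Augustin mean, while the codebook stays channel-independent, costs at most subexponential factors.
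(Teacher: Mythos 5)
Your kernel argument for the order-$2$ endpoint is a genuinely different and attractive route to the paper's Lemma~\ref{lem:two_norm_bound}. Where the paper works at the superoperator level — regarding ${\cal S}_2({\cal H}_B^{\otimes n})$ as a Hilbert space, decomposing the smallest $S_n$-invariant subspace containing the $S_{\bm{x}_P}$-fixed vectors into irreducibles $J_i$, and intertwining each $J_i$ with $\mathbb{C}[S_n/S_{\bm{x}_P}]$ via Schur's lemma (Steps~1--4 of Lemma~\ref{lem:two_norm_bound}) — you stay at the kernel level: the Gram matrix $K$ of the $\sigma$-twisted codeword states commutes with every $\rho(g)={\rm Ind}_{S_{\bm{x}_P}}^{S_n}(1_\mathbb{C})(g)$, hence with $T$ and $P_{\bm{1}}$, and the deviation vector is exactly $(T-P_{\bm{1}})\delta_{\bm{x}_P}$, so $\lambda(\Gamma)^2$ peels off cleanly. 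I checked this: $K(g\bm{x},g\bm{y})=K(\bm{x},\bm{y})$ since $\sigma$ is permutation-invariant, $K\ge0$, and $K(T-P_{\bm{1}})^2\le\lambda^2 K$ since the two commuting positive operators are ordered. Both proofs rest on the same Frobenius-reciprocity and complete-reducibility facts (Lemmas~\ref{lem:Schur}, \ref{lem:containing_irreducibles}, \ref{lem:transition_Schreier_graph}); yours trades the intertwiner construction for direct commutation of a Gram kernel, which is more concrete.

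Two places need tightening before the argument actually closes. First, as written, your kernel estimate bounds $\|\Theta_{S'}(\eta)\|_2$ only for the single input $\eta=\sigma^{-1/4}W^{\otimes n}(\bm{x}_P)\sigma^{-1/4}$, whereas the interpolation you invoke (the paper's Lemma~\ref{lem:interpolation}, Riesz--Thorin for the superoperator $\Theta_{S'}$) requires the \emph{uniform} $2\!\to\!2$ operator-norm bound $\sup_{\|\eta\|_2\le1}\|\Theta_{S'}(\eta)\|_2\le e^{-nR/2}$. The fix is easy and worth stating: for general $\eta$, set $\eta_0=\mathbb{E}_{S_{\bm{x}_P}}(\eta)$ so that $\Theta_{S'}(\eta)=\sum_{\bm{x}\in{\cal T}_P}c_{\bm{x}}\,{\cal V}_{s_{\bm{x}_P\to\bm{x}}}(\eta_0)$ with the \emph{same} deviation vector $c=(T-P_{\bm{1}})\delta_{\bm{x}_P}$; the Gram kernel $K_\eta(\bm{x},\bm{y})=\langle{\cal V}_{s_{\bm{x}_P\to\bm{x}}}(\eta_0),{\cal V}_{s_{\bm{x}_P\to\bm{y}}}(\eta_0)\rangle_{\rm HS}$ is again diagonally $S_n$-invariant (well-definedness uses $S_{\bm{x}_P}$-invariance of $\eta_0$), so the identical argument gives $\|\Theta_{S'}(\eta)\|_2^2\le\lambda^2\|\eta_0\|_2^2\le\lambda^2\|\eta\|_2^2$. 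Second, the description of the $\alpha$-step is blurred. There is no ``analogous computation at order $\alpha$'' on the kernel side, and the Augustin information is not what the interpolation stabilizes; it comes from a separate H\"older decomposition. The correct order of operations (the paper's Steps~1--3) is: (i) H\"older to split $\|\Theta_{S'}(W^{\otimes n}(\bm{x}_P))\|_1$ into $\|\sigma^{-1/2\alpha'}W^{\otimes n}(\bm{x}_P)\sigma^{-1/2\alpha'}\|_\alpha\cdot\sup_{\|\tau\|_\alpha=1}\|\Theta_{S'}(\tau)\|_\alpha$, using permutation-invariance of $\sigma^{\otimes n}$ to pull $\sigma^{\pm1/2\alpha'}$ through $\Theta_{S'}$; (ii) the first (channel-dependent, $\alpha$-dependent-$\sigma$) factor factorizes letterwise and minimizes to $\exp[\frac{\alpha-1}{\alpha}n\,\breve{I}_\alpha(X:B)_{W^P}]$; (iii) the second, purely channel-independent superoperator factor interpolates between $\|\Theta_{S'}\|_{1\to1}\le2$ and $\|\Theta_{S'}\|_{2\to2}\le e^{-nR/2}$, giving $2^{2/\alpha-1}e^{-\frac{\alpha-1}{\alpha}nR}$. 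Your phrasing that the ``natural interpolation-stable quantity is $\breve{I}_\alpha$'' and that you interpolate a ``family of $\alpha$-twisted versions'' conflates (ii) and (iii); as written one cannot see how the $\alpha\in(1,2)$ bound actually emerges. Also a small slip: Def.~\ref{def:radical_spectral_expander} fixes $\lambda(\Gamma)\le e^{-nR/2}$ exactly; $\delta_2(n)$ enters $|{\cal M}_n|=e^{nR+\delta_2(n)}$ and does not appear in the spectral bound, so your ``$\lambda(\Gamma)^2\le e^{-nR+\delta_2(n)}$'' and the $\tfrac12\delta_2(n)$ in the order-$2$ exponent should be dropped.
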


Unlike the result in Ref.~\cite{Sen2025}, our theorem states the existence of a sequence of fixed codebooks that satisfy Eq.~\eqref{eq:exponent_resolvability}.
Notice also that the above theorem is independent of the dimension of the channel output, unlike Theorem~\ref{theo:new_error_exponent}. This implies that the universal resolvability in Theorem~\ref{theo:universal_resolvability} holds even under an infinite-dimensional channel output.
We first define the following relevant property for a sequence of codebooks.
\begin{definition}[$(P, R, \delta)$-radical spectral expander] \label{def:radical_spectral_expander}
    For a type $P\in{\cal P}_n$, a positive real number $R < H(X)_P$, and a real number $\delta>0$, a codebook ${\cal M}_n$ is called $(P, R, \delta)$-radical spectral expander if there exists a multiset $S'$ of elements in $S_n$ that satisfies the following conditions.
    \begin{enumerate}
        \item ${\cal M}_n=S'(\bm{x}_P)$, where $S'(\bm{x})\coloneqq\{s(\bm{x})\}_{s\in S'}$.
        \item $|{\cal M}_n|=|S'|=\exp[nR + \delta]$.
        \item A multiset $S'$ is a symmetric subset of $S_n$ in the sense defined in Sec.~\ref{sec:graph_theory}.
        \item The Schreier graph $\Gamma(S_n, S_n/S_{\bm{x}_P}, S')$ with a set $S_n/S_{\bm{x}_P}$ of left cosets of $S_{\bm{x}_P}$ in $S_n$, where $S_{\bm{x}_P}$ is as defined in Eq.~\eqref{eq:isotropy}, satisfies
        \begin{equation}
            \lambda\bigl(\Gamma(S_n, S_n/S_{\bm{x}_P}, S')\bigr) \leq \exp\!\left(-\frac{nR}{2}\right). \label{eq:radical_condition}
        \end{equation}
    \end{enumerate}
\end{definition}

We will state that by randomly picking up codewords from the set ${\cal T}_P$, the resulting codebook satisfies the $(P, R, \delta)$-radical spectral expander with a nonzero probability.
In fact, as a consequence of Lemma~\ref {lem:random_construction}, we have the following.
\begin{lemma} \label{lem:existence}
    For a type $P$ and a positive real number $R<H(X)_P$, there exists a sequence $\{{\cal M}_n\}_{n\geq N}$ of codebooks for a sufficiently large $N$ such that every ${\cal M}_n$ is a $(P, R, \delta_2(n))$-radical spectral expander with $\delta_2(n)=\Omega(\log n)$.
\end{lemma}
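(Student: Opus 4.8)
The plan is to produce $\{{\cal M}_n\}$ directly by the probabilistic method, applying Lemma~\ref{lem:random_construction} to the group $G = S_n$ with subgroup $H = S_{\bm{x}_P}$. First I would fix the group-theoretic dictionary. Since $S_n$ acts transitively on ${\cal T}_P$ and the stabilizer of $\bm{x}_P$ is exactly the isotropy group $S_{\bm{x}_P}$ from Eq.~\eqref{eq:isotropy}, the orbit--stabilizer correspondence gives an isomorphism of $S_n$-sets $S_n/S_{\bm{x}_P} \xrightarrow{\ \sim\ } {\cal T}_P$ via $s S_{\bm{x}_P} \mapsto s(\bm{x}_P)$; in particular $|S_n/S_{\bm{x}_P}| = |{\cal T}_P|$, and a uniformly random element of $S_n$ pushes forward to a uniformly random left coset, equivalently to a uniformly random string in ${\cal T}_P$. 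Under this identification the Schreier graph $\Gamma(S_n, S_n/S_{\bm{x}_P}, S')$ appearing in Definition~\ref{def:radical_spectral_expander} is precisely the object controlled by Lemma~\ref{lem:random_construction}.

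Next I would invoke Lemma~\ref{lem:random_construction} with $\epsilon := \exp(-nR/2)$ and any fixed failure probability $\delta \in (0,1)$, say $\delta = 1/2$. It provides a random multiset $S = \{s_i\}_{i\in{\cal I}}$ of i.i.d.\ uniform elements of $S_n$ of the (deterministic) size $|{\cal I}| = \bigl\lceil \tfrac{\log 4}{\epsilon^2}\,\log\tfrac{2|{\cal T}_P|}{\delta}\bigr\rceil = \bigl\lceil (\log 4)\,e^{nR}\,\log(4|{\cal T}_P|)\bigr\rceil$ with $\Pr\bigl[\lambda(\Gamma(S_n, S_n/S_{\bm{x}_P}, S\uplus S^{-1})) \geq e^{-nR/2}\bigr] \leq 1/2 < 1$. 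Since the failure probability is strictly less than $1$, for each $n$ a good realization exists; I fix one and set $S'_n := S \uplus S^{-1}$ and ${\cal M}_n := S'_n(\bm{x}_P)$, the latter understood as a multiset. Then condition~1 holds by definition, condition~3 holds because $S \uplus S^{-1}$ is symmetric, and condition~4 holds since $\lambda(\Gamma(S_n, S_n/S_{\bm{x}_P}, S'_n)) < e^{-nR/2}$; and because $e^{-nR/2} < 1$, the graph $\Gamma$ is automatically connected and non-bipartite, so $\lambda(\Gamma)$ is the genuine second-largest eigenvalue modulus and the spectral machinery of Lemma~\ref{lem:transition_Schreier_graph} is available for the subsequent proof of Theorem~\ref{theo:universal_resolvability}.

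It then remains to check condition~2 and extract $\delta_2(n)$. Here $|{\cal M}_n| = |S'_n| = 2|{\cal I}|$, which is deterministic. The type-size bounds $(n+1)^{-|{\cal X}|}e^{nH(X)_P} \leq |{\cal T}_P| \leq e^{nH(X)_P}$ from Eq.~\eqref{eq:type_cardinality_bound} give $\log(4|{\cal T}_P|) = nH(X)_P + O(\log n)$, hence $|{\cal I}| = (\log 4)\,e^{nR}\,nH(X)_P\,(1 + o(1))$ and $\log|S'_n| = nR + \log n + O(1)$. I would then define $\delta_2(n) := \log|S'_n| - nR$, so that $|{\cal M}_n| = \exp[nR + \delta_2(n)]$ holds by construction and $\delta_2(n) = \log n + O(1) = \Omega(\log n)$ as claimed. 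The hypothesis $R < H(X)_P$ enters only to make this clean for large $n$: it forces $n(H(X)_P - R)$ to dominate the $O(\log n)$ corrections, so that $\delta_2(n) > 0$ and $|S'_n| < |{\cal T}_P|$ once $n \geq N$ for some $N$, at which point the existence statement above delivers the asserted sequence $\{{\cal M}_n\}_{n\geq N}$.

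This argument has no genuine obstacle --- it is essentially bookkeeping on top of Lemma~\ref{lem:random_construction}. The single point that needs attention is the set-versus-multiset distinction: the map $s\mapsto s(\bm{x}_P)$ need not be injective on $S'_n$, so ${\cal M}_n$ must be read as a multiset; equivalently, the uniform distribution on the codebook is the push-forward under $s \mapsto s(\bm{x}_P)$ of the uniform distribution on $S'_n$. This is exactly the object fed to the resolvability quantity $\tfrac{1}{|{\cal M}_n|}\sum_{\bm{x}\in{\cal M}_n}W^{\otimes n}(\bm{x})$ in Theorem~\ref{theo:universal_resolvability}, so repeated codewords are harmless, and the ceiling in $|{\cal I}|$ together with the doubling $S\uplus S^{-1}$ pass through the logarithm without affecting the $\Omega(\log n)$ estimate.
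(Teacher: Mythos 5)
Your proposal is correct and follows essentially the same route as the paper: both invoke the random Schreier-graph construction of Lemma~\ref{lem:random_construction} with $G=S_n$, $H=S_{\bm{x}_P}$, $\epsilon=e^{-nR/2}$, use the identification $S_n/S_{\bm{x}_P}\cong{\cal T}_P$ (which the paper isolates as Lemma~\ref{lem:bijection_coset_type} while you derive it from orbit--stabilizer) together with the bound $|{\cal T}_P|\leq e^{nH(X)_P}$, and conclude by the probabilistic method. Your version is slightly more explicit about extracting $\delta_2(n)=\log n+O(1)$ from the cardinality of $S\uplus S^{-1}$ and about the multiset reading of ${\cal M}_n=S'(\bm{x}_P)$, but there is no substantive difference in approach.
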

To prove this lemma, we first prove the following.
\begin{lemma}\label{lem:bijection_coset_type}
    There exists a bijection from the set $S_n/S_{\bm{x}_P}$ of left cosets of $S_{\bm{x}_P}$ in $S_n$ to the set ${\cal T}_P$ of strings with the type $P$.
\end{lemma}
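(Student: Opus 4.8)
The plan is to exhibit the bijection explicitly via the group action of $S_n$ on strings. Recall that $\bm{x}_P \in {\cal T}_P$ is the distinguished representative defined in Eq.~\eqref{eq:specific_element_in_type_P}, and $S_{\bm{x}_P}$ is precisely its stabilizer (isotropy group) under the natural permutation action of $S_n$ on ${\cal X}^n$, as defined in Eq.~\eqref{eq:isotropy}. The orbit--stabilizer theorem then gives a canonical set-theoretic bijection between the left cosets $S_n/S_{\bm{x}_P}$ and the orbit $S_n \cdot \bm{x}_P$, so the whole task reduces to identifying that orbit with ${\cal T}_P$.

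First I would define the map $\Phi: S_n/S_{\bm{x}_P} \to {\cal T}_P$ by $\Phi(s\, S_{\bm{x}_P}) \coloneqq s(\bm{x}_P)$. The key check is that $\Phi$ is well defined: if $s\, S_{\bm{x}_P} = s'\, S_{\bm{x}_P}$, then $s^{-1}s' \in S_{\bm{x}_P}$, so $(s^{-1}s')(\bm{x}_P) = \bm{x}_P$, hence $s'(\bm{x}_P) = s(\bm{x}_P)$; conversely, $s(\bm{x}_P) = s'(\bm{x}_P)$ forces $s^{-1}s' \in S_{\bm{x}_P}$, which simultaneously gives injectivity. Next I would verify that $\Phi$ lands in ${\cal T}_P$: applying a permutation to a string does not change its type, since it merely reorders the entries, so $P_{s(\bm{x}_P)} = P_{\bm{x}_P} = P$ for every $s \in S_n$, giving $s(\bm{x}_P) \in {\cal T}_P$. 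Finally, for surjectivity, I would invoke the discussion around Eq.~\eqref{eq:def_s_bm_x}: for any $\bm{x} \in {\cal T}_P$ there exists $s_{\bm{x}_P \to \bm{x}} \in S_n$ with $\bm{x} = s_{\bm{x}_P \to \bm{x}}(\bm{x}_P)$, because two strings of the same type differ only by a reordering of positions; thus $\Phi(s_{\bm{x}_P \to \bm{x}}\, S_{\bm{x}_P}) = \bm{x}$.

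There is no real obstacle here --- this is the standard orbit--stabilizer correspondence specialized to the $S_n$-action on constant-composition strings --- so the only thing to be careful about is bookkeeping: making sure the ``left coset'' convention matches the direction in which permutations act on strings (so that well-definedness of $\Phi$ comes out with $s^{-1}s'$ rather than $s'\,s^{-1}$), and noting that although the choice of $s_{\bm{x}_P \to \bm{x}}$ is not unique, its coset $s_{\bm{x}_P \to \bm{x}}\, S_{\bm{x}_P}$ is, which is exactly why the correspondence is a genuine bijection. Optionally one could also remark that $|S_n/S_{\bm{x}_P}| = n!/\prod_i (nP(i))! = |{\cal T}_P|$ as a sanity check, but the explicit map already suffices.
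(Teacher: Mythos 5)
Your proof is correct and uses essentially the same map and arguments as the paper: define $\Phi(s\,S_{\bm{x}_P}) = s(\bm{x}_P)$, check well-definedness and injectivity via the stabilizer characterization, and use the existence of $s_{\bm{x}_P\to\bm{x}}$ for surjectivity. The only addition is that you name the underlying principle (orbit--stabilizer), which the paper leaves implicit.
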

\begin{proof}
    For any left coset $S\in S_n/S_{\bm{x}_P}$, we take a representative $\sigma_S\in S$ and define a map $S_n/S_{\bm{x}_P}\ni S\mapsto \sigma_S(\bm{x}_P) \in {\cal T}_P$. This map is well-defined since, for any $s_1,s_2\in S$, we have $s_1(\bm{x}_P)=s_2(\bm{x}_P)$, which implies that the map defined above is independent of the choice of the representative $\sigma_S\in S$. Furthermore, for any $\bm{x}\in{\cal T}_P$, there exists an element $s_{\bm{x}_P\to\bm{x}}\in S_n$ such that $\bm{x}=s_{\bm{x}_P\to\bm{x}}(\bm{x}_P)$, which implies that the map defined above is surjective. Finally, for any $S_1,S_2\in S_n/S_{\bm{x}_P}$ satisfying $S_1\neq S_2$, we have $\sigma_{S_1}(\bm{x}_P)\neq\sigma_{S_2}(\bm{x}_P)$, which implies that the map defined above is injective.
\end{proof}
With this lemma, we prove Lemma~\ref{lem:existence}.
\begin{proof}[Proof of Lemma~\ref{lem:existence}]
    From Lemma~\ref{lem:random_construction}, a random construction ensures
    \begin{equation}
        \mathrm{Pr}\left[\lambda\bigl(\Gamma(S_n, S_n/S_{\bm{x}_P}, S')\bigr) \leq \epsilon\right] > 0
    \end{equation}
    with $|S'|=\frac{\Omega(\log|S_n/S_{\bm{x}_P}|)}{\epsilon^2}$. From Lemma~\ref{lem:bijection_coset_type}, we have $|S_n/S_{\bm{x}_P}|=|{\cal T}_P|\leq e^{nH(X)_P}$. Thus, by setting $\epsilon=\exp(-nR/2)$ and repeating the random construction until success, we obtain a multiset $S'$ that satisfies the second to the fourth conditions in Def.~\ref{def:radical_spectral_expander}. By defining ${\cal M}_n=S'(\bm{x}_P)$ for such $S'$, we prove the statement.
\end{proof}

In the following, we show that a sequence of codebooks that satisfy the $(P, R, \delta_2(n))$-radical spectral expander property for a sufficiently large $n$ achieves the universal channel resolvability.
For an operator $A$ on ${\cal H}_B^{\otimes n}$, let ${\cal V}_s(A)\coloneqq V_s A V_s^{\dagger}$ denotes a unitary channel.
For any subgroup $S$ of $S_n$, we define $\mathbb{E}_{S}(A)$ as
\begin{equation}
    \mathbb{E}_{S}(A) = \frac{1}{|S|}\sum_{s\in S}{\cal V}_s(A). \label{eq:subgroup_averaging}
\end{equation}
Furthermore, for any subset $S'$ of $S_n$, we define the linear map $\Theta_{S'}:{\cal S}_p({\cal H}_B^{\otimes n})\to{\cal S}_p({\cal H}_B^{\otimes n})$ for the Schatten-class operators ${\cal S}_p({\cal H}_B^{\otimes n})$ as
\begin{equation}
    \Theta_{S'}(A) \coloneqq \frac{1}{|{\cal M}|}\sum_{s\in S'} {\cal V}_{s} \circ \mathbb{E}_{ S_{\bm{x}_P}}(A) - \mathbb{E}_{S_n}(A). 
    \label{eq:def_theta_map}
\end{equation}
Then, we have the following.
\begin{lemma} \label{lem:two_norm_bound}
    Let ${\cal M}_n$ be a codebook that is $(P, R, \delta)$-radical spectral expander, and let $S'$ be a multiset of elements in $S_n$ that satisfies all the conditions in Def.~\ref{def:radical_spectral_expander} for ${\cal M}_n$.  Then, the following inequality holds for any $\eta\in {\cal S}_2({\cal H}_B^{\otimes n})$:
    \begin{equation}
        \|\Theta_{S'}(\eta)\|_2 \leq e^{-\frac{nR}{2}}\|\eta\|_2. \label{eq:two_norm_bound}
    \end{equation}
\end{lemma}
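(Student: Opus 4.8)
The plan is to recognize $\Theta_{S'}$ as essentially the operator $T - P_{\bm 1}$ from Lemma~\ref{lem:transition_Schreier_graph} transported onto the Schatten-$2$ space via the induced representation, and then to invoke the expander condition~\eqref{eq:radical_condition}. First I would set up the right Hilbert-space picture: equip ${\cal S}_2({\cal H}_B^{\otimes n})$ with the Hilbert--Schmidt inner product $\langle A,B\rangle = \tr[A^\dagger B]$, so that each unitary channel ${\cal V}_s$ is a unitary operator on this Hilbert space, and $s\mapsto {\cal V}_s$ is a (generally reducible) unitary representation of $S_n$. The averaging $\mathbb{E}_{S_{\bm{x}_P}}$ is the orthogonal projection onto the subspace of ${\cal S}_2({\cal H}_B^{\otimes n})$ fixed by ${\cal V}_s$ for all $s\in S_{\bm{x}_P}$; call this projection $\Pi_0$ and its range ${\cal K}$. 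Since $S'\subseteq S_n$ is symmetric, the operator $\frac{1}{|S'|}\sum_{s\in S'}{\cal V}_s$ is self-adjoint, and because ${\cal V}_{s}\Pi_0 = \Pi_0 {\cal V}_{s}\Pi_0$ is \emph{not} automatic, the key observation is that $\Theta_{S'}$ only ever acts after $\Pi_0$: one checks $\Theta_{S'} = \Theta_{S'}\circ \Pi_0$ and moreover $\Theta_{S'}(\eta) = \big(\frac{1}{|S'|}\sum_{s\in S'}{\cal V}_s - \mathbb{E}_{S_n}\big)\Pi_0(\eta)$, using that $\mathbb{E}_{S_n}\circ\mathbb{E}_{S_{\bm{x}_P}} = \mathbb{E}_{S_n}$ and $\frac{1}{|S'|}\sum_{s\in S'}{\cal V}_s\circ\mathbb{E}_{S_{\bm{x}_P}} = \frac{1}{|S'|}\sum_{s\in S'}{\cal V}_s$ restricted to ${\cal K}$ when each $s\in S'$ normalizes or at least the sum is taken with the coset-averaging baked in — here I would lean on the third condition and the definition~\eqref{eq:def_theta_map} directly rather than re-deriving it.

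The heart of the argument is then to identify $({\cal K},{\cal V}|_{\cal K})$ with the permutation representation ${\rm Ind}_{S_{\bm{x}_P}}^{S_n}(1_{\mathbb C})$ on $\mathbb{C}[S_n/S_{\bm{x}_P}]$, up to multiplicity. By Schur--Weyl the operator $O\mapsto {\cal V}_s(O)$ acting on operators decomposes through the $S_n$-action on the ${\cal V}_{\bm n}$ factors; but more robustly, ${\cal K}$ is the span of $\{{\cal V}_s(\sigma_{U,m_1}\otimes\cdots\otimes\sigma_{U,m_{|{\cal X}|}})\}$-type vectors together with their orthogonal complement structure, and the $S_n$-orbit structure is governed precisely by cosets of the stabilizer $S_{\bm{x}_P}$. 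By Frobenius reciprocity (Lemma~\ref{lem:containing_irreducibles}) and complete reducibility, ${\cal V}|_{\cal K}$ is a direct sum of copies of sub-representations each of which is a sub-representation of ${\rm Ind}_{S_{\bm{x}_P}}^{S_n}(1_{\mathbb C})$; by the universality of induced representations — the statement emphasized in the introduction — the operator norm of $\frac{1}{|S'|}\sum_{s\in S'}{\cal V}_s - \mathbb{E}_{S_n}$ restricted to ${\cal K}$ is bounded by the same quantity computed on $\mathbb{C}[S_n/S_{\bm{x}_P}]$ itself, i.e.\ by $\sup_{\zeta:\|\zeta\|=1}\zeta^\dagger|T-P_{\bm 1}|\zeta = \lambda(\Gamma(S_n,S_n/S_{\bm{x}_P},S'))$ via Eq.~\eqref{eq:second_largest_expression} in Lemma~\ref{lem:transition_Schreier_graph}. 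Plugging in the radical condition~\eqref{eq:radical_condition} gives the bound $e^{-nR/2}$.

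Concretely the steps are: (i) reformulate $\Theta_{S'}$ as a self-adjoint operator on the Hilbert space $({\cal S}_2,\langle\cdot,\cdot\rangle_{HS})$ and show it factors through the projection $\Pi_0 = \mathbb{E}_{S_{\bm{x}_P}}$ onto the $S_{\bm{x}_P}$-fixed subspace ${\cal K}$; (ii) show $\Theta_{S'}|_{\cal K} = \frac{1}{|S'|}\sum_{s\in S'}{\cal V}_s|_{\cal K} - \mathbb{E}_{S_n}|_{\cal K}$ and that the latter projection is onto the $S_n$-fixed vectors in ${\cal K}$; (iii) decompose $({\cal K},{\cal V}|_{\cal K})$ into irreducibles and match each isotypic component against ${\rm Ind}_{S_{\bm{x}_P}}^{S_n}(1_{\mathbb C})$ using Frobenius reciprocity, so that the operator-norm of $\Theta_{S'}|_{\cal K}$ is dominated by that of the analogous operator $T-P_{\bm 1}$ on $\mathbb C[S_n/S_{\bm x_P}]$; (iv) apply Lemma~\ref{lem:transition_Schreier_graph} and Def.~\ref{def:radical_spectral_expander} to conclude $\|\Theta_{S'}|_{\cal K}\|_{\infty}\le\lambda(\Gamma)\le e^{-nR/2}$, hence $\|\Theta_{S'}(\eta)\|_2 = \|\Theta_{S'}\Pi_0(\eta)\|_2 \le e^{-nR/2}\|\Pi_0\eta\|_2\le e^{-nR/2}\|\eta\|_2$. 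The main obstacle I expect is step (iii): one must argue carefully that every irreducible $S_n$-subrepresentation appearing in $({\cal K},{\cal V}|_{\cal K})$ also appears in ${\rm Ind}_{S_{\bm{x}_P}}^{S_n}(1_{\mathbb C})$ — equivalently that ${\cal K}$ contains no irreducible whose restriction to $S_{\bm{x}_P}$ lacks the trivial representation — and that the spectral bound transfers across the multiplicity, since $S'$ is a \emph{fixed} multiset and the bound in Eq.~\eqref{eq:radical_condition} is stated for the specific graph $\Gamma(S_n,S_n/S_{\bm x_P},S')$; this is exactly where the "universality of the induced representation" is needed, to guarantee the spectral gap of $\frac1{|S'|}\sum_{s\in S'}{\cal V}_s$ on the larger space ${\cal K}$ is no worse than on $\mathbb C[S_n/S_{\bm x_P}]$.
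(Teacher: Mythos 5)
Your high-level plan — transport the problem into $({\cal S}_2({\cal H}_B^{\otimes n}),\langle\cdot,\cdot\rangle_{\rm HS})$, show $\Theta_{S'}$ factors through the orthogonal projection $\mathbb{E}_{S_{\bm{x}_P}}$ onto the $S_{\bm{x}_P}$-fixed subspace ${\cal K}$, then compare with $T-P_{\bm 1}$ on $\mathbb{C}[S_n/S_{\bm{x}_P}]$ via Frobenius reciprocity and Schur's lemma, and finally invoke Lemma~\ref{lem:transition_Schreier_graph} and condition~\eqref{eq:radical_condition} — is exactly the paper's strategy, and you have identified the correct key lemmas. However, there is a concrete gap in your step~(iii). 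You repeatedly treat $({\cal K},{\cal V}|_{\cal K})$ as an $S_n$-representation and speak of decomposing it into $S_n$-irreducibles, but ${\cal K}=\mathbb{E}_{S_{\bm{x}_P}}({\cal S}_2)$ is only $S_{\bm{x}_P}$-invariant, not $S_n$-invariant, so ${\cal V}|_{\cal K}$ is not a representation and this decomposition is not available. In particular, $\widetilde{\cal M}_n-\mathbb{E}_{S_n}$ does not map ${\cal K}$ into itself, so you cannot directly speak of its operator norm ``restricted to ${\cal K}$'' as a self-adjoint operator.

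The missing ingredient is the intermediate subspace ${\cal J}$, the $S_n$-cyclic span of ${\cal K}$ (i.e.\ the smallest $S_n$-invariant subspace containing ${\cal K}$). Since $\widetilde{\cal M}_n-\mathbb{E}_{S_n}$ is a combination of $\{{\cal V}_s\}$, it does preserve ${\cal J}$, and for $\eta\in{\cal K}$ its image lies in ${\cal J}$, so the sup over $\eta\in{\cal K}$ can be enlarged to a sup over $\eta\in{\cal J}$. One then applies Maschke's theorem to decompose ${\cal J}=\oplus_i J_i$ into $S_n$-irreducibles, and — this is the step your argument would stall on — proves by minimality of ${\cal J}$ that each $J_i$ has $J_i\cap{\cal K}\neq\{0\}$ (otherwise ${\cal J}\cap J_i^\perp$ would be a smaller invariant subspace containing ${\cal K}$). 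Only with this nontrivial intersection can you conclude via Frobenius reciprocity (Lemma~\ref{lem:containing_irreducibles}) that each $(J_i,{\cal V}|_{J_i})$ occurs in $\mathrm{Ind}_{S_{\bm{x}_P}}^{S_n}(1_{\mathbb{C}})$, and via Schur's lemma (Lemma~\ref{lem:Schur}) construct a surjective intertwining partial isometry $R_i:\mathbb{C}[S_n/S_{\bm{x}_P}]\to J_i$ that carries the spectral bound across multiplicities. Your appeal to the ``universality of the induced representation'' is the right slogan for this, but as written it does not supply the minimality argument for ${\cal J}$ that actually makes Frobenius reciprocity bite.
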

\begin{proof}
    We will prove the lemma with the following three steps. In Step~1, a superoperator representation of the group $S_n$ and its relevant subrepresentation are considered. Equation~\eqref{eq:two_norm_bound} is then reduced to the evaluation of the operator norm of the superoperator representation of Eq.~\eqref{eq:def_theta_map}.  In Step~2, the induced representation from the irreducible trivial representation of the subgroup $S_{\bm{x}_P}$ of $S_n$ is considered, and the spectrum of a transition matrix that corresponds to $S'$ is analyzed with a Schreier graph introduced in Sec.~\ref{sec:graph_theory}. In Step~3, evaluation of the operator norm of the superoperator in Step~1 is reduced to the spectrum of the transition matrix studied in Step~2, which then implies Eq.~\eqref{eq:two_norm_bound}
    
    \noindent {\bf Step 1:} The aim of this step is the preparation of the proof. Let us regard ${\cal S}_2({\cal H}_B^{\otimes n})$ as a Hilbert space with a Hilbert-Schmidt inner product $\braket{A|B}=\mathrm{Tr}[A^{\dagger}B]$ for $A, B\in {\cal S}_2({\cal H}_B^{\otimes n})$.  Then, the unitary channel ${\cal V}_s(\cdot)$ for any $s\in S_n$ can be regarded as a unitary operator ${\cal V}_s:{\cal S}_2({\cal H}_B^{\otimes n})\to{\cal S}_2({\cal H}_B^{\otimes n})$ on this Hilbert space with its Hermitian adjoint ${\cal V}_s^{\dagger}$ given by ${\cal V}^{\dagger}_s(A)\coloneqq V_s^{\dagger}A V_s={\cal V}_{s^{-1}}(A)$ from the defining relation
    \begin{equation}
        \braket{A|{\cal V}_s(B)}=\mathrm{Tr}[A^{\dagger}\,{\cal V}_s(B)]=\mathrm{Tr}[{\cal V}_s^{\dagger}(A^{\dagger})\, B]=\braket{{\cal V}_s^{\dagger}(A)|B}.
    \end{equation}
    Furthermore, for any subgroup $S$ of $S_n$, the channel $\mathbb{E}_{S}(\cdot)$ is represented as a projection operator $\mathbb{E}_{S}:{\cal S}_2({\cal H}_B^{\otimes n})\to{\cal S}_2({\cal H}_B^{\otimes n})$ on this Hilbert space since 
    \begin{align}
        \braket{A|\mathbb{E}_{S}(B)}&=\frac{1}{|S|}\sum_{s\in S} \tr[A^{\dagger} {\cal V}_s(B)] \\
        &= \frac{1}{|S|}\sum_{s\in S} \tr[{\cal V}_{s^{-1}}(A^{\dagger}) B] \\
        &= \frac{1}{|S|}\sum_{s'\in S} \tr[{\cal V}_{s'}(A^{\dagger}) B]\\
        &= \braket{\mathbb{E}_{S}(A)|B},
    \end{align}
    which implies that $\mathbb{E}_{S}$ is self-adjoint, and ${\cal V}_s\circ \mathbb{E}_{S}=\mathbb{E}_{sS}=\mathbb{E}_{S}$ for any $s\in S$, which implies $\mathbb{E}_{S}\circ \mathbb{E}_{S}=\mathbb{E}_{S}$.  Let $\widetilde{\cal M}_n$ be a linear operator on the Hilbert space ${\cal S}_2({\cal H}_B^{\otimes n})$ defined as
    \begin{equation}
    \widetilde{\cal M}_n(\eta) \coloneqq \frac{1}{|S'|}\sum_{s\in S'}{\cal V}_{s}(\eta)=\frac{1}{|{\cal M}_n|}\sum_{s\in S'}{\cal V}_{s}(\eta), \label{eq:def_tilde_M_n}
    \end{equation}
    which is a convex mixture of unitary operators $\{{\cal V}_s:s\in S'\}$ on ${\cal S}_2({\cal H}_B^{\otimes n})$.  Then, what we need to prove (i.e., Eq.~\eqref{eq:two_norm_bound}) is equivalent to proving
    \begin{equation}
        \|\widetilde{\cal M}_n\circ \mathbb{E}_{S_{\bm{x}_P}} - \mathbb{E}_{S_n}\| \leq e^{-\frac{nR}{2}}, \label{eq:the_goal}
    \end{equation}
    where $\|\cdot\|$ denotes the operator norm for the set of linear operators on the Hilbert space ${\cal S}_2({\cal H}_B^{\otimes n})$.
    
    \noindent {\bf Step 2:} Let ${\cal K}\coloneqq \mathbb{E}_{S_{\bm{x}_P}}({\cal S}_2({\cal H}_B^{\otimes n}))$ be a Hilbert subspace of ${\cal S}_2({\cal H}_B^{\otimes n})$, and let ${\cal J}$ be defined as
    \begin{equation}
        {\cal J} \coloneqq \left\{\sum_{s\in S_n}\bm{v}_s:\bm{v}_s\in{\cal V}_s({\cal K}), \forall s\in S_n \right\}, \label{eq:def_of_J}
    \end{equation}
    where ${\cal V}_s({\cal K})=\{{\cal V}_s(\eta):\eta\in{\cal K}\}$ for each $s\in S_n$ is a subspace of ${\cal S}_2({\cal H}_B^{\otimes n})$. Then, the aim of this section is to show 
    \begin{equation}
        \|\widetilde{\cal M}_n\circ \mathbb{E}_{S_{\bm{x}_P}} - \mathbb{E}_{S_n}\|^2 \leq \sup_{i\in\{1,\ldots,l\}}\sup_{\eta\in J_i:\braket{\eta|\eta}\leq1} \Braket{(\widetilde{\cal M}_n - \mathbb{E}_{S_n})\restriction_{J_i}(\eta)|(\widetilde{\cal M}_n - \mathbb{E}_{S_n})\restriction_{J_i}(\eta)},\label{eq:modified_goal}
    \end{equation}
    where $\{J_1,\ldots,J_l:\forall i \in \{1,\ldots,l\},J_i\in{\cal J}\}$ is the set of irreducible unitary representation subspaces for the representation $\{{\cal V}_s:s\in S_n\}$ of $S_n$ on ${\cal J}$, and $\restriction_{J_i}$ denotes the restriction of a linear operator to the subspace $J_i$.
    First, from $\mathbb{E}_{S_{\bm{x}_P}}\circ \mathbb{E}_{S_n}=\mathbb{E}_{S_n} = \mathbb{E}_{S_n}\circ\mathbb{E}_{S_{\bm{x}_P}}$, we have
    \begin{align}
        \|\widetilde{\cal M}_n\circ \mathbb{E}_{S_{\bm{x}_P}} - \mathbb{E}_{S_n}\|^2 &= \|(\widetilde{\cal M}_n- \mathbb{E}_{S_n})\circ \mathbb{E}_{S_{\bm{x}_P}} \|^2 \\
        &= \sup_{\tau\in {\cal S}_2({\cal H}_B^{\otimes n}):\braket{\tau|\tau}\leq 1}\Braket{(\widetilde{\cal M}_n- \mathbb{E}_{S_n})\circ \mathbb{E}_{S_{\bm{x}_P}}(\tau)|(\widetilde{\cal M}_n- \mathbb{E}_{S_n})\circ \mathbb{E}_{S_{\bm{x}_P}}(\tau)} \\
        &= \sup_{\eta\in {\cal K}:\braket{\eta|\eta}\leq 1} \Braket{(\widetilde{\cal M}_n - \mathbb{E}_{S_n})(\eta)|(\widetilde{\cal M}_n - \mathbb{E}_{S_n})(\eta)},  \label{eq:maximization_over_K}
    \end{align}
    where the second equality follows from the definition of the operator norm.
    Note that the unitary representation $\{{\cal V}_s:s\in S_{\bm{x}_P}\}$ of the subgroup $S_{\bm{x}_P}\subseteq S_n$ acts trivially on ${\cal K}$. 
    The subspace ${\cal J}$ of ${\cal S}_2({\cal H}_B^{\otimes n})$ defined in Eq.~\eqref{eq:def_of_J} is the smallest subspace closed under the action of $\{{\cal V}_s:s\in S_n\}$ that contains ${\cal K}$ since ${\cal V}_{s}\circ{\cal V}_{s'}({\cal K})={\cal V}_{ss'}({\cal K})\in{\cal J}$ for any $s, s'\in S_n$.
    Let us consider the subrepresentation $\{{\cal V}_s\restriction_{\cal J}:s\in S_n\}$ of $\{{\cal V}_s:s\in S_n\}$ on the subspace ${\cal J}$. Then, since $\widetilde{M}_n$ and $\mathbb{E}_{S_n}$ are convex combinations of $\{{\cal V}_s:s\in S_n\}$ as in Eqs.~\eqref{eq:subgroup_averaging} and \eqref{eq:def_tilde_M_n}, we have 
    \begin{equation}
        \forall\eta\in{\cal K},\qquad (\widetilde{\cal M}_n - \mathbb{E}_{S_n})(\eta) = (\widetilde{\cal M}_n - \mathbb{E}_{S_n})\restriction_{\cal J}(\eta).
    \end{equation}
    Applying this to Eq.~\eqref{eq:maximization_over_K}, we have
    \begin{align}
        &\sup_{\eta\in {\cal K}:\braket{\eta|\eta}\leq 1} \Braket{(\widetilde{\cal M}_n - \mathbb{E}_{S_n})(\eta)|(\widetilde{\cal M}_n - \mathbb{E}_{S_n})(\eta)}\nonumber \\ 
        &= \sup_{\eta\in {\cal K}:\braket{\eta|\eta}\leq 1} \Braket{(\widetilde{\cal M}_n - \mathbb{E}_{S_n})\restriction_{\cal J}(\eta)|(\widetilde{\cal M}_n - \mathbb{E}_{S_n})\restriction_{\cal J}(\eta)}\\
        &\leq \sup_{\eta\in {\cal J}:\braket{\eta|\eta}\leq 1} \Braket{(\widetilde{\cal M}_n - \mathbb{E}_{S_n})\restriction_{\cal J}(\eta)|(\widetilde{\cal M}_n - \mathbb{E}_{S_n})\restriction_{\cal J}(\eta)}. \label{eq:rep_stab}
    \end{align}
    We further consider the irreducible decomposition of the subspace ${\cal J}$. From Maschke's theorem (Theorem~2 of Ref.~\cite{Serre1977}, Corollary~1.6 of Ref.~\cite{Fulton2004}), each $J_i$ ($i\in\{1,\ldots,l\}$) is not mixed by $\{{\cal V}_s:s\in S_n\}$, i.e., the unitary representation $\{{\cal V}_s:s\in S_n\}$ is completely reducible. Again, since $\widetilde{\cal M}_n$ and $\mathbb{E}_{S_n}$ are convex combinations of $\{{\cal V}_s:s\in S_n\}$ as in Eqs.~\eqref{eq:subgroup_averaging} and \eqref{eq:def_tilde_M_n}, we have 
    \begin{equation}
    \begin{split}
        &\sup_{\eta\in {\cal J}:\braket{\eta|\eta}\leq 1} \Braket{(\widetilde{\cal M}_n - \mathbb{E}_{S_n})\restriction_{\cal J}(\eta)|(\widetilde{\cal M}_n - \mathbb{E}_{S_n})\restriction_{\cal J}(\eta)} \\
        &= \sup_{i\in\{1,\ldots,l\}}\sup_{\eta\in J_i:\braket{\eta|\eta}\leq1} \Braket{(\widetilde{\cal M}_n - \mathbb{E}_{S_n})\restriction_{J_i}(\eta)|(\widetilde{\cal M}_n - \mathbb{E}_{S_n})\restriction_{J_i}(\eta)}.
        \end{split}
    \end{equation}
    Combining this with Eqs.~\eqref{eq:maximization_over_K}, \eqref{eq:rep_stab}, we have Eq.~\eqref{eq:modified_goal}.

    \noindent{\bf Step 3:}
    This step aims to show
    \begin{equation}
        \sup_{\zeta\in\mathbb{C}[S_n/S_{\bm{x}_P}]:\,\zeta^{\dagger}\zeta \leq 1} \zeta^{\dagger} (T-P_{\bm{1}})^2 \zeta \leq e^{-nR}, \label{eq:end_step_3}
    \end{equation}
    where $T$ is the transition matrix of the Schreier graph $\Gamma(S_n,S_n/S_{\bm{x}_P},S')$ and $P_{\bm{1}}$ denotes the projection to the eigenspace of $T$ with the eigenvalue one.
    Since $\Gamma(S_n,S_n/S_{\bm{x}_P},S')$ satisfies Eq.~\eqref{eq:radical_condition}, it is connected, and thus $S'$ is a generating set of $S_n$. Thus, by applying Lemma~\ref{lem:transition_Schreier_graph} with $G=S_n$, $H=S_{\bm{x}_P}$, and $S=S'$, we have
    \begin{equation}
        T =  \frac{1}{|{\cal M}_n|}\sum_{s\in S'}{\rm Ind}_{S_{\bm{x}_P}}^{S_n}(1_{\mathbb{C}})(s), \label{eq:trans_mat}
    \end{equation}
    and
    \begin{equation}
        P_{\bm{1}} = \frac{1}{|S_n|}\sum_{s\in S_n}{\rm Ind}_{S_{\bm{x}_P}}^{S_n}(1_{\mathbb{C}})(s). \label{eq:stabilized_point}
    \end{equation}
    Furthermore, since $T$ is symmetric, we have
    \begin{align}
        \sup_{\zeta\in\mathbb{C}[S_n/S_{\bm{x}_P}]:\,\zeta^{\dagger}\zeta \leq 1} \sqrt{\zeta^{\dagger} (T-P_{\bm{1}})^2 \zeta} &= \sup_{\zeta\in\mathbb{C}[S_n/S_{\bm{x}_P}]:\,\zeta^{\dagger}\zeta \leq 1} \zeta^{\dagger}|(T-P_{\bm{1}})|\zeta \\
        &=\lambda(\Gamma(S_n,S_n/S_{\bm{x}_P},S')),\label{eq:singular_value_bound}
    \end{align}
    where the last equality follows from Eq.~\eqref{eq:second_largest_expression} in Lemma~\ref{lem:transition_Schreier_graph}. Applying the fourth condition of the definition of the $(P, R, \delta)$-radical spectral expander in Def.~\ref{def:radical_spectral_expander} to Eq.~\eqref{eq:singular_value_bound}, we have 
    \begin{equation}
        \sup_{\zeta\in\mathbb{C}[S_n/S_{\bm{x}_P}]:\,\zeta^{\dagger}\zeta \leq 1} \sqrt{\zeta^{\dagger} (T-P_{\bm{1}})^2 \zeta} \leq e^{-\frac{nR}{2}},
    \end{equation}
    which proves Eq.~\eqref{eq:end_step_3}.

    \noindent{\bf Step 4:} 
    This step aims to show
    \begin{equation}
        \sup_{i\in\{1,\ldots,l\}}\sup_{\eta\in J_i:\braket{\eta|\eta}\leq 1} \Braket{(\widetilde{\cal M}_n - \mathbb{E}_{S_n})\restriction_{J_i}(\eta)|(\widetilde{\cal M}_n - \mathbb{E}_{S_n})\restriction_{J_i}(\eta)} \leq \sup_{\zeta\in \mathbb{C}[S_n/S_{\bm{x}_P}]:\zeta^{\dagger}\zeta\leq 1} \zeta^{\dagger} (T-P_{\bm{1}})^2\zeta. \label{eq:end_step_4}
    \end{equation}
    We consider intertwining the two representations $(J_i,{\cal V}_s\restriction_{J_i})$ for $i\in\{1,\ldots,l\}$ and ${\rm Ind}_{S_{\bm{x}_P}}^{S_n}(1_{\mathbb{C}})$ of $S_n$, where an intertwiner is introduced in Sec.~\ref{sec:representation_theory}.
    First, observe that each $J_i$ satisfies $J_i\cap{\cal K}\neq\{0\}$; i.e., it contains a nontrivial element in ${\cal K}$. In fact, if there exists $i\in\{1,\ldots,l\}$ such that $J_i\cap{\cal K}=\{0\}$, then ${\cal J} \cap J_i^{\perp}$ is also a subspace closed under $\{{\cal V}_s:s\in S_n\}$ that contains ${\cal K}$. But this contradicts with the fact that ${\cal J}$ is the smallest subspace closed under $\{{\cal V}_s:s\in S_n\}$ that contains ${\cal K}$.
    Now, for any $i\in\{1,\ldots,l\}$, an irreducible representation equivalent to $(J_i,{\cal V}_s\restriction_{J_i})$ is contained in $\mathrm{Ind}_{S_{\bm{x}_P}}^{S_n}(1_{\mathbb{C}})$ from Lemma~\ref{lem:containing_irreducibles} since $J_i\cap{\cal K}\neq\{0\}$ and ${\cal K}$ is a trivial representation subspace for $S_{\bm{x}_P}$. Then, from Lemma~\ref{lem:Schur}, there exists a surjective partial isometry $R_i:\mathbb{C}[S_n/S_{\bm{x}_P}]\to J_i$ onto the irreducible representation subspace $J_i$ such that
    \begin{equation}
        R_i\circ {\rm Ind}_{S_{\bm{x}_P}}^{S_n}(1_{\mathbb{C}})(s) = {\cal V}_s\restriction_{J_i}\circ R_i.
    \end{equation}
    Thus, from Eqs.~\eqref{eq:def_tilde_M_n} and \eqref{eq:trans_mat}, we have
    \begin{equation}
        R_i \circ T = \widetilde{\cal M}_n\restriction_{J_i}\circ R_i, \label{eq:intertwine_trans}
    \end{equation}
    and from Eqs.~\eqref{eq:subgroup_averaging} and \eqref{eq:stabilized_point}, we have
    \begin{equation}
        R_i \circ P_{\bm{1}} = \mathbb{E}_{S_n}\restriction_{J_i}\circ R_i. \label{eq:intertwine_stab}
    \end{equation}
    We thus have, for any $i\in\{1,\ldots,l\}$, 
    \begin{align}
        &\sup_{\eta\in J_i:\braket{\eta|\eta}\leq 1} \Braket{(\widetilde{\cal M}_n - \mathbb{E}_{S_n})\restriction_{J_i}(\eta)|(\widetilde{\cal M}_n - \mathbb{E}_{S_n})\restriction_{J_i}(\eta)} \nonumber \\
        & = \sup_{\zeta\in \mathbb{C}[S_n/S_{\bm{x}_P}]:\zeta^{\dagger}\zeta\leq 1} \Braket{(\widetilde{\cal M}_n - \mathbb{E}_{S_n})\restriction_{J_i}\circ R_i(\zeta)|(\widetilde{\cal M}_n - \mathbb{E}_{S_n})\restriction_{J_i}\circ R_i(\zeta)} \\
        &= \sup_{\zeta\in \mathbb{C}[S_n/S_{\bm{x}_P}]:\zeta^{\dagger}\zeta\leq 1} \Braket{R_i\circ(T-P_{\bm{1}})(\zeta)|R_i\circ(T-P_{\bm{1}})(\zeta)}, \label{eq:partial_isometry_remained}
    \end{align}
    where the first equality follows from the fact that $R_i$ is surjective, and the second equality follows from Eqs.~\eqref{eq:intertwine_trans} and~\eqref{eq:intertwine_stab}.
    Since a partial isometry does not increase a Hilbert-space norm, we have 
    \begin{align}
        \sup_{\zeta\in \mathbb{C}[S_n/S_{\bm{x}_P}]:\zeta^{\dagger}\zeta\leq 1} \Braket{R_i\circ(T-P_{\bm{1}})(\zeta)|R_i\circ(T-P_{\bm{1}})(\zeta)} \leq \sup_{\zeta\in \mathbb{C}[S_n/S_{\bm{x}_P}]:\zeta^{\dagger}\zeta\leq 1} \zeta^{\dagger} (T-P_{\bm{1}})^2\zeta.
    \end{align}
    Combining this with Eq.~\eqref{eq:partial_isometry_remained} and the fact that these inequalities hold for any $i\in\{1,\ldots,l\}$, we have Eq.~\eqref{eq:end_step_4}.

    Now, combining Eq.~\eqref{eq:modified_goal} in Step 2, Eq.~\eqref{eq:end_step_3} in Step~3, and Eq.~\eqref{eq:end_step_4} in Step~4, we conclude that Eq.~\eqref{eq:the_goal} in Step~1 holds, which implies the statement of the lemma.
\end{proof}

From the lemma proved above, we will prove the c-q channel resolvability.
We use the following lemma, which can be shown from the complex interpolation theory~\cite{Beigi2013,Dupuis2022,Beigi2023operator,Beigi2023} (see Theorem~4 in Ref.~\cite{Beigi2013}).
\begin{lemma}\label{lem:interpolation}
    For $\alpha\in[1,2]$, let $\tau_{B^n}$ be an operator on ${\cal H}_B^{\otimes n}$ satisfying $\|\tau_{B^n}\|_{\alpha}=1$.  Then, we have
    \begin{equation}
        \|\Theta_{S'}(\tau_{B^n})\|_{\alpha} \leq \sup_{\eta\in{\cal S}_p({\cal H}_B^{\otimes n}):\|\eta\|_1\leq 1} \|\Theta_{S'}(\eta)\|_1^{\frac{2}{\alpha}-1} \sup_{\eta\in{\cal S}_p({\cal H}_B^{\otimes n}):\|\eta\|_2\leq 1} \|\Theta_{S'}(\eta)\|_2^{2-\frac{2}{\alpha}}. \label{eq:interpolation}
    \end{equation}
\end{lemma}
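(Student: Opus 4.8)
The plan is to recognize Eq.~\eqref{eq:interpolation} as an instance of Riesz--Thorin interpolation between the Schatten classes $\mathcal{S}_1(\mathcal{H}_B^{\otimes n})$ and $\mathcal{S}_2(\mathcal{H}_B^{\otimes n})$ for the \emph{fixed} linear map $\Theta_{S'}$, established concretely via the Hadamard three-lines theorem, following the complex-interpolation technique of Ref.~\cite{Beigi2013} (see Theorem~4 there). Write $M_1$ and $M_2$ for the two suprema on the right-hand side of Eq.~\eqref{eq:interpolation}, i.e., the $\mathcal{S}_1\to\mathcal{S}_1$ and $\mathcal{S}_2\to\mathcal{S}_2$ operator norms of $\Theta_{S'}$, and set $\theta\coloneqq2-\tfrac{2}{\alpha}$. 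This $\theta$ lies in $[0,1]$ precisely because $\alpha\in[1,2]$, and it is the unique solution of $\tfrac1\alpha=(1-\theta)\cdot1+\theta\cdot\tfrac12$, equivalently $1-\theta=\tfrac2\alpha-1$; hence the claim is exactly $\|\Theta_{S'}(\tau_{B^n})\|_\alpha\le M_1^{1-\theta}M_2^{\theta}$ whenever $\|\tau_{B^n}\|_\alpha=1$.

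I would first invoke Schatten duality, $\|\Theta_{S'}(\tau_{B^n})\|_\alpha=\sup\{|\tr[\omega^\dagger\Theta_{S'}(\tau_{B^n})]|:\|\omega\|_{\alpha'}\le1\}$ with $\tfrac1\alpha+\tfrac1{\alpha'}=1$, and fix such a $\tau_{B^n}$ and an (essentially) optimal $\omega$. Taking polar decompositions $\tau_{B^n}=U|\tau_{B^n}|$ and $\omega=V|\omega|$, define on the closed strip $\{z\in\mathbb{C}:0\le\mathrm{Re}\,z\le1\}$ the analytic families
\[
\tau_z\coloneqq U\,|\tau_{B^n}|^{\alpha(1-z/2)},\qquad \omega_z\coloneqq V\,|\omega|^{(\alpha'/2)z},
\]
and set $F(z)\coloneqq\tr[\omega_z^\dagger\,\Theta_{S'}(\tau_z)]$. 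The exponents are chosen so that $\tau_\theta=\tau_{B^n}$, $\omega_\theta=\omega$, hence $F(\theta)=\tr[\omega^\dagger\Theta_{S'}(\tau_{B^n})]$; and on the boundary one computes $|\tau_z|=|\tau_{B^n}|^{\alpha}$ while $|\omega_z|$ is the support projector of $\omega$ on $\mathrm{Re}\,z=0$, and $|\tau_z|=|\tau_{B^n}|^{\alpha/2}$, $|\omega_z|=|\omega|^{\alpha'/2}$ on $\mathrm{Re}\,z=1$, so that $\|\tau_z\|_1=\|\tau_{B^n}\|_\alpha^{\alpha}=1$ and $\|\omega_z\|_\infty\le1$ on the left edge, and $\|\tau_z\|_2=1$, $\|\omega_z\|_2=\|\omega\|_{\alpha'}^{\alpha'/2}=1$ on the right edge.

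In finite dimensions, taking the fractional powers on the supports of $\tau_{B^n}$ and $\omega$ makes $z\mapsto\tau_z,\omega_z$ entrywise entire, so $F$ is bounded and analytic on the open strip and continuous on its closure. The Schatten Hölder inequality $|\tr[AB]|\le\|A\|_p\|B\|_{p'}$ then gives $|F(z)|\le\|\omega_z\|_\infty\|\Theta_{S'}(\tau_z)\|_1\le1\cdot M_1\cdot1=M_1$ on $\mathrm{Re}\,z=0$, and $|F(z)|\le\|\omega_z\|_2\|\Theta_{S'}(\tau_z)\|_2\le1\cdot M_2\cdot1=M_2$ on $\mathrm{Re}\,z=1$. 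Applying the Hadamard three-lines theorem to $F$ yields $|F(\theta)|\le M_1^{1-\theta}M_2^{\theta}$, and taking the supremum over $\|\omega\|_{\alpha'}\le1$ gives $\|\Theta_{S'}(\tau_{B^n})\|_\alpha\le M_1^{1-\theta}M_2^{\theta}=M_1^{2/\alpha-1}M_2^{2-2/\alpha}$, which is Eq.~\eqref{eq:interpolation}. I expect the only genuinely delicate point to be ensuring the analyticity and boundedness of $F$ up to the boundary of the strip when $\tau_{B^n}$ or $\omega$ are rank-deficient (so that fractional powers of $0$ appear); restricting to supports as above handles this, and one may anyway bypass the explicit three-lines computation by invoking the complex interpolation theorem for the Schatten scale as a black box, with endpoint pairs $(p_0,q_0)=(1,1)$, $(p_1,q_1)=(2,2)$ and interpolation parameter $\theta=2-2/\alpha$; all remaining steps are elementary bookkeeping of norms and exponents.
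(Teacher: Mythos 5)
Your proof is correct and is essentially the same approach the paper invokes: the paper gives no proof of this lemma and simply cites complex interpolation theory (in particular Theorem~4 of Ref.~\cite{Beigi2013}), and what you have written out is precisely the standard Riesz--Thorin/Hadamard-three-lines proof of that cited result specialized to the fixed linear map $\Theta_{S'}$ with endpoint Schatten indices $(1,1)$ and $(2,2)$. The normalizations are right --- $\theta=2-2/\alpha\in[0,1]$, the analytic families $\tau_z=U|\tau_{B^n}|^{\alpha(1-z/2)}$ and $\omega_z=V|\omega|^{(\alpha'/2)z}$ recover $\tau_\theta=\tau_{B^n}$, $\omega_\theta=\omega$, and the boundary computations $\|\tau_z\|_1=\|\tau_{B^n}\|_\alpha^\alpha=1$, $\|\omega_z\|_\infty\le1$ on $\mathrm{Re}\,z=0$ and $\|\tau_z\|_2=1$, $\|\omega_z\|_2=\|\omega\|_{\alpha'}^{\alpha'/2}\le1$ on $\mathrm{Re}\,z=1$ are all correct --- so the three-lines bound gives exactly the claimed inequality.
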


Now, we are ready to prove the universal c-q-channel resolvability.

\begin{proof}[Proof of Theorem~\ref{theo:universal_resolvability}]
    The condition~\eqref{eq:asymptotic_rate_resolvability} is obvious from the definition of the $(P, R, \delta)$-radical spectral expander in Def.~\ref{def:radical_spectral_expander} and the construction of a sequence of codebooks in Lemma~\ref{lem:existence}.
    For Eq.~\eqref{eq:exponent_resolvability}, the quantity we need to evaluate can be rewritten using $\Theta_{S'}$ in Eq.~\eqref{eq:def_theta_map} as
    \begin{equation}
        \left\|\frac{1}{|{\cal M}_n|}\sum_{\bm{x}\in{\cal M}_n}W^{\otimes n}(\bm{x}) - \overline{W}^n_P \right\|_1 = \left\|\Theta_{S'}\bigl(W^{\otimes n}(\bm{x}_P)\bigr)\right\|_1.
    \end{equation}
    This indeed holds since ${\cal M}_n=S'(\bm{x}_P)$ from the first condition of Def.~\ref{def:radical_spectral_expander} and $S_n(\bm{x}_P)$ is a multiset with each element of ${\cal T}_P$ appearing exactly $|S_{\bm{x}_P}|$ times from Lemma~\ref{lem:bijection_coset_type}.
    In the following three steps, we prove, for any $\alpha\in[1,2]$,
    \begin{equation}
        \left\|\Theta_{S'}\bigl(W^{\otimes n}(\bm{x}_P)\bigr)\right\|_1 \leq 2^{\frac{2}{\alpha}-1}\exp\left[-\frac{\alpha - 1}{\alpha}n(R -\breve{I}_{\alpha}(X:B)_{W^P})\right], \label{eq:resolvability_alt}
    \end{equation}
    which immediately implies Eq.~\eqref{eq:exponent_resolvability}. Since the case $\alpha=1$ trivially holds, we actually prove the above inequality for $\alpha\in(1,2]$.

    \noindent {\bf Step 1:}
    In this step, we prove
    \begin{equation}
    \begin{split}
        \left\|\Theta_{S'}\bigl(W^{\otimes n}(\bm{x}_P)\bigr)\right\|_1 &\leq \inf_{\sigma_B:\cup_{x\in{\cal X}}\, {\rm supp}(W(x))\subseteq{\rm supp}(\sigma_B)}\|(\sigma_B^{\otimes n})^{-\frac{1}{2\alpha'}} W^{\otimes n}(\bm{x}_P) (\sigma_B^{\otimes n})^{-\frac{1}{2\alpha'}}\|_{\alpha}\\
        &\hspace{3cm}\sup_{\tau_{B^n}\in{\cal S}_{\alpha}({\cal H}_B^{\otimes n}):\|\tau_{B^n}\|_{\alpha}= 1}\|\Theta_{S'}(\tau_{B^n})\|_{\alpha}.
        \end{split}\label{eq:bound_by_alpha-norm}
    \end{equation}
    First, observe that $\alpha'\coloneqq \alpha/(\alpha-1)$ satisfies $2\leq\alpha' < \infty$ for $1<\alpha\leq 2$. Then, for any density operator $\sigma_B$ with $\cup_{x\in{\cal X}}\, {\rm supp}(W(x))\subseteq{\rm supp}(\sigma_B)$, we have
    \begin{equation}
        \Theta_{S'}\bigl(W^{\otimes n}(\bm{x}_P)\bigr) = (\sigma_B^{\otimes n})^{\frac{1}{2\alpha'}} (\sigma_B^{\otimes n})^{-\frac{1}{2\alpha'}} \Theta_{S'}\bigl(W^{\otimes n}(\bm{x}_P)\bigr)  (\sigma_B^{\otimes n})^{-\frac{1}{2\alpha'}}(\sigma_B^{\otimes n})^{\frac{1}{2\alpha'}}.
    \end{equation}
    From the H\"older's inequality $\|ABC\|_1\leq \|A\|_p\|B\|_q\|C\|_r$ with $\frac{1}{p}+\frac{1}{q}+\frac{1}{r}=1$, we have
    \begin{align}
        \left\|\Theta_{S'}\bigl(W^{\otimes n}(\bm{x}_P)\bigr)\right\|_1 &\leq \|(\sigma_B^{\otimes n})^{\frac{1}{2\alpha'}}\|_{2\alpha'} \left\|(\sigma_B^{\otimes n})^{-\frac{1}{2\alpha'}}\Theta_{S'}\bigl(W^{\otimes n}(\bm{x}_P)\bigr)(\sigma_B^{\otimes n})^{-\frac{1}{2\alpha'}}\right\|_{\alpha} \|(\sigma_B^{\otimes n})^{\frac{1}{2\alpha'}}\|_{2\alpha'}\\
        &\leq \left\|\Theta_{S'}\left((\sigma_B^{\otimes n})^{-\frac{1}{2\alpha'}} W^{\otimes n}(\bm{x}_P) (\sigma_B^{\otimes n})^{-\frac{1}{2\alpha'}}\right)\right\|_{\alpha}\\
        &\leq \|(\sigma_B^{\otimes n})^{-\frac{1}{2\alpha'}} W^{\otimes n}(\bm{x}_P) (\sigma_B^{\otimes n})^{-\frac{1}{2\alpha'}}\|_{\alpha} \sup_{\tau_{B^n}\in{\cal S}_{\alpha}({\cal H}_B^{\otimes n}):\|\tau_{B^n}\|_{\alpha}= 1}\|\Theta_{S'}(\tau_{B^n})\|_{\alpha}, 
    \end{align}
    where we used the permutation invariance ${\cal V}_s(\sigma_B^{\otimes n})=\sigma_B^{\otimes n}$ for any $s\in S_n$ as well as $\|\rho^{\frac{1}{p}}\|_{p}^p=\tr[(\rho^{\frac{1}{p}})^p]=1$ for any density operator $\rho$ in the second inequality. Taking minimization over $\sigma_B$ proves Eq.~\eqref{eq:bound_by_alpha-norm}.
    
    \noindent {\bf Step 2:}
    In this step, we prove
    \begin{equation}
        \inf_{\sigma_B:\cup_{x\in{\cal X}}\, {\rm supp}(W(x))\subseteq{\rm supp}(\sigma_B)}\|(\sigma_B^{\otimes n})^{-\frac{1}{2\alpha'}} W^{\otimes n}(\bm{x}_P) (\sigma_B^{\otimes n})^{-\frac{1}{2\alpha'}}\|_{\alpha} = \exp\!\left[\frac{\alpha - 1}{\alpha}n\,\breve{I}_{\alpha}(X:B)_{W^P}\right].\label{eq:bound_on_alpha-norm}.
    \end{equation}
    Since $W^n(\bm{x}_P)=W({x_P}_1)\otimes\cdots\otimes W({x_P}_n)$, we have
    \begin{align}
        \|(\sigma_B^{\otimes n})^{-\frac{1}{2\alpha'}} W^{\otimes n}(\bm{x}_P) (\sigma_B^{\otimes n})^{-\frac{1}{2\alpha'}}\|_{\alpha} &= \prod_{x\in{\cal X}} \left\|\sigma_B^{-\frac{1}{2\alpha'}}W(x) \sigma_B^{-\frac{1}{2\alpha'}}\right\|_{\alpha}^{n P(x)} \\
        &= \exp\!\left[\frac{\alpha - 1}{\alpha}n\sum_{x\in{\cal X}}P(x)\widetilde{D}_{\alpha}(W(x)\|\sigma_B)\right], \label{eq:bound_on_alpha-norm_middle}
    \end{align}
    where the sandwiched R\'enyi divergence $\widetilde{D}_{\alpha}(\rho\|\sigma)$ is defined in Eq.~\eqref{eq:sandwiched_renyi}.
    By minimizing over $\sigma_B$ on both side of Eq.~\eqref{eq:bound_on_alpha-norm_middle}, we have
    Eq.~\eqref{eq:bound_on_alpha-norm} from the definition of $\breve{I}_{\alpha}(X:B)_{W^P}$ in Eq.~\eqref{eq:def_augustin}.

    \noindent {\bf Step 3:} This step proves 
    \begin{equation}
         \sup_{\tau_{B^n}\in{\cal S}_{\alpha}({\cal H}_B^{\otimes n}):\|\tau_{B^n}\|_{\alpha}= 1}\|\Theta_{S'}(\tau_{B^n})\|_{\alpha} \leq 2^{\frac{2}{\alpha} - 1}e^{-\frac{\alpha - 1}{\alpha}nR}. \label{eq:channel_alpha_bound}
    \end{equation}
    From the definition of $\Theta_{S'}$ in Eq.~\eqref{eq:def_theta_map}, we observe that 
    \begin{equation}
        \|\Theta_{S'}(\eta)\|_1 \leq \frac{1}{|{\cal M}_n|}\sum_{s'\in S'}\frac{1}{|S_{\bm{x}_P}|}\sum_{s\in S_{\bm{x}_P}}\|V_{s'} V_{s} \eta V_{s}^{\dagger} V_{s'}^{\dagger} \|_1 + \frac{1}{|S_n|}\sum_{s\in S_n}\|V_s \eta V_s^{\dagger}\|_1 \leq 2\|\eta\|_1, \label{eq:one_norm_bound}
    \end{equation}
    where we used the triangle inequality in the first inequality and the unitary invariance of the norm in the second inequality.
    On the other hand, since the codebook ${\cal M}_n$ satisfies the $(P, R, \delta_2(n))$-radical spectral expander by assumption, we have 
    \begin{equation}
        \|\Theta_{S'}(\eta)\|_2 \leq e^{-\frac{nR}{2}}\|\eta\|_2,
    \end{equation}
    from Lemma~\ref{lem:two_norm_bound}. 
    Applying this and Eq.~\eqref{eq:one_norm_bound} to Eq.~\eqref{eq:interpolation} in Lemma~\ref{lem:interpolation}, we have Eq.~\eqref{eq:channel_alpha_bound}.
    
    Combining Eqs.~\eqref{eq:bound_by_alpha-norm}, \eqref{eq:bound_on_alpha-norm}, and \eqref{eq:channel_alpha_bound} proves Eq.~\eqref{eq:resolvability_alt}.  The last statement of the theorem follows from the fact that $\breve{I}_{\alpha}(X:B)_{W^P}$ is continuous and monotonically non-decreasing on $\alpha\in[1,2]$ and that $\breve{I}_{\alpha}(X:B)_{W^P}\to I(X:B)_{W^P}$ as $\alpha\to 1$.
\end{proof}

\begin{remark}
    The optimality of the error exponent in Eq.~\eqref{eq:exponent_resolvability} for universal resolvability coding is unknown as far as the authors' knowledge. It is at least as tight as the current best upper bound on the error exponent for the soft-covering lemma with a random constant-composition coding~\cite{Cheng2022error}.
\end{remark}

\section{Universal private channel coding} \label{sec:universal_private}
Combining the universal c-q channel coding studied in Secs.~\ref{sec:universal_c-q} and \ref{sec:universal_resolvability}, we can construct a universal private channel coding.  Let $W_{BE}:{\cal X}\to{\cal D}({\cal H}_B\otimes{\cal H}_E)$ be a c-q-q channel.  Input classical information is transmitted to a receiver with a channel $W_B:x \mapsto \tr_E[W_{BE}(x)]$ and to an eavesdropper with a channel $W_E:x\mapsto \tr_B[W_{BE}(x)]$.  A private channel coding aims to transmit a message from the sender to the receiver so that the message is unknown to the eavesdropper with an i.i.d.~use of the channel $W_{BE}^{\otimes n}$.  More explicitly, a private-channel code ${\cal C}_n$ is a quadruple $({\cal J}_n,K_n,\varphi_n^K,Y_n^K)$ of a message set ${\cal J}_n=\{1,\ldots,J_n\}$, an integer $K_n$, a stochastic encoder $\varphi_n^K;{\cal J}_n\to {\cal X}^n$ that maps each message $j\in{\cal J}_n$ to one of $K_n$ codewords uniformly randomly (the codebook thus has $|{\cal J}_n|\times K_n$ elements), and a decoder $Y_n^K:{\cal D}({\cal H}_B^{\otimes n})\to {\cal J}_n$ such that the following two conditions hold.
\begin{itemize}
\item (Correctness) The decoding error $P_{\rm err}({\cal C}_n, W_B)$ defined as
\begin{equation}
    P_{\rm err}({\cal C}_n,W_B) \coloneqq 1 - \frac{1}{|{\cal J}_n|}\sum_{j\in{\cal J}_n} \tr[W_B^{\otimes n}(\varphi_n^K(j)) \, Y_n^K(j)] \label{eq:correctness_def}
\end{equation}
decays as $n\to \infty$.
\item (Secrecy) Any message $j\in{\cal J}_n$ satisfies
\begin{equation}
    \left\|W_E^{\otimes n}(\varphi_n^{K}(j)) - \overline{W}_E^n(P)\right\|_1 \to 0,\label{eq:secrecy_asympt}
\end{equation}
where $\overline{W}_E^n(P)\coloneqq \frac{1}{|{\cal T}_P|}\sum_{\bm{x}\in{\cal T}_P}W_E^{\otimes n}(\bm{x})$
\end{itemize}
The first condition implies that the receiver successfully decodes the transmitted message with unit probability in the asymptotic limit, and the second implies that the transmitted message does not leak to Eve in the asymptotic limit.
The transmission rate $R$ and the randomness consumption rate $R_E$ for the code ${\cal C}_n$ are defined, respectively, as
\begin{align}
    R&=\lim_{n\to\infty} \frac{1}{n}\log|{\cal J}_n|, \\
    R_E&=\lim_{n\to\infty} \frac{1}{n}\log K_n.
\end{align}

Then, what we will show in this section is the following statement.
\begin{theorem} \label{theo:private_channel}
    Let $W_{BE}:{\cal X}\to{\cal D}({\cal H}_B\otimes{\cal H}_E)$ be a c-q-q channel.  Then, for any real numbers $R_B < I(X:B)_{W_B^P}$ and $R_E > I(X:E)_{W_E^P}$, there exists a sequence $\{{\cal C}_n\}_{n\in L}$ of codes ${\cal C}_n=({\cal J}_n, K_n, \phi_n^K, Y_n^K)$, where $L$ denotes the subsequence of $\mathbb{N}$, that satisfy the following conditions:
    \begin{align}
        \log (K_n|{\cal J}_n |) &= nR_B - \Omega(|{\cal X}|^2\log n),  \label{eq:size_total_codebook}\\
        \log K_n &= nR_E + \Omega(\log n), \label{eq:size_each_set}\\ 
        P_{\rm err}({\cal C}_n,W_B) &\leq \min_{\alpha\in[0,1]}\exp\!\left[-\frac{\alpha}{2(1+\alpha)}n\bigl(I_{1-\alpha}(X:B)_{W_B^P}-R_B\bigr) + \Omega\bigl(|{\cal X}|(d_B^2+|{\cal X}|)\log n\bigr)\right], \label{eq:correctness_cond}
    \end{align}  
    where $d_B\coloneqq \dim{\cal H}_B$,
    and for any $j\in{\cal J}_n$, 
    \begin{equation}
        \left\|W_E^{\otimes n}(\varphi_n^{K}(j)) - \overline{W}_E^n(P)\right\|_1 \leq \min_{\alpha\in[1,2]} \exp\left[-\frac{\alpha-1}{\alpha}n\bigl(R_E - \breve{I}_{\alpha}(X:E)_{W_E^P}\bigr)+ \Omega(1)\right]. \label{eq:secrecy_cond}
    \end{equation}
    Here, the code ${\cal C}_n$ is determined without the knowledge on $W_{BE}$ except for the values of $I(X:B)_{W_B^P}$ and $I(X:E)_{W_E^P}$. Thus, the transmission rate $I(X:B)-I(X:E)$ with the randomness-consumption rate $I(X:E)$ is achievable.
\end{theorem}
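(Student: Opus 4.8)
The plan is to realize the codebook of size $K_n|{\cal J}_n|$ as a disjoint union of $J_n\coloneqq|{\cal J}_n|$ sub-codebooks ${\cal M}_n^{(1)},\ldots,{\cal M}_n^{(J_n)}$, each of size $K_n$, with message $j$ identified with ${\cal M}_n^{(j)}$ so that the stochastic encoder $\varphi_n^K$ sends $j$ to a uniformly random element of ${\cal M}_n^{(j)}$. Each sub-codebook will be taken of the orbit form ${\cal M}_n^{(j)}=S_j'(\bm{x}_P)$ for a symmetric multiset $S_j'\subset S_n$, exactly as in Definition~\ref{def:radical_spectral_expander}, and the decoder will be the codeword decoder $\tilde Y_n$ of Eq.~\eqref{eq:previous_universal_decoder} built from the full codebook ${\cal M}_n\coloneqq\biguplus_j {\cal M}_n^{(j)}$, followed by the map taking a decoded codeword $\bm{x}$ to the message $j$ with $\bm{x}\in{\cal M}_n^{(j)}$ (ties broken arbitrarily). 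With this architecture, secrecy reduces to each ${\cal M}_n^{(j)}$ resolving $W_E$ at rate $R_E$, which is exactly Theorem~\ref{theo:universal_resolvability} applied with channel $W_E$, since its target $\overline{W}^n_P$ equals $\overline{W}_E^n(P)$; and correctness reduces to ${\cal M}_n$ satisfying a (suitably modified) good-codebook property for $W_B$ at rate $R_B$, so that the analysis of Sec.~\ref{sec:universal_c-q} applies. The crucial structural fact is that, by permutation symmetry (Step~4 of the proof of Theorem~\ref{theo:new_error_exponent}), the codeword-decoding-error bound for a good codebook is the same for every codeword, hence it also bounds the error of every message, not merely the average.

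First I would carry out the random construction and show that a single codebook simultaneously possesses all the required properties. Fix $R_B<I(X:B)_{W_B^P}$ and $R_E>I(X:E)_{W_E^P}$ with $R_B>R_E$ (the case of interest), set $K_n=\exp[nR_E+\delta_2(n)]$ and $J_n$ so that $K_nJ_n=\exp[nR_B-\delta_1(n)]$ with $\delta_1(n)=\Omega(|{\cal X}|^2\log n)$ and $\delta_2(n)=\Omega(\log n)$, and for each $j$ draw, independently, a multiset $S_j$ of $\lfloor K_n/2\rfloor$ i.i.d.\ uniform elements of $S_n$, symmetrize to $S_j'=S_j\uplus S_j^{-1}$, and put ${\cal M}_n^{(j)}=S_j'(\bm{x}_P)$. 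I would then union-bound three events over this randomness. (A) Every ${\cal M}_n^{(j)}$ is a $(P,R_E,\delta_2(n))$-radical spectral expander: this is Lemma~\ref{lem:existence} (i.e.\ Lemma~\ref{lem:random_construction} with $G=S_n$, $H=S_{\bm{x}_P}$, $\epsilon=e^{-nR_E/2}$), run for each $j$, the failure probability made summable over $j$ by choosing the implicit parameter with $\log(1/\delta)=\Omega(n)$, which is what forces $\delta_2(n)=\Omega(\log n)$. (B) ${\cal M}_n$ is a good codebook for rate $R_B$: since $s\mapsto s(\bm{x}_P)$ pushes the uniform measure on $S_n$ forward to the uniform measure on ${\cal T}_P$, the multiset ${\cal M}_n$ is, up to the pairwise correlations induced by symmetrization, a uniformly random multiset of $\exp[nR_B-\delta_1(n)]$ elements of ${\cal T}_P$, so the standard Chernoff estimate for $|{\cal T}_{\bm{V}}(\bm{x})\cap{\cal M}_n|$ together with a union bound over the at most $(n+1)^{|{\cal X}|^2}$ conditional types and over the codewords (as in Lemma~10.1 of Ref.~\cite{Csiszar2015} and Ref.~\cite{Hayashi2009}) gives the $(P,R_B,\delta_1(n))$-good codebook property, after a harmless relaxation of the constant to absorb the correlations. (C) No sub-codebook shares more than a negligible fraction of its codewords with the rest: a fixed codeword of ${\cal M}_n^{(j)}$ lies in the union of the other sub-codebooks with probability at most $\exp[-n(H(X)_P-R_B)+O(\log n)]$, and a Chernoff estimate boosted by a further polynomial factor in $K_n$ makes this summable over $j$. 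Since $R_B<I(X:B)_{W_B^P}\le H(X)_P$, all three events have probability tending to $1$, so a common codebook exists.

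Fixing such a codebook, secrecy is immediate: Theorem~\ref{theo:universal_resolvability} applied with channel $W_E$ to each ${\cal M}_n^{(j)}$ yields exactly the bound~\eqref{eq:secrecy_cond}. For correctness, ${\cal M}_n$ satisfying the modified good-codebook condition lets the argument of Sec.~\ref{sec:universal_c-q}, via the likelihood-ratio decoder $\tilde Y_n$, bound the codeword-decoding error uniformly over codewords by $\exp[-n\gamma_n+\Omega(|{\cal X}|(d_B^2+|{\cal X}|)\log n)]$ with $\gamma_n$ producing the exponent of~\eqref{eq:correctness_cond}; the concatenated decoder $Y_n^K(j)=\sum_{\bm{x}\in{\cal M}_n^{(j)}}\tilde Y_n(\bm{x})$ then errs on message $j$ only on the codeword-decoding-error event or on the shared codewords of ${\cal M}_n^{(j)}$, whose probability mass is at most $\exp[-n(H(X)_P-R_B)+O(\log n)]$ and is dominated by the former (using $R_B<I(X:B)_{W_B^P}\le H(X)_P$), giving~\eqref{eq:correctness_cond} after optimizing over the R\'enyi parameter. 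The main obstacle is precisely the antagonism between the two structural requirements: the good-codebook property wants ${\cal M}_n$ to be as spread out as a generic random codebook, whereas the radical-spectral-expander property forces each ${\cal M}_n^{(j)}$ to be a rigid $S_n$-orbit of $\bm{x}_P$; and the codewords shared between sub-codebooks cannot simply be expurgated, since deleting vertices from a sub-codebook would degrade both its Schreier-graph spectral gap and, directly, its resolvability guarantee, so they must be carried along and absorbed by the decoder. It is this rigidity of the codebook dictated by the expander condition, together with the need to control events (A)--(C) by a single union bound, that both weakens the correctness exponent in~\eqref{eq:correctness_cond} relative to Theorem~\ref{theo:new_error_exponent} and forces the logarithmic over-provisioning recorded in~\eqref{eq:size_total_codebook}--\eqref{eq:size_each_set}.
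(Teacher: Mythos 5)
Your architecture matches the paper's: split the codebook into $J_n$ sub-codebooks ${\cal K}_n^1,\ldots,{\cal K}_n^{J_n}$, each an $S_n$-orbit of $\bm{x}_P$ that is a $(P,R_E,\delta_2(n))$-radical spectral expander, use Theorem~\ref{theo:universal_resolvability} per sub-codebook for secrecy, and apply the Hayashi--Nagaoka decoder $\tilde Y_n$ of Eq.~\eqref{eq:previous_universal_decoder} for correctness. The secrecy half of your argument is essentially the paper's. The correctness half, however, has a gap that is not cosmetic.

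Your event (B) asserts that ${\cal M}_n=\uplus_j {\cal K}_n^j$ is a $(P,R_B,\delta_1(n))$-good codebook ``after a harmless relaxation of the constant.'' This understates what is needed. First, the good-set condition (Def.~\ref{def:good_set}) applied with $\bm{V}=\bm{V}^{\rm id}$ forces the codebook to be a genuine set, with no repetitions; for a random multiset of $\sim e^{nR_B}$ draws from ${\cal T}_P$ this fails with overwhelming probability whenever $R_B>H(X)_P/2$, which is allowed by the hypotheses. Second, even for cross-sub-codebook comparisons, the random construction only yields expectation bounds, and Markov's inequality necessarily leaves a \emph{fraction} of codewords for which Eq.~\eqref{eq:beta_approximate} fails, not merely a uniform multiplicative slack. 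This is precisely why the paper introduces the $(\alpha,\beta)$-setwise $(P,R,\delta)$-good codebook (Def.~\ref{def:weak_good}): the comparison is against ${\cal M}_n\setminus{\cal K}_n^j$ rather than ${\cal M}_n\setminus\{\bm{x}\}$, a $\beta$ factor is allowed, \emph{and} an $\alpha$-fraction of exceptional codewords ${\cal L}_n^j\subset{\cal K}_n^j$ is carved out. Your event (C) (shared codewords between sub-codebooks) is a special case of these exceptional codewords, but it is too narrow; the good-set condition can also fail for non-identity conditional types $\bm{V}$, so restricting to collisions does not capture all of ${\cal L}_n^j$.

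The critical consequence you miss is what to do with the exceptional codewords. You correctly observe that they cannot be deleted from the sub-codebook (that would degrade the Schreier-graph spectral gap and the resolvability guarantee), but you then conclude that ``they must be carried along and absorbed by the decoder,'' i.e., you build $\tilde Y_n$ from the \emph{full} codebook. That makes the type-II error in the Hayashi--Nagaoka bound uncontrollable: the bound on $\sum_{l\ne j}\sum_{\bm{y}\in{\cal K}_n^l}\Pi(\bm{y})$ passes through Corollary~\ref{cor:weak_collision}, which only applies to $\bm{y}\in{\cal K}_n^l\setminus{\cal L}_n^l$. The paper's resolution is the opposite: keep ${\cal L}_n^j$ in the \emph{encoder} (so Theorem~\ref{theo:universal_resolvability} still applies verbatim to each ${\cal K}_n^j$), but build the decoding POVM $\tilde Y_n$ only from $\uplus_j({\cal K}_n^j\setminus{\cal L}_n^j)$ and set $Y_n^K(j)=\sum_{\bm{x}\in{\cal K}_n^j\setminus{\cal L}_n^j}\tilde Y_n(\bm{x})$ (Eq.~\eqref{eq:sum_of_decoder}). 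The codewords in ${\cal L}_n^j$ then contribute only the \emph{additive} term $|{\cal L}_n^j|/K_n=16\epsilon_n$ to the average error (Step~1 of the paper's proof), and $\epsilon_n$ is optimized against the Hayashi--Nagaoka terms in Step~5 to produce the final $\alpha/(2(1+\alpha))$ exponent in Eq.~\eqref{eq:correctness_cond}. Without the setwise notion, the decoder expurgation, and the $\epsilon_n$ optimization, your plan cannot close; these are the substantive ideas that make the construction work.
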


Unlike Ref.~\cite{Datta2010}, the above statement does not rely on the random-coding argument, and thus we can construct a sequence of codes completely independent of the channel.
The idea of a construction of a private channel code is the same as the conventional strategy; the sender Alice sends a pair $(m,k)$ of a message $m$ and a random number $k$ so that the receiver Bob can decode $m$ while the eavesdropper Eve cannot distinguish the message $m$ due to the randomness by $k$.  The universal c-q channel coding is used for Bob's successful decoding of $m$, and the universal resolvability coding is used to disturb Eve's information gain through $k$.  The nontrivial part is a construction of a sequence of codebooks that are $(P, R_B)$-good codebooks for $(m, k)$ and $(P, R_E)$-radical spectral expander for $k$.  In fact, we use a slightly weaker notion of $(P, R_B)$-good codebook since the ultimate goal for Bob is to decode a message $m$.
\begin{definition}[$(\alpha,\beta)$-setwise $(P, R, \delta)$-good codebook]\label{def:weak_good}
    For a type $P\in{\cal P_n}$, a positive real number $R<H(X)_P$, and a real number $\delta>0$, let $K_n$ and $J_n$ be integers, and let $p_{{\cal L}_n}$ for a multiset ${\cal L}_n$ with every element in ${\cal X}^n$ be as defined in Eq.~\eqref{eq:p_M}. Let $({\cal K}_n^1,\ldots,{\cal K}_n^{J_n})$ be a $J_n$-tuple of sets ${\cal K}_n^j$ of sequences in ${\cal X}^n$ with $|{\cal K}_n^j|=K_n$ for any $j\in{\cal J}_n$, and let ${\cal M}_n\coloneqq \uplus_{j=1}^{J_n}{\cal K}_n^j$. Then, the multiset ${\cal M}_n$ is said to be $(\alpha,\beta)$-setwise $(P, R, \delta)$-good codebook if it satisfies the following conditions.
    \begin{enumerate}
        \item Every element of ${\cal M}_n$ is in ${\cal T}_P$.
        \item The cardinality $|{\cal M}_n\setminus{\cal K}_n^j|=K_n(J_n-1)=\exp[n R - \delta]$.
        \item For any $j\in\{1,\ldots,J_n\}$, there exists a subset ${\cal L}_n^j\subseteq{\cal K}_n^j$ with $|{\cal L}_n^j|= \alpha K_n$ such that any $\bm{x}\in {\cal K}_n^j\setminus{\cal L}_n^j$ satisfies
        \begin{equation}
            |{\cal T}_{\bm{V}}(\bm{x})\cap ({\cal M}_n\setminus{\cal K}_n^j)|\leq \beta |{\cal T}_{\bm{V}}(\bm{x})| e^{-n(H(X)_P-R)}, \label{eq:beta_approximate}
        \end{equation}
        for any $\bm{V}\in{\cal V}(\bm{x},{\cal X})$.
    \end{enumerate}
\end{definition}
The above definition implies that ${\cal M}_n\setminus{\cal K}_n^j$ is the $(P, R, \delta)$-good set defined in Def.~\ref{def:good_set} for all the entries of ${\cal K}_n^j$ up to a factor $\beta$ except its $\alpha$ portion ${\cal L}_n^j$. Thus, the $(P, R, \delta)$-good codebook defined in Def.~\ref{def:P-R_good} is equivalent to $(1,1)$-setwise $(P, R, \delta)$-good codebook by taking each ${\cal K}_n^j$ as a singleton. As a corollary of Lemma~\ref{lem:reduction_to_iid}, we have the following.
\begin{corollary}\label{cor:weak_collision}
    Let ${\cal M}_n\coloneqq \uplus_{j=1}^{J_n}{\cal K}_n^j$ be an $(\alpha,\beta)$-setwise $(P, R, \delta)$-good codebook as defined in Def.~\ref{def:weak_good}. Let $p_{{\cal L}_n}$ be as defined in Eq.~\eqref{eq:p_M}. Then, for any $j\in\{1,\ldots,J_n\}$ and for any $\bm{x}\in{\cal K}_n^j\setminus{\cal L}_n^j$, we have
    \begin{equation}
        \forall \bm{y}\in{\cal T}_P\setminus\{\bm{x}\}, \qquad \frac{1}{|S_{\bm{x}}|}\sum_{s\in S_{\bm{x}}}p_{{\cal M}_n\setminus{\cal K}_n^j}(s(\bm{y})) \leq \beta e^{-nH(X)_P + \delta}.
    \end{equation}
\end{corollary}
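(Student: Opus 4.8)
The plan is to re-run the argument of Lemma~\ref{lem:reduction_to_iid} almost verbatim, with the set ${\cal L}_n$ there replaced by the multiset ${\cal M}_n\setminus{\cal K}_n^j$ and with the extra multiplicative slack $\beta$ carried through. The point is that conditions~2 and~3 of Definition~\ref{def:weak_good} say precisely that, for each $j$ and each $\bm{x}\in{\cal K}_n^j\setminus{\cal L}_n^j$, the multiset ${\cal M}_n\setminus{\cal K}_n^j$ plays the role of a $(P,R,\delta)$-good set for $\bm{x}$ up to the factor $\beta$ in the conditional-type collision bound, so the corollary is essentially an application of that lemma.

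First I would fix $j\in\{1,\ldots,J_n\}$, a codeword $\bm{x}\in{\cal K}_n^j\setminus{\cal L}_n^j$, and a string $\bm{y}\in{\cal T}_P\setminus\{\bm{x}\}$, and set $\bm{V}\coloneqq\bm{V}_{\bm{y}|\bm{x}}$. As in Lemma~\ref{lem:reduction_to_iid}, the invariance of the joint type under simultaneous permutations together with the fact that every $s\in S_{\bm{x}}$ fixes $\bm{x}$ shows that for each $\bm{y}'\in{\cal T}_{\bm{V}}(\bm{x})$ there is $s'\in S_{\bm{x}}$ with $\bm{y}'=s'(\bm{y})$; hence $\frac{1}{|S_{\bm{x}}|}\sum_{s\in S_{\bm{x}}}f(s(\bm{y}))$ is unchanged upon replacing $\bm{y}$ by any $\bm{y}'\in{\cal T}_{\bm{V}}(\bm{x})$, for every function $f$. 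Averaging this identity over $\bm{y}'\in{\cal T}_{\bm{V}}(\bm{x})$ and using that the multiset $\{s(\bm{y}'):s\in S_{\bm{x}},\,\bm{y}'\in{\cal T}_{\bm{V}}(\bm{x})\}$ contains each element of ${\cal T}_{\bm{V}}(\bm{x})$ exactly $|S_{\bm{x}}|$ times, I obtain
\[
  \frac{1}{|S_{\bm{x}}|}\sum_{s\in S_{\bm{x}}}p_{{\cal M}_n\setminus{\cal K}_n^j}(s(\bm{y}))
  = \frac{1}{|{\cal T}_{\bm{V}}(\bm{x})|}\sum_{\bm{y}'\in{\cal T}_{\bm{V}}(\bm{x})}p_{{\cal M}_n\setminus{\cal K}_n^j}(\bm{y}')
  = \frac{\bigl|{\cal T}_{\bm{V}}(\bm{x})\cap({\cal M}_n\setminus{\cal K}_n^j)\bigr|}{|{\cal T}_{\bm{V}}(\bm{x})|\,|{\cal M}_n\setminus{\cal K}_n^j|},
\]
where the intersection in the numerator is counted with multiplicity.

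Then I would insert the two estimates from Definition~\ref{def:weak_good}: condition~3, applicable precisely because $\bm{x}\notin{\cal L}_n^j$, gives $|{\cal T}_{\bm{V}}(\bm{x})\cap({\cal M}_n\setminus{\cal K}_n^j)|\le\beta\,|{\cal T}_{\bm{V}}(\bm{x})|\,e^{-n(H(X)_P-R)}$, while condition~2 gives $|{\cal M}_n\setminus{\cal K}_n^j|=K_n(J_n-1)=e^{nR-\delta}$. Substituting, the right-hand side above is at most $\beta\,e^{-n(H(X)_P-R)}\,e^{-(nR-\delta)}=\beta\,e^{-nH(X)_P+\delta}$, which is the asserted bound. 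I do not expect a substantive obstacle here; the only point requiring care is the multiset bookkeeping for ${\cal M}_n\setminus{\cal K}_n^j$ — both $p_{{\cal M}_n\setminus{\cal K}_n^j}$ and the collision count must be read with multiplicities so that the middle equality is exact — but the orbit-counting step is insensitive to this, since $S_{\bm{x}}$ acts on ${\cal T}_{\bm{V}}(\bm{x})$ with uniform orbit size regardless of how many codewords sit at a given point.
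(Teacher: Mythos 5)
Your proof is correct and follows exactly the route the paper intends: the paper's own proof of the corollary is the one-line remark that it is obvious by repeating the proof of Lemma~\ref{lem:reduction_to_iid} with ${\cal L}_n$ replaced by ${\cal M}_n\setminus{\cal K}_n^j$, and you have spelled out precisely that substitution, carrying the extra factor $\beta$ from condition~3 of Definition~\ref{def:weak_good} and using condition~2 for $|{\cal M}_n\setminus{\cal K}_n^j|=e^{nR-\delta}$. Your remark about reading $p_{{\cal M}_n\setminus{\cal K}_n^j}$ and the collision count with multiplicities is the right point of care and does not change the orbit-counting step, so no gap remains.
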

\begin{proof}
    It is obvious by following the same proof as that of Lemma~\ref{lem:reduction_to_iid}.
\end{proof}
The following lemma states that there exists a sequence of codebooks, each of which satisfies both the $(P, R_E, \delta_2)$-spectral expander and the $(\alpha,\beta)$-setwise $(P, R_B, \delta'_1)$-good codebook through a random construction.
\begin{lemma} \label{lem:codebook_required}
    Let $P$ be a probability distribution and $R_B$ and $R_E$ are real numbers satisfying $R_E < R_B < H(X)_P$. Then, there exists a $J_n$-tuple $({\cal K}_n^1,\ldots,{\cal K}_n^{J_n})$ of sets of codewords in ${\cal X}^n$ with each $|{\cal K}_n^j|=K_n$ such that each ${\cal K}_n^j$ is a $(P, R_E, \delta_2(n))$-radical spectral expander with $\delta_2(n)\leq \log(12n\log|{\cal X}|)$ and the union $\biguplus_{j=1}^{J_n}{\cal K}_n^j\eqqcolon{\cal M}_n$ is a $(16\epsilon_n,\epsilon_n^{-1})$-setwise $(P, R_B, \delta'_1(n))$-good codebook with $\delta'_1(n)=\log(n+1)^{|{\cal X}|(|{\cal X}|+1)}$ for any $\epsilon_n>0$.
\end{lemma}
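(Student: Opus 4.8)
The plan is to produce both families of codebooks by a single randomized construction and then prune away the parts that fail. Fix $\epsilon_n>0$ and set $\epsilon\coloneqq e^{-nR_E/2}$. For a failure parameter $\delta>0$ chosen below, draw for each $j$ an independent multiset $\mathcal{I}_j$ of $\lceil(\log 4)\,\epsilon^{-2}\log(2|\mathcal{T}_P|/\delta)\rceil$ uniformly random elements of $S_n$, and set $S_j'\coloneqq\mathcal{I}_j\uplus\mathcal{I}_j^{-1}$ and $\mathcal{K}_n^j\coloneqq S_j'(\bm{x}_P)$. Each $\mathcal{K}_n^j$ is then a sub-multiset of $\mathcal{T}_P$ with $|\mathcal{K}_n^j|=K_n\coloneqq|S_j'|$, and $S_j'$ is symmetric, so conditions 1--3 of Def.~\ref{def:radical_spectral_expander} hold by construction; since $S_n/S_{\bm{x}_P}$ is in bijection with $\mathcal{T}_P$ (Lemma~\ref{lem:bijection_coset_type}), Lemma~\ref{lem:random_construction} with $G=S_n$, $H=S_{\bm{x}_P}$ and this $\epsilon$ gives condition 4, namely $\lambda(\Gamma(S_n,S_n/S_{\bm{x}_P},S_j'))\le e^{-nR_E/2}$, for each fixed $j$ with probability at least $1-\delta$. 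Using $|\mathcal{T}_P|\le e^{nH(X)_P}\le e^{n\log|\mathcal{X}|}$ (Eq.~\eqref{eq:type_cardinality_bound}) together with the crude a~priori bounds $K_n\ge e^{nR_E}$ and $J_n\le 2e^{n(R_B-R_E)}$ (which hold for $n$ large and serve only to fix $\delta$), the choice $\delta\coloneqq(8e^{n(R_B-R_E)})^{-1}$ keeps $\log(1/\delta)=O(n)$, so $K_n=\exp[nR_E+\delta_2(n)]$ with $\delta_2(n)\le\log\!\left(2(\log 4)\,n\,(H(X)_P+R_B-R_E)+O(1)\right)\le\log(12n\log|\mathcal{X}|)$ for $n$ large, because $H(X)_P\le\log|\mathcal{X}|$ and $R_B-R_E<\log|\mathcal{X}|$. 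A union bound over the $J_n$ groups then gives that, with probability at least $3/4$, every $\mathcal{K}_n^j$ is a $(P,R_E,\delta_2(n))$-radical spectral expander.

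For the setwise-good property I would set $\mathcal{M}_n\coloneqq\biguplus_{j=1}^{J_n}\mathcal{K}_n^j$, with $J_n$ chosen so that $K_n(J_n-1)=\exp[nR_B-\delta_1'(n)]$, $\delta_1'(n)=|\mathcal{X}|(|\mathcal{X}|+1)\log(n+1)$ (the integrality of $J_n$ perturbs $\delta_1'$ by only $o(1)$ since $K_n=e^{nR_E+o(n)}\ll e^{nR_B}$), so that conditions 1--2 of Def.~\ref{def:weak_good} are immediate. The decisive observation is that $s\mapsto s(\bm{x}_P)$ carries the uniform law on $S_n$ to the uniform law on $\mathcal{T}_P$, its fibres being the left cosets of $S_{\bm{x}_P}$, all of equal size; consequently every codeword of $\mathcal{M}_n\setminus\mathcal{K}_n^j=\biguplus_{j'\ne j}S_{j'}'(\bm{x}_P)$ is marginally uniform on $\mathcal{T}_P$ and, lying in groups $j'\ne j$, is independent of any codeword of $\mathcal{K}_n^j$. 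Fixing $\bm{x}\in\mathcal{T}_P$ and $\bm{V}\in\mathcal{V}(\bm{x},\mathcal{X})$ and writing $N_{\bm{x},\bm{V}}\coloneqq|\mathcal{T}_{\bm{V}}(\bm{x})\cap(\mathcal{M}_n\setminus\mathcal{K}_n^j)|$, linearity gives $\mathbb{E}[N_{\bm{x},\bm{V}}]\le K_n(J_n-1)\,|\mathcal{T}_{\bm{V}}(\bm{x})|/|\mathcal{T}_P|\le(n+1)^{-|\mathcal{X}|^2}\,|\mathcal{T}_{\bm{V}}(\bm{x})|\,e^{-n(H(X)_P-R_B)}$, using $|\mathcal{T}_P|\ge(n+1)^{-|\mathcal{X}|}e^{nH(X)_P}$ (Eq.~\eqref{eq:type_cardinality_bound}) and the value of $\delta_1'(n)$. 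Markov then yields $\Pr[N_{\bm{x},\bm{V}}>\epsilon_n^{-1}|\mathcal{T}_{\bm{V}}(\bm{x})|e^{-n(H(X)_P-R_B)}]\le\epsilon_n(n+1)^{-|\mathcal{X}|^2}$, and a union bound over the at most $(n+1)^{|\mathcal{X}|^2}$ conditional types (Eq.~\eqref{eq:cardinality_conditonal_type}) shows that $\bm{x}$ violates Eq.~\eqref{eq:beta_approximate} with $\beta=\epsilon_n^{-1}$ for some $\bm{V}$ --- call such an $\bm{x}$ \emph{bad} --- with probability at most $\epsilon_n$; conditioning on the value of each codeword of $\mathcal{K}_n^j$ and invoking the independence above, each such codeword is bad with probability at most $\epsilon_n$.

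Let $B$ be the number of pairs $(j,\bm{x})$ with $\bm{x}\in\mathcal{K}_n^j$ bad relative to $\mathcal{M}_n\setminus\mathcal{K}_n^j$. Then $\mathbb{E}[B]\le\epsilon_n K_n J_n$, so $\Pr[B>4\epsilon_n K_n J_n]\le1/4$ by Markov, and hence with probability at least $3/4$ at most $J_n/4$ of the groups contain more than $16\epsilon_n K_n$ bad codewords. Intersecting with the expander event above, with probability at least $1/2$ every $\mathcal{K}_n^j$ is a radical spectral expander and at most $J_n/4$ of the groups have more than $16\epsilon_n K_n$ bad codewords; fix any such outcome. Discard those offending groups and retain $J_n'\ge 3J_n/4$ of the others; choosing $J_n=\lceil\tfrac{4}{3}J_n'\rceil$ at the outset --- with $J_n'$ the integer solving $K_n(J_n'-1)=\exp[nR_B-\delta_1'(n)]$ up to rounding --- we may keep exactly $J_n'$ retained groups, relabel them $\mathcal{K}_n^1,\dots,\mathcal{K}_n^{J_n'}$, and set $\mathcal{M}_n\coloneqq\biguplus_{j=1}^{J_n'}\mathcal{K}_n^j$. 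Each retained $\mathcal{K}_n^j$ remains a $(P,R_E,\delta_2(n))$-radical spectral expander, that property being intrinsic to $S_j'$; and since $\mathcal{M}_n\setminus\mathcal{K}_n^j$ has only shrunk, every $N_{\bm{x},\bm{V}}$ has only decreased, so every codeword of $\mathcal{K}_n^j$ that was good before remains good, leaving at most $16\epsilon_n K_n$ bad ones. Taking $\mathcal{L}_n^j$ to be that set padded to cardinality $16\epsilon_n K_n$ makes $\mathcal{M}_n$, with these $J_n'$ groups, a $(16\epsilon_n,\epsilon_n^{-1})$-setwise $(P,R_B,\delta_1'(n))$-good codebook; such outcomes occur with probability at least $1/2$.

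The step I expect to be the main obstacle is this final pruning: a direct union bound over the $J_n=e^{\Theta(n)}$ groups is unavailable, since Markov only gives $\Pr[B_j>16\epsilon_n K_n]\le 1/16$ for a single group (the two factors of $\epsilon_n$ cancel, so shrinking $\epsilon_n$ does not help), and no Chernoff-type improvement is at hand --- conditioned on the other groups, $B_j$ is governed by the density in $\mathcal{T}_P$ of the set of bad codewords, which has expectation at most $\epsilon_n$ but need not concentrate, as it depends on all the other groups through the entire collision pattern. The prune-and-relabel device, legitimate precisely because the $(P,R,\delta)$-good-set condition is monotone under deleting codewords from the complement $\mathcal{M}_n\setminus\mathcal{K}_n^j$, is what circumvents this; everything else --- the mutually consistent choices of $\epsilon$, $\delta$, $K_n$, $J_n$ and the arithmetic verifying $\delta_1'(n)$ and $\delta_2(n)$ --- is routine bookkeeping.
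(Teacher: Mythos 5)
Your proposal is correct and follows essentially the same strategy as the paper's proof: a random construction of more groups than needed, per-codeword Markov bounds against the complementary codebook (each codeword is independent of $\mathcal{M}_n\setminus\mathcal{K}_n^j$), per-group aggregation by Markov and pigeonhole to conclude most groups are good, and a final pruning of bad groups justified by the monotonicity of the $(P,R,\delta)$-good-set condition under deleting codewords from $\mathcal{M}_n\setminus\mathcal{K}_n^j$. The only cosmetic differences are bookkeeping: the paper oversamples by a factor of $2$ (generates $2J_n$ groups, keeps $J_n$) and parameterizes the failure probabilities by $\zeta_1,\zeta_2$ before fixing $\zeta_2=1/2$, whereas you oversample by $4/3$ with hard-coded constants and use a union bound over conditional types where the paper sums them before applying Markov; your final heuristic concern about the inadequacy of a naive union bound over $J_n=e^{\Theta(n)}$ groups is exactly the reason the paper also uses the oversample-and-discard device.
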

\begin{proof}
    {\bf Step 0:} We will show that with a random construction, a tuple $({\cal K}_n^1,\ldots,{\cal K}_n^{J_n})$ satisfying the required properties can be obtained with a probability strictly larger than zero.  Let $\zeta_1$ and $\zeta_2$ be constants satisfying $\zeta_1+\zeta_2 < 1$.  Let us randomly sample $(\log4) e^{n R_E}\log(2e^{nH(X)_P} e^{n\log|{\cal X}|} \zeta_1^{-1})$ elements from $S_n$ with replacement and define ${\cal S}_1$ as the set of sampled elements.  Define a multiset ${\cal K}^1_n={\cal S}_1\bm{x}_P\uplus {\cal S}^{-1}_1\bm{x}_P$, where ${\cal S}_1\bm{x}_P\coloneqq \{s(\bm{x}_P):s\in {\cal S}_1\}$, which leads to 
    \begin{equation}
        |{\cal K}^1_n|=(2\log4) e^{nR_E} \log(2e^{n(H(X)_P+\log|{\cal X}|)} \zeta_1^{-1}) \eqqcolon K_n. \label{eq:size_K_prime}
    \end{equation}
    Repeat this procedure $2J_n$ times to obtain ${\cal S}_j$ and ${\cal K}^j_n$ for all $j\in\{1,\ldots,2J_n\}$, where $J_n$ is defined to satisfy   
    \begin{equation}
        J_n\geq 2\quad \text{and} \quad K_n(J_n-1) = \exp[n R_B - \delta'_1(n)], \label{eq:size_of_M}
    \end{equation}
    Define $\overline{\cal M}_n\coloneqq \uplus_{j=1}^{2J_n}{\cal K}^j_n$.  We show the statement of the lemma with the following three steps. In Step~1, we show that, for all $j\in 2J_n$, ${\cal K}_n^j$ simultaneously satisfies the $(P, R_E, \delta_2(n))$-radical spectral expander with a probability no smaller than $1 - \zeta_1$. In Step~2, we show that any element $\bm{x}\in\overline{\cal M}_n$ satisfies Eq.~\eqref{eq:beta_approximate} for any $\bm{V}\in{\cal V}(\bm{x},{\cal X})$ and $\beta=\epsilon_n^{-1}$ with the probability no smaller than $4\epsilon_n$. In Step~3, we show that each ${\cal K}_n^j$ satisfies the third condition of $(\alpha,\beta)$-setwise $(P, R_B, \delta'_1(n))$-good codebook in Def.~\ref{def:weak_good} with $\alpha\beta={\cal O}(1)$ against $\overline{\cal M}_n$ with a probability no smaller than $\zeta_2 / 2$. In Step~4, by eliminating half of the $2J_n$-tuple, we show that the resulting $J_n$-tuple, ${\cal M}_n$, is $(\alpha,\beta)$-setwise $(P, R_B, \delta'_1(n))$-good codebook.

    \noindent {\bf Step~1:} This step aims to show
    \begin{equation}
        \mathrm{Pr}\bigl[{\cal K}_n^j\text{ satisfies $(P, R_E, \delta_2(n))$-radical spectral expander for any }j\in\{1,\ldots,2J_n\}\bigr] \geq 1 - \zeta_1. \label{eq:simultaneous_expander}
    \end{equation}
    From Lemma~\ref{lem:random_construction} and the construction of ${\cal S}_j$ given above, we have for any $j\in\{1,\ldots,2J_n\}$ that
    \begin{equation}
        \mathrm{Pr}\left[\lambda\Bigl(\Gamma(S_n,S_n/S_{\bm{x}_P},{\cal S}_j\uplus{\cal S}_j^{-1})\Bigr) \geq e^{-nR_E/2}\right] \leq \zeta_1 e^{-n\log|{\cal X}|}, \label{eq:individual_expander}
    \end{equation}
    where we used $|S_n/S_{\bm{x}_P}|=|{\cal T}_P|\leq e^{nH(X)_P}$ from Lemma~\ref{lem:bijection_coset_type} and Eq.~\eqref{eq:type_cardinality_bound}. When $n$ satisfies $2\zeta_1^{-1}\leq e^{n\log|{\cal X|}}$, we have
    \begin{equation}
        K_n= (2\log4)e^{nR_E} \log(2e^{n(H(X)_P+\log|{\cal X}|)} \zeta_1^{-1}) \leq 4e^{n R_E} \log(e^{3n \log|{\cal X}|}) = e^{n R_E + \log(12n\log|{\cal X}|)}. \label{eq:bound_cardinality}
    \end{equation}
    From Eqs.~\eqref{eq:individual_expander} and \eqref{eq:bound_cardinality}, the probability that ${\cal K}_n^j$ is a $(P, R_E, \delta_2(n))$-radical spectral expander with $\delta_2(n)\leq \log(12n\log|{\cal X}|)$ is no smaller than $1-\zeta_1 e^{-n\log|{\cal X}|}$. Since $2 J_n \leq e^{n\log|{\cal X}|}$ holds from Eq.~\eqref{eq:size_of_M} for a sufficiently large $n$, Eq.~\eqref{eq:simultaneous_expander} follows by applying the union bound.

    \noindent {\bf Step 2:} This step aims to show that any $\bm{x}\in \overline{\cal M}_n$ satisfies
    \begin{equation}
        \mathrm{Pr}\left[\sum_{\bm{V}\in{\cal V}(\bm{x},{\cal X})}\frac{\bigl|{\cal T}_{\bm{V}}(\bm{x})\cap (\overline{\cal M}_n\setminus {\cal K}^l_n)\bigr|}{|{\cal T}_{\bm{V}}(\bm{x})|}\geq \epsilon_n^{-1}e^{-n (H(X)_P - R_B)}\right] \leq 4\epsilon_n. \label{eq:high_probability_good}
    \end{equation}
    Consider an arbitrary element $\bm{x}\in{\cal K}^l_n$ for an arbitrary $l\in\{1,\ldots,2J_n\}$. Since any element of $(\uplus_{j\in\{1,\ldots,2J_n\}\setminus\{l\}} {\cal S}_j(\bm{x}_P))$ is uniformly randomly chosen from ${\cal T}_P$ and independently of $\bm{x}$, the probability that each element of $(\uplus_{j\in\{1,\ldots,2J_n\}\setminus\{l\}} {\cal S}_j(\bm{x}_P))$ is contained in ${\cal T}_{\bm{V}}(\bm{x})$ is given by $|{\cal T}_{\bm{V}}(\bm{x})|/|{\cal T}_P|$ for any $\bm{V}\in{\cal V}(x,{\cal X})$.
    Thus, regarding all the elements of $\uplus_{j\in \{1,\ldots,2J_n\}\setminus\{l\}} {\cal S}_j$ as random variables, we have
    \begin{align}
        \mathbb{E}_{\uplus_{j\in\{1,\ldots,2J_n\}\setminus\{l\}} {\cal S}_j} \left[\frac{|{\cal T}_{\bm{V}}(\bm{x})\cap \bigl(\uplus_{j\in\{1,\ldots,2J_n\}\setminus\{l\}} {\cal S}_j(\bm{x}_P)\bigr)|}{|{\cal T}_{\bm{V}}(\bm{x})|}\right] &= \frac{|{\cal T}_{\bm{V}}(\bm{x})|}{|{\cal T}_P||{\cal T}_{\bm{V}}(\bm{x})|} \frac{K_n}{2} (2J_n-1) \\
        &\leq K_n J_n (n+1)^{|{\cal X}|}e^{-nH(X)_P}, \label{eq:expectation_S_j}
    \end{align}
    for any $\bm{V}\in {\cal V}(\bm{x},{\cal X})$, where we used $|{\cal S}_j|=K_n/2$ in the equality and Eq.~\eqref{eq:type_cardinality_bound} in the inequality.  Since $S_n \ni s\mapsto s^{-1} \in S_n$ is a one-to-one map, the replacement of ${\cal S}_j$ with ${\cal S}_j^{-1}$ implies
    \begin{align}
        \mathbb{E}_{\uplus_{j\in\{1,\ldots,2J_n\}\setminus\{l\}} {\cal S}_j} \left[\frac{\bigl|{\cal T}_{\bm{V}}(\bm{x})\cap \bigl(\uplus_{j\in \{1,\ldots,2J_n\}\setminus\{l\}} {\cal S}^{-1}_j(\bm{x}_P)\bigr)\bigr|}{|{\cal T}_{\bm{V}}(\bm{x})|}\right] \leq K_nJ_n(n+1)^{|{\cal X}|}e^{-nH(X)_P} \label{eq:expectation_S_inverse_j}
    \end{align}
    The combination of Eqs.~\eqref{eq:expectation_S_j} and \eqref{eq:expectation_S_inverse_j} implies
    \begin{equation}
        \mathbb{E}_{\uplus_{j\in\{1,\ldots,2J_n\}\setminus\{l\}} {\cal S}_j} \left[\frac{\bigl|{\cal T}_{\bm{V}}(\bm{x})\cap (\overline{\cal M}_n\setminus {\cal K}^l_n)\bigr|}{|{\cal T}_{\bm{V}}(\bm{x})|}\right] \leq 2K_n J_n (n+1)^{|{\cal X}|}e^{-nH(X)_P}.
    \end{equation}
    Using the fact that $|{\cal V}(\bm{x},{\cal X})|\leq (n+1)^{|{\cal X}|^2}$ as in Eq.~\eqref{eq:cardinality_conditonal_type}, we have
    \begin{align}
        \mathbb{E}_{\uplus_{j\in\{1,\ldots,2J_n\}\setminus\{l\}} {\cal S}_j} \left[\sum_{\bm{V}\in{\cal V}(\bm{x},{\cal X})}\frac{\bigl|{\cal T}_{\bm{V}}(\bm{x})\cap (\overline{\cal M}_n\setminus {\cal K}^l_n)\bigr|}{|{\cal T}_{\bm{V}}(\bm{x})|}\right] &\leq 2K_n J_n (n+1)^{|{\cal X}|(|{\cal X}|+1)}e^{-nH(X)_P} \\
        &\leq 4e^{-n(H(X)_P  - R_B)},
    \end{align}
    where we used $J_n/(J_n-1)\leq 2$ from Eq.~\eqref{eq:size_of_M}.
    Since the left-hand side essentially counts the number of randomly generated elements of $S_n$ that are in the set, one can apply Markov's inequality to have
    \begin{equation}
        \mathrm{Pr}\left[\sum_{\bm{V}\in{\cal V}(\bm{x},{\cal X})}\frac{\bigl|{\cal T}_{\bm{V}}(\bm{x})\cap (\overline{\cal M}_n\setminus {\cal K}^l_n)\bigr|}{|{\cal T}_{\bm{V}}(\bm{x})|}\geq \epsilon_n^{-1}e^{-n (H(X)_P - R_B)}\right] \leq 4\epsilon_n.
    \end{equation}
    This directly implies Eq.~\eqref{eq:high_probability_good}.
    
    \noindent{\bf Step 3:}
    For any $l\in\{1,\ldots,2 J_n\}$, let Condition G for $\bm{x}\in{\cal K}_n^l$ be defined as 
    \begin{equation}
        \text{Condition G:}\qquad \forall\bm{V}\in{\cal V}(\bm{x},{\cal X}),\quad \frac{\bigl|{\cal T}_{\bm{V}}(\bm{x})\cap (\overline{\cal M}_n\setminus {\cal K}^l_n)\bigr|}{|{\cal T}_{\bm{V}}(\bm{x})|}\leq \epsilon_n^{-1}e^{-n (H(X)_P - R_B)}, \label{eq:condition_okay}
    \end{equation}
    Then, this step aims to show 
    \begin{equation}
        \mathrm{Pr}\bigl[\exists {\cal L}_n^l\subseteq{\cal K}_n^l \text{ such that } |{\cal L}^l_n|=\frac{8\epsilon_n}{\zeta_2}K_n \text{ and }\forall \bm{x}\in{\cal K}_n^l\setminus{\cal L}_n^l \text{ satisfies Condition G} \bigr] \geq 1-\frac{\zeta_2}{2}. \label{eq:existence_L}
    \end{equation}
    For any $k\in\{1,\ldots,K_n\}$, let $X^l_k$ be a random variable that takes the value zero if the $k$-th element $\bm{x}_k\in{\cal K}_n^l$ satisfies Condition G and one otherwise.
    Then, $\sum_{k=1}^{K_n}X^l_k$ corresponds to the number of elements in ${\cal K}_n^l$ that violates Condition G.
    Thus, Eq.~\eqref{eq:high_probability_good} yields
    \begin{equation}
        \mathbb{E}_{\uplus_{j\in\{1,\ldots,2J_n\}\setminus\{l\}} {\cal S}_j}\left[\sum_{k=1}^{K_n} X^l_k\right] \leq 4\epsilon_n K_n
    \end{equation}
    Since $X^l_k$ is a binary random variable, we apply Markov's inequality to have
    \begin{equation}
        \mathrm{Pr}\left[\sum_{k=1}^{K_n}X^l_k \geq \frac{8\epsilon_n}{\zeta_2}K_n\right] \leq \frac{\zeta_2}{2}, \label{eq:each_K_goal}
    \end{equation}
    which immediately implies Eq.~\eqref{eq:existence_L}.

    \noindent {\bf Step 4:}
    This step aims to show
    \begin{equation}
        \begin{split}
        &\mathrm{Pr}\Bigl[\exists{\cal M}_n\subset\overline{\cal M}_n \text{ is $(\epsilon_n\zeta_2/8,\epsilon_n^{-1})$-setwise $(P, R_B, \delta'_1(n))$-good codebook} \\
        &\qquad \text{with }|{\cal M}_n|=K_nJ_n \text{ and }\delta'_1(n)=|{\cal X}|(|{\cal X}|+1)\log(n+1) \Bigr] \geq 1 - \zeta_2.
        \end{split}\label{eq:setwise_good_high_probability}
    \end{equation}
    For any $l\in\{1,\ldots,2J_n\}$, let Condition H be defined for ${\cal K}_n^l$ as
    \begin{equation}
        \text{Condition H:}\quad \exists {\cal L}_n^l\subseteq{\cal K}_n^l \text{ such that } |{\cal L}^l_n|=\frac{8\epsilon_n}{\zeta_2}K_n \text{ and }\forall \bm{x}\in{\cal K}_n^l\setminus{\cal L}_n^l \text{ satisfies Condition G}.
    \end{equation}
    Let $Z_l$ be a random variable that takes the value zero if ${\cal K}_n^l$ satisfies Condition H and one otherwise. Then, $\sum_{l=1}^{2J_n}Z_l$ corresponds to the number of sets that violate Condition H. From Eq.~\eqref{eq:existence_L}, we have that
    \begin{equation}
        \mathbb{E}\left[\sum_{l=1}^{2J_n}Z_l\right] \leq \zeta_2 J_n.
    \end{equation}
    Since each $Z_l$ is a binary random variable, we can apply Markov's inequality to have
    \begin{equation}
        \mathrm{Pr}\left[\sum_{l=1}^{2J_n}Z_l \geq J_n\right] \leq \zeta_2.
    \end{equation}
    This means that with a probability no smaller than $\zeta_2$, at least $J_n$ sets satisfy Condition H. Thus, we prove the statement Eq.~\eqref{eq:setwise_good_high_probability}.

    \noindent Finally, we set $\zeta_2=1/2$ and $\zeta_1$ be a small constant satisfying $\zeta_1<1/2$. Then, from Step~1 and Step~4 and the union bound, the claimed random construction gives the $J_n$-tuple $({\cal K}^1_n,\ldots,{\cal K}_n^{J_n})$ that satisfies the claimed property with a probability no smaller than $1/2-\zeta_1$. One then obtains the claimed tuple of sets by repeating until success.
\end{proof}

With the construction of a codebook given in Lemma~\ref{lem:codebook_required}, we can show Theorem~\ref{theo:private_channel} as follows.
\begin{proof}[Proof of Theorem~\ref{theo:private_channel}]
    {\bf Step 0:} For a sufficiently large $n$, we take a $J_n$-tuple $({\cal K}_n^1,\ldots,{\cal K}_n^{J_n})$ of sets of codewords as given in Lemma~\ref{lem:codebook_required}, and set ${\cal M}_n\coloneqq \uplus_{j=1}^{J_n} {\cal K}_n^j$. Then, Eqs.~\eqref{eq:size_total_codebook} and \eqref{eq:size_each_set} are automatically satisfied. We define the stochastic encoder $\varphi_n^{K}$ as a map from $j$ to an element of ${\cal K}_n^j$ uniformly randomly. Then, we have
    \begin{equation}
        W_{BE}^{\otimes n}(\varphi_n^K(j)) = \frac{1}{K_n}\sum_{\bm{x}\in{\cal K}_n^j}W^{\otimes n}_{BE}(\bm{x}). \label{eq:average_codeword}
    \end{equation}
    Furthermore, we define the decoder $Y^K_n$ as 
    \begin{equation}
        Y^K_n(j) = \sum_{\bm{x}\in{\cal K}_n^j\setminus{\cal L}_n^j} \tilde{Y}_n(\bm{x}), \label{eq:sum_of_decoder}
    \end{equation}
    where $\tilde{Y}_n(\bm{x})$ is as 
    \begin{equation}
        \tilde{Y}_n(\bm{x}) = \left(\sum_{j\in{\cal J}_n}\sum_{\bm{y}\in{\cal K}_n^j\setminus{\cal L}_n^j}\Pi(\bm{y})\right)^{-\frac{1}{2}} \Pi(\bm{x})\left(\sum_{j\in{\cal J}_n}\sum_{\bm{y}\in{\cal K}_n^j\setminus{\cal L}_n^j}\Pi(\bm{y})\right)^{-\frac{1}{2}} ,
    \end{equation}
    with $\Pi(\bm{x})$ defined in Eq.~\eqref{eq:likelihood_ratio_test}.
    With these definitions of the stochastic encoder and the decoder, we prove Eqs.~\eqref{eq:secrecy_cond} and \eqref{eq:correctness_cond}. In fact, Eq.~\eqref{eq:secrecy_cond} simply follows from the fact that, for sufficiently large $n$, ${\cal K}_n^j$ satisfies the $(P, R_E, \delta_2(n))$-radical spectral expander for any $j\in{\cal J}_n$ by construction, which leads to
    \begin{equation}
        \left\|W_E^{\otimes n}(\varphi_n^{K}(j)) - \overline{W}_E^n(P)\right\|_1 \leq \min_{\alpha\in[1, 2]}\exp\left[-\frac{\alpha - 1}{\alpha}n\bigl(R_E- \breve{I}_{\alpha}(X:E)_{W_E^P}\bigr) + \frac{2-\alpha}{\alpha}\log 2\right],
    \end{equation}
    from Eq.~\eqref{eq:resolvability_alt} in Theorem~\ref{theo:universal_resolvability}. This proves Eq.~\eqref{eq:secrecy_cond}.
    
    It remains to show that Eq.~\eqref{eq:correctness_cond} is satisfied when ${\cal M}_n\coloneqq \uplus_{j\in{\cal J}_n} {\cal K}_n^j$ satisfies the $(16\epsilon_n,\epsilon_n^{-1})$-setwise $(P, R_B, \delta'_1(n))$-good codebook property as given in Lemma~\ref{lem:codebook_required} with an appropriate value of $\epsilon_n$, 
    where the average decoding error $P_{\rm err}({\cal C}_n,W_B)$ is given by
    \begin{align}
        P_{\rm err}({\cal C}_n,W_B) &= \frac{1}{J_n}\sum_{j=1}^{J_n}\tr\left[\frac{1}{K_n}\sum_{\bm{x}\in{\cal K}^j_n}W_B^{\otimes n}(\bm{x}) \bigl(I-Y_n^{K}(j)\bigr)\right]. 
    \end{align}
    We split the proof of this statement into the following five steps.

    \noindent {\bf Step 1:}
    This step splits the error evaluation of bad codewords, i.e., the codewords that are included in $\uplus_{j=1}^{J_n}{\cal L}_n^j$, from the others.
    The definition of $(16\epsilon_n,\epsilon_n^{-1})$-setwise $(P, R_B, \delta'_1(n))$-good codebook property implies
    \begin{align}
        P_{\rm err}({\cal C}_n,W_B) &= \frac{1}{J_n}\sum_{j=1}^{J_n}\tr\left[\frac{1}{K_n}\left(\sum_{\bm{x}\in{\cal K}^j_n\setminus{\cal L}_n^j}W_B^{\otimes n}(\bm{x}) \bigl(I-Y_n^{K}(j)\bigr) + \sum_{\bm{x}\in{\cal L}_n^j}W_B^{\otimes n}(\bm{x}) \bigl(I-Y_n^{K}(j)\bigr)\right)\right] \\
        & \leq \frac{1}{J_n}\sum_{j=1}^{J_n}\left(\tr\left[\frac{1}{K_n}\sum_{\bm{x}\in{\cal K}^j_n\setminus{\cal L}_n^j}W_B^{\otimes n}(\bm{x}) \bigl(I-Y_n^{K}(j)\bigr)\right] + \frac{|{\cal L}_n^j|}{ K_n}\right) \\
        &= \frac{1}{J_n}\sum_{j=1}^{J_n}\tr\left[\frac{1}{K_n}\sum_{\bm{x}\in{\cal K}^j_n\setminus{\cal L}_n^j}W_B^{\otimes n}(\bm{x}) \bigl(I-Y_n^{K}(j)\bigr)\right] + 16\epsilon_n. \label{eq:separate_bad}
    \end{align}
    \noindent {\bf Step 2:} This step aims to split the first term of Eq.~\eqref{eq:separate_bad} into the type I and the type II errors.
    Applying Hayashi-Nagaoka inequality~\cite{Hayashi2002} to $I-Y_n^K(j)$, we have
    \begin{equation}
        I - Y_n^K(j) = I - \sum_{\bm{x}\in{\cal K}_n^j\setminus{\cal L}_n^j}\tilde{Y}_n(\bm{x}) \leq 2 \left(I - \sum_{\bm{x}\in{\cal K}_n^j\setminus{\cal L}_n^j}\Pi(\bm{x}) \right) + 4 \sum_{l\in{\cal J}_n\setminus\{j\}}\sum_{{\cal K}_n^l\setminus{\cal L}_n^l}\Pi(\bm{x}),
    \end{equation}
    where $\Pi(\bm{x})$ is defined in Eq.~\eqref{eq:likelihood_ratio_test}.
    We thus have
    \begin{align}
        &\frac{1}{J_n}\sum_{j=1}^{J_n}\tr\left[\frac{1}{K_n}\sum_{\bm{x}\in{\cal K}^j_n\setminus{\cal L}_n^j}W_B^{\otimes n}(\bm{x}) \bigl(I-Y_n^{K}(j)\bigr)\right] \nonumber \\
        & \leq  \frac{1}{M_n}\sum_{j\in{\cal J}_n}\tr\left[\sum_{\bm{x}\in{\cal K}^j_n\setminus{\cal L}_n^j}W_B^{\otimes n}(\bm{x}) \left(2 \biggl(I - \sum_{\bm{y}\in{\cal K}_n^j\setminus{\cal L}_n^j}\Pi(\bm{y}) \biggr) + 4 \sum_{\bm{y}\in{\cal K}_n^l\setminus{\cal L}_n^l}\Pi(\bm{y})\right)\right] \\
        \begin{split}
        & \leq \frac{2}{M_n} \sum_{j\in{\cal J}_n}\tr\left[\sum_{\bm{x}\in{\cal K}^j_n\setminus{\cal L}_n^j}W_B^{\otimes n}(\bm{x})\bigl(I-\Pi(\bm{x})\bigr)\right] \\
        & \qquad + \frac{4}{M_n} \sum_{j\in{\cal J}_n}\tr\left[\sum_{\bm{x}\in{\cal K}^j_n\setminus{\cal L}_n^j}W_B^{\otimes n}(\bm{x}) \sum_{l\in{\cal J}_n\setminus\{j\}}\sum_{\bm{y}\in{\cal K}_n^l\setminus{\cal L}_n^l}\Pi(\bm{y})\right],
        \end{split}\label{eq:Hayashi_Nagaoka}
    \end{align}
    where we used $I - \sum_{\bm{y}\in{\cal K}_n^j\setminus{\cal L}_n^j}\Pi(\bm{y}) \leq I - \Pi(\bm{x})$ for any $\bm{x}\in{\cal K}_n^j\setminus{\cal L}_n^j$ in the second inequality. 

    \noindent {\bf Step 3:}
    This step aims to evaluate the first term of Eq.~\eqref{eq:Hayashi_Nagaoka}, so-called the type-I error.
    In Eq.~(21) of Ref.~\cite{Hayashi2009}, it is shown that the following holds for any $\alpha\in[0,1]$:
    \begin{equation}
        \tr\left[W_B^{\otimes n}(\bm{x})\bigl(I-\Pi(\bm{x})\bigr)\right] \leq (n+1)^{d+\frac{\alpha|{\cal X}|(d+2)(d-1)}{2}}C_n^{\alpha} e^{-n\alpha I_{1-\alpha}(X:B)_{W_B^P}}, 
        \label{eq:first_term_bound}
    \end{equation}
    which then leads to 
    \begin{equation}
        \frac{2}{M_n} \sum_{j\in{\cal J}_n}\tr\left[\sum_{\bm{x}\in{\cal K}^j_n\setminus{\cal L}_n^j}W_B^{\otimes n}(\bm{x})\bigl(I-\Pi(\bm{x})\bigr)\right] \leq 2(n+1)^{d+\frac{\alpha|{\cal X}|(d+2)(d-1)}{2}}C_n^{\alpha} e^{-n\alpha I_{1-\alpha}(X:B)_{W_B^P}} .
        \label{eq:type_I_bounded}
    \end{equation}

    \noindent {\bf Step 4:} The remaining task is to evaluate the second term in Eq.~\eqref{eq:type_I_bounded}, so called the type-II error. Let us reorder the summation as
    \begin{align}
    \begin{split}
        &\frac{4}{M_n}\sum_{j\in{\cal J}_n}\tr\left[\sum_{\bm{x}\in{\cal K}^j_n\setminus{\cal L}_n^j}W_B^{\otimes n}(\bm{x}) \sum_{l\in{\cal J}_n\setminus\{j\}}\sum_{\bm{y}\in{\cal K}_n^l\setminus{\cal L}_n^l}\Pi(\bm{y})\right] \\
        &= 4\sum_{j\in{\cal J}_n} \sum_{\bm{y}\in{\cal K}_n^j\setminus{\cal L}_n^j}\tr\left[\Pi(\bm{y})\,\frac{1}{M_n}\sum_{l\in{\cal J}_n\setminus\{j\}}\sum_{\bm{x}\in{\cal K}_n^l\setminus{\cal L}_n^l}W_B^{\otimes n}(\bm{x}) \right]. 
        \end{split}\label{eq:reordered}   
    \end{align}
    Using the probability distribution $p_{{\cal L}_n}$ defined in Eq.~\eqref{eq:p_M}, we can upper bound the summation over $W_B^{\otimes n}(\bm{x})$ as 
    \begin{align}
        \frac{1}{M_n}\sum_{l\in{\cal J}_n\setminus\{j\}}\sum_{\bm{x}\in{\cal K}_n^l\setminus{\cal L}_n^l}W_B^{\otimes n}(\bm{x}) &\leq \frac{1}{M_n}\sum_{l\in{\cal J}_n\setminus\{j\}}\sum_{\bm{x}\in{\cal K}_n^l}W_B^{\otimes n}(\bm{x})\\
        &\leq \sum_{\bm{x}\in{\cal T}_P\setminus{\cal K}_n^j}p_{{\cal M}_n\setminus{\cal K}_n^j}(\bm{x})\, W_B^{\otimes n}(\bm{x}),
    \end{align}
    which holds for any $j\in{\cal J}_n$. By inserting the above inequality to Eq.~\eqref{eq:reordered} and randomly permute the operators in the trace by $S_{\bm{y}}$, we have
    \begin{align}
        &\frac{4}{M_n}\sum_{j\in{\cal J}_n}\tr\left[\sum_{\bm{x}\in{\cal K}^j_n\setminus{\cal L}_n^j}W_B^{\otimes n}(\bm{x}) \sum_{l\in{\cal J}_n\setminus\{j\}}\sum_{\bm{y}\in{\cal K}_n^l\setminus{\cal L}_n^l}\Pi(\bm{y})\right] \nonumber \\
        &\leq 4\sum_{j\in{\cal J}_n} \sum_{\bm{y}\in{\cal K}_n^j\setminus{\cal L}_n^j}\tr\left[\Pi(\bm{y})\sum_{\bm{x}\in{\cal T}_P\setminus{\cal K}_n^j}p_{{\cal M}_n\setminus{\cal K}_n^j}(\bm{x})\, W_B^{\otimes n}(\bm{x}) \right] \\
        &=4\sum_{j\in{\cal J}_n} \sum_{\bm{y}\in{\cal K}_n^j\setminus{\cal L}_n^j}\frac{1}{|S_{\bm{y}}|}\sum_{s\in S_{\bm{y}}}\tr\left[\Pi(\bm{y}) \sum_{\bm{x}\in{\cal T}_P\setminus\{\bm{y}\}}p_{{\cal M}_n\setminus{\cal K}_n^j}(\bm{x})\,V_s W_B^{\otimes n}(\bm{x}) V_s^{\dagger}\right] \\
        &= 4\sum_{j\in{\cal J}_n} \sum_{\bm{y}\in{\cal K}_n^j\setminus{\cal L}_n^j}\frac{1}{|S_{\bm{y}}|}\sum_{s'\in S_{\bm{y}}}\tr\left[\Pi(\bm{y}) \sum_{\bm{x}\in{\cal T}_P\setminus\{\bm{y}\}}p_{{\cal M}_n\setminus{\cal K}_n^j}(s'(\bm{x}))\, W_B^{\otimes n}(\bm{x}) \right], \label{eq:before_applying_Lemma}
    \end{align}
    where we used in the first equality that $\Pi(\bm{y})$ is invariant under the adjoint action of $V_s$ for any $s\in S_{\bm{y}}$, and we used in the last equality that the set ${\cal T}_P\setminus\{\bm{y}\}$ is invariant under the action of $s\in S_{\bm{y}}$.
    By applying Corollary~\ref{cor:weak_collision}, we have
    \begin{align}
        &4\sum_{j\in{\cal J}_n} \sum_{\bm{y}\in{\cal K}_n^j\setminus{\cal L}_n^j}\frac{1}{|S_{\bm{y}}|}\sum_{s'\in S_{\bm{y}}}\tr\left[\Pi(\bm{y}) \sum_{\bm{x}\in{\cal T}_P\setminus\{\bm{y}\}}p_{{\cal M}_n\setminus{\cal K}_n^j}(s'(\bm{x}))\, W_B^{\otimes n}(\bm{x}) \right] \\
        &\leq 4\sum_{j\in{\cal J}_n} \sum_{\bm{y}\in{\cal K}_n^j\setminus{\cal L}_n^j} \tr\left[\Pi(\bm{y}) \sum_{\bm{x}\in{\cal T}_P\setminus\{\bm{y}\}}\epsilon_n^{-1}e^{-nH(X)_P + \delta'_1(n)}W_B^{\otimes n}(\bm{x})\right] \\
        &\leq 4\epsilon_n^{-1}e^{\delta'_1(n)}\sum_{\bm{y}\in{\cal M}_n}\tr\left[\Pi(\bm{y})\biggl(\sum_{x\in{\cal X}}P(x)W_B(x)\biggr)^{\otimes n}\right] \\
        &\leq 4\epsilon_n^{-1}e^{\delta'_1(n)}(n+1)^{\frac{(d+2)(d-1)}{2}}\sum_{\bm{y}\in{\cal M}_n}\tr\left[\Pi(\bm{y})\sigma_{U,n}\right], \label{eq:replacement_by_universal_symmetric}
    \end{align}
    where we used Eqs.~\eqref{eq:bound_universal_symmetric} and \eqref{eq:quantum_coefficient_bound} in the last inequality. Recalling the definition of $\Pi(\bm{x})$ in Eq.~\eqref{eq:likelihood_ratio_test}, we have
    \begin{equation}
        \tr[\Pi(\bm{y})\sigma_{U,n}] \leq \tr[\Pi(\bm{y}) C_n^{-1}\sigma_{\bm{y}}] \leq C_n^{-1}. \label{eq:constant_bound}
    \end{equation}
    Combining Eqs.~\eqref{eq:before_applying_Lemma}, \eqref{eq:replacement_by_universal_symmetric}, and \eqref{eq:constant_bound}, we have an upper bound on the type II error as
    \begin{align}
        \frac{4}{M_n}\sum_{j\in{\cal J}_n}\tr\left[\sum_{\bm{x}\in{\cal K}^j_n\setminus{\cal L}_n^j}W_B^{\otimes n}(\bm{x}) \sum_{l\in{\cal J}_n\setminus\{j\}}\sum_{\bm{y}\in{\cal K}_n^l\setminus{\cal L}_n^l}\Pi(\bm{y})\right] &\leq 4 K_n J_n\epsilon_n^{-1}e^{\delta'_1(n)}(n+1)^{\frac{(d+2)(d-1)}{2}}C_n^{-1} \\
        &\leq 8(n+1)^{\frac{(d+2)(d-1)}{2}} e^{n R_B}C_n^{-1}, \label{eq:type_II_bounded}
    \end{align}
    where we used Eq.~\eqref{eq:size_of_M} and $J_n/(J_n-1)\leq 2$ from the construction of the codebook.

    \noindent {\bf Step 5:}
    We finally prove Eq.~\eqref{eq:correctness_cond} in this step. From Eqs.~\eqref{eq:separate_bad}, \eqref{eq:type_I_bounded}, and \eqref{eq:type_II_bounded}, we have
    \begin{equation}
        \begin{split}
            P_{\rm err}({\cal C}_n,W_B) &\leq 2e^{[2d+\alpha|{\cal X}|(d+2)(d-1)]\log(n)/2}C_n^{\alpha} e^{-n\alpha I_{1-\alpha}(X:B)_{W_B^P}} \\
            &\qquad + 8\epsilon_n^{-1} e^{nR_B} e^{\delta'_1(n)+(d+2)(d-1)\log(n+1)/2} C_n^{-1}+16\epsilon_n.
        \end{split}
    \end{equation}
    By setting free parameters $C_n$ and $\epsilon_n$ as 
    \begin{align}
        C_n&=e^{\frac{n}{1+\alpha} (\alpha I_{1-\alpha}(X:B)_{W_B^P} + R_B)},\\
        \epsilon_n&= e^{\frac{\alpha n}{2(1+\alpha)}(I_{1-\alpha}(X:B)_{W_B^P} - R_B)},
    \end{align}
    we have
    \begin{equation}
        P_{\rm err}({\cal C}_n,W_B) \leq \min_{\alpha\in[0,1]} \exp\left[-\frac{\alpha}{2(1+\alpha)}n\bigl(I_{1-\alpha}(X:B)_{W_B^P} - R_B\bigr) + \Omega(|{\cal X}|(d_B^2 + |{\cal X}|)\log n\right],
    \end{equation}
    which proves Eq.~\eqref{eq:correctness_cond}.
\end{proof}

\section{Discussion}
In this work, we have developed a universal c-q channel resolvability coding and a fully universal c-q private channel coding.
The relevant property of a codebook that enables channel resolvability can be written in terms of an expander graph. Even though an expander graph was extensively studied in the field of error-correcting codes~\cite{Sipser1996}, it was not well studied in the context of information-theoretic security or resolvability coding. The tasks of these two fields are opposite in their objectives: the former aims to distinguish channel outputs, while the latter aims to simulate their statistical behavior.
Thus, our results shed new light on the interplay between coding theory and graph-theoretic structures in quantum information beyond universal channel resolvability.
Notice also that our universal resolvability theorem, Theorem~\ref{theo:universal_resolvability}, is even independent of the dimension of the channel output, unlike c-q channel coding in Theorem~\ref{theo:new_error_exponent}. Our universal resolvability theorem achieves strong universality in this sense.

There are several open problems. Are there any hints to the optimal error exponents for universal c-q channel coding and universal c-q channel resolvability coding? We have only derived achievable bounds, which lead to lower bounds on the error exponents, but upper bounds are unknown. Furthermore, (strong) converse exponents for these tasks are unknown as well. We leave these problems to future work.

\begin{acknowledgments}
    T.~M.~was supported by JST PRESTO, Grant Number JPMJPR24FA, Japan and JST CREST, Grant Number JPMJCR23I3, Japan.
    M.~H.~was supported in part by the National Natural Science Foundation of China under Grant 62171212, and the General R\&D Projects of 1+1+1 CUHK-CUHK(SZ)-GDST Joint Collaboration Fund (Grant No.~GRDP2025-022).
\end{acknowledgments}

\bibliography{universal_private}

\begin{thebibliography}{38}%
\makeatletter
\providecommand \@ifxundefined [1]{%
 \@ifx{#1\undefined}
}%
\providecommand \@ifnum [1]{%
 \ifnum #1\expandafter \@firstoftwo
 \else \expandafter \@secondoftwo
 \fi
}%
\providecommand \@ifx [1]{%
 \ifx #1\expandafter \@firstoftwo
 \else \expandafter \@secondoftwo
 \fi
}%
\providecommand \natexlab [1]{#1}%
\providecommand \enquote  [1]{``#1''}%
\providecommand \bibnamefont  [1]{#1}%
\providecommand \bibfnamefont [1]{#1}%
\providecommand \citenamefont [1]{#1}%
\providecommand \href@noop [0]{\@secondoftwo}%
\providecommand \href [0]{\begingroup \@sanitize@url \@href}%
\providecommand \@href[1]{\@@startlink{#1}\@@href}%
\providecommand \@@href[1]{\endgroup#1\@@endlink}%
\providecommand \@sanitize@url [0]{\catcode `\\12\catcode `\$12\catcode `\&12\catcode `\#12\catcode `\^12\catcode `\_12\catcode `\%12\relax}%
\providecommand \@@startlink[1]{}%
\providecommand \@@endlink[0]{}%
\providecommand \url  [0]{\begingroup\@sanitize@url \@url }%
\providecommand \@url [1]{\endgroup\@href {#1}{\urlprefix }}%
\providecommand \urlprefix  [0]{URL }%
\providecommand \Eprint [0]{\href }%
\providecommand \doibase [0]{https://doi.org/}%
\providecommand \selectlanguage [0]{\@gobble}%
\providecommand \bibinfo  [0]{\@secondoftwo}%
\providecommand \bibfield  [0]{\@secondoftwo}%
\providecommand \translation [1]{[#1]}%
\providecommand \BibitemOpen [0]{}%
\providecommand \bibitemStop [0]{}%
\providecommand \bibitemNoStop [0]{.\EOS\space}%
\providecommand \EOS [0]{\spacefactor3000\relax}%
\providecommand \BibitemShut  [1]{\csname bibitem#1\endcsname}%
\let\auto@bib@innerbib\@empty
\bibitem [{\citenamefont {Csisz\'ar}\ and\ \citenamefont {Shields}(2004)}]{Csiszar2004}%
  \BibitemOpen
  \bibfield  {author} {\bibinfo {author} {\bibfnamefont {I.}~\bibnamefont {Csisz\'ar}}\ and\ \bibinfo {author} {\bibfnamefont {P.}~\bibnamefont {Shields}},\ }\bibfield  {title} {\bibinfo {title} {Information theory and statistics: A tutorial},\ }\href {https://doi.org/10.1561/0100000004} {\bibfield  {journal} {\bibinfo  {journal} {Foundations and Trends{\textregistered} in Communications and Information Theory}\ }\textbf {\bibinfo {volume} {1}},\ \bibinfo {pages} {417} (\bibinfo {year} {2004})}\BibitemShut {NoStop}%
\bibitem [{\citenamefont {Jozsa}\ \emph {et~al.}(1998)\citenamefont {Jozsa}, \citenamefont {Horodecki}, \citenamefont {Horodecki},\ and\ \citenamefont {Horodecki}}]{Jozsa1998}%
  \BibitemOpen
  \bibfield  {author} {\bibinfo {author} {\bibfnamefont {R.}~\bibnamefont {Jozsa}}, \bibinfo {author} {\bibfnamefont {M.}~\bibnamefont {Horodecki}}, \bibinfo {author} {\bibfnamefont {P.}~\bibnamefont {Horodecki}},\ and\ \bibinfo {author} {\bibfnamefont {R.}~\bibnamefont {Horodecki}},\ }\bibfield  {title} {\bibinfo {title} {{Universal Quantum Information Compression}},\ }\href {https://doi.org/10.1103/PhysRevLett.81.1714} {\bibfield  {journal} {\bibinfo  {journal} {Phys. Rev. Lett.}\ }\textbf {\bibinfo {volume} {81}},\ \bibinfo {pages} {1714} (\bibinfo {year} {1998})}\BibitemShut {NoStop}%
\bibitem [{\citenamefont {Hayashi}\ and\ \citenamefont {Matsumoto}(2002)}]{Hayashi2002Quantum}%
  \BibitemOpen
  \bibfield  {author} {\bibinfo {author} {\bibfnamefont {M.}~\bibnamefont {Hayashi}}\ and\ \bibinfo {author} {\bibfnamefont {K.}~\bibnamefont {Matsumoto}},\ }\bibfield  {title} {\bibinfo {title} {Quantum universal variable-length source coding},\ }\href {https://doi.org/10.1103/PhysRevA.66.022311} {\bibfield  {journal} {\bibinfo  {journal} {Phys. Rev. A}\ }\textbf {\bibinfo {volume} {66}},\ \bibinfo {pages} {022311} (\bibinfo {year} {2002})}\BibitemShut {NoStop}%
\bibitem [{\citenamefont {Hayashi}(2009)}]{Hayashi2009}%
  \BibitemOpen
  \bibfield  {author} {\bibinfo {author} {\bibfnamefont {M.}~\bibnamefont {Hayashi}},\ }\bibfield  {title} {\bibinfo {title} {Universal {C}oding for {C}lassical-{Q}uantum {C}hannel},\ }\href {https://doi.org/10.1007/s00220-009-0825-1} {\bibfield  {journal} {\bibinfo  {journal} {Communications in Mathematical Physics}\ }\textbf {\bibinfo {volume} {289}},\ \bibinfo {pages} {1087} (\bibinfo {year} {2009})}\BibitemShut {NoStop}%
\bibitem [{\citenamefont {Datta}\ and\ \citenamefont {Hsieh}(2010)}]{Datta2010}%
  \BibitemOpen
  \bibfield  {author} {\bibinfo {author} {\bibfnamefont {N.}~\bibnamefont {Datta}}\ and\ \bibinfo {author} {\bibfnamefont {M.-H.}\ \bibnamefont {Hsieh}},\ }\bibfield  {title} {\bibinfo {title} {Universal coding for transmission of private information},\ }\href {https://doi.org/10.1063/1.3521499} {\bibfield  {journal} {\bibinfo  {journal} {Journal of Mathematical Physics}\ }\textbf {\bibinfo {volume} {51}},\ \bibinfo {pages} {122202} (\bibinfo {year} {2010})},\ \bibinfo {note} {\_eprint: https://pubs.aip.org/aip/jmp/article-pdf/doi/10.1063/1.3521499/13353333/122202\_1\_online.pdf}\BibitemShut {NoStop}%
\bibitem [{\citenamefont {Hayashi}(2010)}]{Hayashi2010}%
  \BibitemOpen
  \bibfield  {author} {\bibinfo {author} {\bibfnamefont {M.}~\bibnamefont {Hayashi}},\ }\bibfield  {title} {\bibinfo {title} {Universal approximation of multi-copy states and universal quantum lossless data compression},\ }\href {https://doi.org/10.1007/s00220-009-0909-y} {\bibfield  {journal} {\bibinfo  {journal} {Communications in Mathematical Physics}\ }\textbf {\bibinfo {volume} {293}},\ \bibinfo {pages} {171} (\bibinfo {year} {2010})}\BibitemShut {NoStop}%
\bibitem [{\citenamefont {Hayashi}\ and\ \citenamefont {Cai}(2022)}]{Hayashi2022}%
  \BibitemOpen
  \bibfield  {author} {\bibinfo {author} {\bibfnamefont {M.}~\bibnamefont {Hayashi}}\ and\ \bibinfo {author} {\bibfnamefont {N.}~\bibnamefont {Cai}},\ }\bibfield  {title} {\bibinfo {title} {{Universal Classical-Quantum Superposition Coding and Universal Classical-Quantum Multiple Access Channel Coding}},\ }\href {https://doi.org/10.1109/TIT.2021.3131575} {\bibfield  {journal} {\bibinfo  {journal} {IEEE Transactions on Information Theory}\ }\textbf {\bibinfo {volume} {68}},\ \bibinfo {pages} {1822} (\bibinfo {year} {2022})}\BibitemShut {NoStop}%
\bibitem [{\citenamefont {Matsuura}\ \emph {et~al.}(2025)\citenamefont {Matsuura}, \citenamefont {Yamano}, \citenamefont {Kuramochi}, \citenamefont {Sasaki},\ and\ \citenamefont {Koashi}}]{Matsuura2025}%
  \BibitemOpen
  \bibfield  {author} {\bibinfo {author} {\bibfnamefont {T.}~\bibnamefont {Matsuura}}, \bibinfo {author} {\bibfnamefont {S.}~\bibnamefont {Yamano}}, \bibinfo {author} {\bibfnamefont {Y.}~\bibnamefont {Kuramochi}}, \bibinfo {author} {\bibfnamefont {T.}~\bibnamefont {Sasaki}},\ and\ \bibinfo {author} {\bibfnamefont {M.}~\bibnamefont {Koashi}},\ }\href {https://arxiv.org/abs/2504.07356} {\bibinfo {title} {Asymptotically tight security analysis of quantum key distribution based on universal source compression}} (\bibinfo {year} {2025}),\ \Eprint {https://arxiv.org/abs/2504.07356} {arXiv:2504.07356 [quant-ph]} \BibitemShut {NoStop}%
\bibitem [{\citenamefont {Devetak}(2005)}]{Devetak2005}%
  \BibitemOpen
  \bibfield  {author} {\bibinfo {author} {\bibfnamefont {I.}~\bibnamefont {Devetak}},\ }\bibfield  {title} {\bibinfo {title} {The private classical capacity and quantum capacity of a quantum channel},\ }\href {https://doi.org/10.1109/TIT.2004.839515} {\bibfield  {journal} {\bibinfo  {journal} {IEEE Transactions on Information Theory}\ }\textbf {\bibinfo {volume} {51}},\ \bibinfo {pages} {44} (\bibinfo {year} {2005})}\BibitemShut {NoStop}%
\bibitem [{\citenamefont {Ahlswede}\ and\ \citenamefont {Winter}(2002)}]{Ahlswede2002}%
  \BibitemOpen
  \bibfield  {author} {\bibinfo {author} {\bibfnamefont {R.}~\bibnamefont {Ahlswede}}\ and\ \bibinfo {author} {\bibfnamefont {A.}~\bibnamefont {Winter}},\ }\bibfield  {title} {\bibinfo {title} {Strong converse for identification via quantum channels},\ }\href {https://doi.org/10.1109/18.985947} {\bibfield  {journal} {\bibinfo  {journal} {IEEE Transactions on Information Theory}\ }\textbf {\bibinfo {volume} {48}},\ \bibinfo {pages} {569} (\bibinfo {year} {2002})}\BibitemShut {NoStop}%
\bibitem [{\citenamefont {Cheng}\ and\ \citenamefont {Gao}(2024)}]{Cheng2022error}%
  \BibitemOpen
  \bibfield  {author} {\bibinfo {author} {\bibfnamefont {H.-C.}\ \bibnamefont {Cheng}}\ and\ \bibinfo {author} {\bibfnamefont {L.}~\bibnamefont {Gao}},\ }\bibfield  {title} {\bibinfo {title} {{Error Exponent and Strong Converse for Quantum Soft Covering}},\ }\href {https://doi.org/10.1109/TIT.2023.3307437} {\bibfield  {journal} {\bibinfo  {journal} {IEEE Transactions on Information Theory}\ }\textbf {\bibinfo {volume} {70}},\ \bibinfo {pages} {3499} (\bibinfo {year} {2024})}\BibitemShut {NoStop}%
\bibitem [{\citenamefont {Sen}(2025)}]{Sen2025}%
  \BibitemOpen
  \bibfield  {author} {\bibinfo {author} {\bibfnamefont {P.}~\bibnamefont {Sen}},\ }\href {https://arxiv.org/abs/2504.04067} {\bibinfo {title} {Matrix chernoff concentration bounds for multipartite soft covering and expander walks}} (\bibinfo {year} {2025}),\ \Eprint {https://arxiv.org/abs/2504.04067} {arXiv:2504.04067 [quant-ph]} \BibitemShut {NoStop}%
\bibitem [{\citenamefont {Hayashi}\ and\ \citenamefont {Matsumoto}(2016)}]{Hayashi2016secure}%
  \BibitemOpen
  \bibfield  {author} {\bibinfo {author} {\bibfnamefont {M.}~\bibnamefont {Hayashi}}\ and\ \bibinfo {author} {\bibfnamefont {R.}~\bibnamefont {Matsumoto}},\ }\bibfield  {title} {\bibinfo {title} {Secure multiplex coding with dependent and non-uniform multiple messages},\ }\href {https://doi.org/10.1109/TIT.2016.2530088} {\bibfield  {journal} {\bibinfo  {journal} {IEEE Transactions on Information Theory}\ }\textbf {\bibinfo {volume} {62}},\ \bibinfo {pages} {2355} (\bibinfo {year} {2016})}\BibitemShut {NoStop}%
\bibitem [{\citenamefont {Hayashi}\ \emph {et~al.}(2024)\citenamefont {Hayashi}, \citenamefont {Cheng},\ and\ \citenamefont {Gao}}]{Hayashi2024}%
  \BibitemOpen
  \bibfield  {author} {\bibinfo {author} {\bibfnamefont {M.}~\bibnamefont {Hayashi}}, \bibinfo {author} {\bibfnamefont {H.-C.}\ \bibnamefont {Cheng}},\ and\ \bibinfo {author} {\bibfnamefont {L.}~\bibnamefont {Gao}},\ }\href {https://arxiv.org/abs/2410.16704} {\bibinfo {title} {Resolvability of classical-quantum channels}} (\bibinfo {year} {2024}),\ \Eprint {https://arxiv.org/abs/2410.16704} {arXiv:2410.16704 [quant-ph]} \BibitemShut {NoStop}%
\bibitem [{\citenamefont {Beigi}(2013)}]{Beigi2013}%
  \BibitemOpen
  \bibfield  {author} {\bibinfo {author} {\bibfnamefont {S.}~\bibnamefont {Beigi}},\ }\bibfield  {title} {\bibinfo {title} {{Sandwiched R\'enyi divergence satisfies data processing inequality}},\ }\href {https://doi.org/10.1063/1.4838855} {\bibfield  {journal} {\bibinfo  {journal} {Journal of Mathematical Physics}\ }\textbf {\bibinfo {volume} {54}},\ \bibinfo {pages} {122202} (\bibinfo {year} {2013})},\ \Eprint {https://arxiv.org/abs/https://pubs.aip.org/aip/jmp/article-pdf/doi/10.1063/1.4838855/15705026/122202\_1\_online.pdf} {https://pubs.aip.org/aip/jmp/article-pdf/doi/10.1063/1.4838855/15705026/122202\_1\_online.pdf} \BibitemShut {NoStop}%
\bibitem [{\citenamefont {Dupuis}(2023)}]{Dupuis2022}%
  \BibitemOpen
  \bibfield  {author} {\bibinfo {author} {\bibfnamefont {F.}~\bibnamefont {Dupuis}},\ }\bibfield  {title} {\bibinfo {title} {{Privacy Amplification and Decoupling Without Smoothing}},\ }\href {https://doi.org/10.1109/TIT.2023.3301812} {\bibfield  {journal} {\bibinfo  {journal} {IEEE Transactions on Information Theory}\ }\textbf {\bibinfo {volume} {69}},\ \bibinfo {pages} {7784} (\bibinfo {year} {2023})}\BibitemShut {NoStop}%
\bibitem [{\citenamefont {Beigi}\ and\ \citenamefont {Goodarzi}(2023)}]{Beigi2023operator}%
  \BibitemOpen
  \bibfield  {author} {\bibinfo {author} {\bibfnamefont {S.}~\bibnamefont {Beigi}}\ and\ \bibinfo {author} {\bibfnamefont {M.~M.}\ \bibnamefont {Goodarzi}},\ }\bibfield  {title} {\bibinfo {title} {{Operator-valued Schatten spaces and quantum entropies}},\ }\href {https://doi.org/10.1007/s11005-023-01712-9} {\bibfield  {journal} {\bibinfo  {journal} {Letters in Mathematical Physics}\ }\textbf {\bibinfo {volume} {113}},\ \bibinfo {pages} {91} (\bibinfo {year} {2023})}\BibitemShut {NoStop}%
\bibitem [{\citenamefont {Beigi}\ and\ \citenamefont {Tomamichel}(2024)}]{Beigi2023}%
  \BibitemOpen
  \bibfield  {author} {\bibinfo {author} {\bibfnamefont {S.}~\bibnamefont {Beigi}}\ and\ \bibinfo {author} {\bibfnamefont {M.}~\bibnamefont {Tomamichel}},\ }\bibfield  {title} {\bibinfo {title} {{Lower Bounds on Error Exponents via a New Quantum Decoder}},\ }\href {https://doi.org/10.1109/TIT.2024.3446614} {\bibfield  {journal} {\bibinfo  {journal} {IEEE Transactions on Information Theory}\ }\textbf {\bibinfo {volume} {70}},\ \bibinfo {pages} {7882} (\bibinfo {year} {2024})}\BibitemShut {NoStop}%
\bibitem [{\citenamefont {Cai}\ \emph {et~al.}(2004)\citenamefont {Cai}, \citenamefont {Winter},\ and\ \citenamefont {Yeung}}]{Cai2004}%
  \BibitemOpen
  \bibfield  {author} {\bibinfo {author} {\bibfnamefont {N.}~\bibnamefont {Cai}}, \bibinfo {author} {\bibfnamefont {A.}~\bibnamefont {Winter}},\ and\ \bibinfo {author} {\bibfnamefont {R.~W.}\ \bibnamefont {Yeung}},\ }\bibfield  {title} {\bibinfo {title} {Quantum privacy and quantum wiretap channels},\ }\href {https://doi.org/10.1007/s11122-005-0002-x} {\bibfield  {journal} {\bibinfo  {journal} {Problems of Information Transmission}\ }\textbf {\bibinfo {volume} {40}},\ \bibinfo {pages} {318} (\bibinfo {year} {2004})}\BibitemShut {NoStop}%
\bibitem [{\citenamefont {Hayashi}(2015)}]{Hayashi2015}%
  \BibitemOpen
  \bibfield  {author} {\bibinfo {author} {\bibfnamefont {M.}~\bibnamefont {Hayashi}},\ }\bibfield  {title} {\bibinfo {title} {{Quantum Wiretap Channel With Non-Uniform Random Number and Its Exponent and Equivocation Rate of Leaked Information}},\ }\href {https://doi.org/10.1109/TIT.2015.2464215} {\bibfield  {journal} {\bibinfo  {journal} {IEEE Transactions on Information Theory}\ }\textbf {\bibinfo {volume} {61}},\ \bibinfo {pages} {5595} (\bibinfo {year} {2015})}\BibitemShut {NoStop}%
\bibitem [{\citenamefont {Mosonyi}\ and\ \citenamefont {Ogawa}(2017)}]{Mosonyi2017}%
  \BibitemOpen
  \bibfield  {author} {\bibinfo {author} {\bibfnamefont {M.}~\bibnamefont {Mosonyi}}\ and\ \bibinfo {author} {\bibfnamefont {T.}~\bibnamefont {Ogawa}},\ }\bibfield  {title} {\bibinfo {title} {{Strong Converse Exponent for Classical-Quantum Channel Coding}},\ }\href {https://doi.org/10.1007/s00220-017-2928-4} {\bibfield  {journal} {\bibinfo  {journal} {Communications in Mathematical Physics}\ }\textbf {\bibinfo {volume} {355}},\ \bibinfo {pages} {373} (\bibinfo {year} {2017})}\BibitemShut {NoStop}%
\bibitem [{\citenamefont {Hayashi}(2017)}]{Hayashi2017}%
  \BibitemOpen
  \bibfield  {author} {\bibinfo {author} {\bibfnamefont {M.}~\bibnamefont {Hayashi}},\ }\href {https://doi.org/10.1007/978-3-319-45241-8} {\emph {\bibinfo {title} {A {Group} {Theoretic} {Approach} to {Quantum} {Information}}}}\ (\bibinfo  {publisher} {Springer International Publishing},\ \bibinfo {address} {Cham},\ \bibinfo {year} {2017})\BibitemShut {NoStop}%
\bibitem [{\citenamefont {ITZYKSON}\ and\ \citenamefont {NAUENBERG}(1966)}]{Itzykson1966}%
  \BibitemOpen
  \bibfield  {author} {\bibinfo {author} {\bibfnamefont {C.}~\bibnamefont {ITZYKSON}}\ and\ \bibinfo {author} {\bibfnamefont {M.}~\bibnamefont {NAUENBERG}},\ }\bibfield  {title} {\bibinfo {title} {{Unitary Groups: Representations and Decompositions}},\ }\href {https://doi.org/10.1103/RevModPhys.38.95} {\bibfield  {journal} {\bibinfo  {journal} {Rev. Mod. Phys.}\ }\textbf {\bibinfo {volume} {38}},\ \bibinfo {pages} {95} (\bibinfo {year} {1966})}\BibitemShut {NoStop}%
\bibitem [{\citenamefont {Ahlswede}\ and\ \citenamefont {Dueck}(1982)}]{Ahlswede1982}%
  \BibitemOpen
  \bibfield  {author} {\bibinfo {author} {\bibfnamefont {R.}~\bibnamefont {Ahlswede}}\ and\ \bibinfo {author} {\bibfnamefont {G.}~\bibnamefont {Dueck}},\ }\bibfield  {title} {\bibinfo {title} {Good codes can be produced by a few permutations},\ }\href {https://doi.org/10.1109/TIT.1982.1056514} {\bibfield  {journal} {\bibinfo  {journal} {IEEE Transactions on Information Theory}\ }\textbf {\bibinfo {volume} {28}},\ \bibinfo {pages} {430} (\bibinfo {year} {1982})}\BibitemShut {NoStop}%
\bibitem [{\citenamefont {Matsuura}\ \emph {et~al.}(2024)\citenamefont {Matsuura}, \citenamefont {Yamano}, \citenamefont {Kuramochi}, \citenamefont {Sasaki},\ and\ \citenamefont {Koashi}}]{Matsuura2024}%
  \BibitemOpen
  \bibfield  {author} {\bibinfo {author} {\bibfnamefont {T.}~\bibnamefont {Matsuura}}, \bibinfo {author} {\bibfnamefont {S.}~\bibnamefont {Yamano}}, \bibinfo {author} {\bibfnamefont {Y.}~\bibnamefont {Kuramochi}}, \bibinfo {author} {\bibfnamefont {T.}~\bibnamefont {Sasaki}},\ and\ \bibinfo {author} {\bibfnamefont {M.}~\bibnamefont {Koashi}},\ }\bibfield  {title} {\bibinfo {title} {Tight concentration inequalities for quantum adversarial setups exploiting permutation symmetry},\ }\href {https://doi.org/10.22331/q-2024-11-27-1540} {\bibfield  {journal} {\bibinfo  {journal} {{Quantum}}\ }\textbf {\bibinfo {volume} {8}},\ \bibinfo {pages} {1540} (\bibinfo {year} {2024})}\BibitemShut {NoStop}%
\bibitem [{\citenamefont {Wilde}\ \emph {et~al.}(2014)\citenamefont {Wilde}, \citenamefont {Winter},\ and\ \citenamefont {Yang}}]{Wilde2014}%
  \BibitemOpen
  \bibfield  {author} {\bibinfo {author} {\bibfnamefont {M.~M.}\ \bibnamefont {Wilde}}, \bibinfo {author} {\bibfnamefont {A.}~\bibnamefont {Winter}},\ and\ \bibinfo {author} {\bibfnamefont {D.}~\bibnamefont {Yang}},\ }\bibfield  {title} {\bibinfo {title} {{Strong Converse for the Classical Capacity of Entanglement-Breaking and Hadamard Channels via a Sandwiched R{\'e}nyi Relative Entropy}},\ }\href {https://doi.org/10.1007/s00220-014-2122-x} {\bibfield  {journal} {\bibinfo  {journal} {Communications in Mathematical Physics}\ }\textbf {\bibinfo {volume} {331}},\ \bibinfo {pages} {593} (\bibinfo {year} {2014})}\BibitemShut {NoStop}%
\bibitem [{\citenamefont {M\"uller-Lennert}\ \emph {et~al.}(2013)\citenamefont {M\"uller-Lennert}, \citenamefont {Dupuis}, \citenamefont {Szehr}, \citenamefont {Fehr},\ and\ \citenamefont {Tomamichel}}]{Martin2013}%
  \BibitemOpen
  \bibfield  {author} {\bibinfo {author} {\bibfnamefont {M.}~\bibnamefont {M\"uller-Lennert}}, \bibinfo {author} {\bibfnamefont {F.}~\bibnamefont {Dupuis}}, \bibinfo {author} {\bibfnamefont {O.}~\bibnamefont {Szehr}}, \bibinfo {author} {\bibfnamefont {S.}~\bibnamefont {Fehr}},\ and\ \bibinfo {author} {\bibfnamefont {M.}~\bibnamefont {Tomamichel}},\ }\bibfield  {title} {\bibinfo {title} {{On quantum Rényi entropies: A new generalization and some properties}},\ }\href {https://doi.org/10.1063/1.4838856} {\bibfield  {journal} {\bibinfo  {journal} {Journal of Mathematical Physics}\ }\textbf {\bibinfo {volume} {54}},\ \bibinfo {pages} {122203} (\bibinfo {year} {2013})},\ \Eprint {https://arxiv.org/abs/https://pubs.aip.org/aip/jmp/article-pdf/doi/10.1063/1.4838856/15705273/122203\_1\_online.pdf} {https://pubs.aip.org/aip/jmp/article-pdf/doi/10.1063/1.4838856/15705273/122203\_1\_online.pdf} \BibitemShut {NoStop}%
\bibitem [{\citenamefont {Lin}\ and\ \citenamefont {Tomamichel}(2015)}]{Lin2015}%
  \BibitemOpen
  \bibfield  {author} {\bibinfo {author} {\bibfnamefont {S.~M.}\ \bibnamefont {Lin}}\ and\ \bibinfo {author} {\bibfnamefont {M.}~\bibnamefont {Tomamichel}},\ }\bibfield  {title} {\bibinfo {title} {Investigating properties of a family of quantum {R}{\'e}nyi divergences},\ }\href {https://doi.org/10.1007/s11128-015-0935-y} {\bibfield  {journal} {\bibinfo  {journal} {Quantum Information Processing}\ }\textbf {\bibinfo {volume} {14}},\ \bibinfo {pages} {1501} (\bibinfo {year} {2015})}\BibitemShut {NoStop}%
\bibitem [{\citenamefont {K\"onig}\ and\ \citenamefont {Wehner}(2009)}]{Koenig2009}%
  \BibitemOpen
  \bibfield  {author} {\bibinfo {author} {\bibfnamefont {R.}~\bibnamefont {K\"onig}}\ and\ \bibinfo {author} {\bibfnamefont {S.}~\bibnamefont {Wehner}},\ }\bibfield  {title} {\bibinfo {title} {{A Strong Converse for Classical Channel Coding Using Entangled Inputs}},\ }\href {https://doi.org/10.1103/PhysRevLett.103.070504} {\bibfield  {journal} {\bibinfo  {journal} {Phys. Rev. Lett.}\ }\textbf {\bibinfo {volume} {103}},\ \bibinfo {pages} {070504} (\bibinfo {year} {2009})}\BibitemShut {NoStop}%
\bibitem [{\citenamefont {Sharma}\ and\ \citenamefont {Warsi}(2013)}]{Sharma2013}%
  \BibitemOpen
  \bibfield  {author} {\bibinfo {author} {\bibfnamefont {N.}~\bibnamefont {Sharma}}\ and\ \bibinfo {author} {\bibfnamefont {N.~A.}\ \bibnamefont {Warsi}},\ }\bibfield  {title} {\bibinfo {title} {{Fundamental Bound on the Reliability of Quantum Information Transmission}},\ }\href {https://doi.org/10.1103/PhysRevLett.110.080501} {\bibfield  {journal} {\bibinfo  {journal} {Phys. Rev. Lett.}\ }\textbf {\bibinfo {volume} {110}},\ \bibinfo {pages} {080501} (\bibinfo {year} {2013})}\BibitemShut {NoStop}%
\bibitem [{\citenamefont {Cheng}\ \emph {et~al.}(2022)\citenamefont {Cheng}, \citenamefont {Gao},\ and\ \citenamefont {Hsieh}}]{Cheng2022}%
  \BibitemOpen
  \bibfield  {author} {\bibinfo {author} {\bibfnamefont {H.-C.}\ \bibnamefont {Cheng}}, \bibinfo {author} {\bibfnamefont {L.}~\bibnamefont {Gao}},\ and\ \bibinfo {author} {\bibfnamefont {M.-H.}\ \bibnamefont {Hsieh}},\ }\bibfield  {title} {\bibinfo {title} {{Properties of Noncommutative R{\'e}nyi and Augustin Information}},\ }\href {https://doi.org/10.1007/s00220-022-04319-8} {\bibfield  {journal} {\bibinfo  {journal} {Communications in Mathematical Physics}\ }\textbf {\bibinfo {volume} {390}},\ \bibinfo {pages} {501} (\bibinfo {year} {2022})}\BibitemShut {NoStop}%
\bibitem [{\citenamefont {Fulton}\ and\ \citenamefont {Harris}(2004)}]{Fulton2004}%
  \BibitemOpen
  \bibfield  {author} {\bibinfo {author} {\bibfnamefont {W.}~\bibnamefont {Fulton}}\ and\ \bibinfo {author} {\bibfnamefont {J.}~\bibnamefont {Harris}},\ }\bibinfo {title} {{Examples; Induced Representations; Group Algebras; Real Representations}},\ in\ \href {https://doi.org/10.1007/978-1-4612-0979-9_3} {\emph {\bibinfo {booktitle} {Representation Theory: A First Course}}}\ (\bibinfo  {publisher} {Springer New York},\ \bibinfo {address} {New York, NY},\ \bibinfo {year} {2004})\ pp.\ \bibinfo {pages} {26--43}\BibitemShut {NoStop}%
\bibitem [{\citenamefont {Serre}(1977)}]{Serre1977}%
  \BibitemOpen
  \bibfield  {author} {\bibinfo {author} {\bibfnamefont {J.-P.}\ \bibnamefont {Serre}},\ }\href@noop {} {\emph {\bibinfo {title} {Linear representations of finite groups}}}\ (\bibinfo  {publisher} {Springer-Verlag},\ \bibinfo {address} {New York},\ \bibinfo {year} {1977})\ pp.\ \bibinfo {pages} {x+170},\ \bibinfo {note} {translated from the second French edition by Leonard L. Scott, Graduate Texts in Mathematics, Vol. 42}\BibitemShut {NoStop}%
\bibitem [{\citenamefont {Sabatini}(2022)}]{Sabatini2022}%
  \BibitemOpen
  \bibfield  {author} {\bibinfo {author} {\bibfnamefont {L.}~\bibnamefont {Sabatini}},\ }\bibfield  {title} {\bibinfo {title} {{Random Schreier graphs and expanders}},\ }\href {https://doi.org/10.1007/s10801-022-01136-z} {\bibfield  {journal} {\bibinfo  {journal} {Journal of Algebraic Combinatorics}\ }\textbf {\bibinfo {volume} {56}},\ \bibinfo {pages} {889} (\bibinfo {year} {2022})}\BibitemShut {NoStop}%
\bibitem [{\citenamefont {Csisz\'{a}r}\ and\ \citenamefont {K\"{o}rner}(2015)}]{Csiszar2015}%
  \BibitemOpen
  \bibfield  {author} {\bibinfo {author} {\bibfnamefont {I.}~\bibnamefont {Csisz\'{a}r}}\ and\ \bibinfo {author} {\bibfnamefont {J.}~\bibnamefont {K\"{o}rner}},\ }\href@noop {} {\emph {\bibinfo {title} {Information Theory: Coding Theorems for Discrete Memoryless Systems}}},\ \bibinfo {edition} {2nd}\ ed.\ (\bibinfo  {publisher} {Cambridge University Press},\ \bibinfo {address} {USA},\ \bibinfo {year} {2015})\BibitemShut {NoStop}%
\bibitem [{\citenamefont {Hayashi}(2019)}]{Hayashi2019}%
  \BibitemOpen
  \bibfield  {author} {\bibinfo {author} {\bibfnamefont {M.}~\bibnamefont {Hayashi}},\ }\bibfield  {title} {\bibinfo {title} {Universal channel coding for general output alphabet},\ }\href {https://doi.org/10.1109/TIT.2018.2877456} {\bibfield  {journal} {\bibinfo  {journal} {IEEE Transactions on Information Theory}\ }\textbf {\bibinfo {volume} {65}},\ \bibinfo {pages} {302} (\bibinfo {year} {2019})}\BibitemShut {NoStop}%
\bibitem [{\citenamefont {Hayashi}\ and\ \citenamefont {Nagaoka}(2003)}]{Hayashi2002}%
  \BibitemOpen
  \bibfield  {author} {\bibinfo {author} {\bibfnamefont {M.}~\bibnamefont {Hayashi}}\ and\ \bibinfo {author} {\bibfnamefont {H.}~\bibnamefont {Nagaoka}},\ }\bibfield  {title} {\bibinfo {title} {General formulas for capacity of classical-quantum channels},\ }\href {https://doi.org/10.1109/TIT.2003.813556} {\bibfield  {journal} {\bibinfo  {journal} {IEEE Transactions on Information Theory}\ }\textbf {\bibinfo {volume} {49}},\ \bibinfo {pages} {1753} (\bibinfo {year} {2003})}\BibitemShut {NoStop}%
\bibitem [{\citenamefont {Sipser}\ and\ \citenamefont {Spielman}(1996)}]{Sipser1996}%
  \BibitemOpen
  \bibfield  {author} {\bibinfo {author} {\bibfnamefont {M.}~\bibnamefont {Sipser}}\ and\ \bibinfo {author} {\bibfnamefont {D.}~\bibnamefont {Spielman}},\ }\bibfield  {title} {\bibinfo {title} {Expander codes},\ }\href {https://doi.org/10.1109/18.556667} {\bibfield  {journal} {\bibinfo  {journal} {IEEE Transactions on Information Theory}\ }\textbf {\bibinfo {volume} {42}},\ \bibinfo {pages} {1710} (\bibinfo {year} {1996})}\BibitemShut {NoStop}%
\end{thebibliography}%
\end{document}